%% file: main_CVPR.tex
\documentclass[10pt,twocolumn,letterpaper]{article}

\usepackage[pagenumbers]{cvpr} 


\usepackage{microtype}
\usepackage{graphicx}
\usepackage{booktabs}

\usepackage{amsmath}
\usepackage{amssymb}
\usepackage{amsthm}
\usepackage{multirow}
\usepackage{makecell}
\usepackage{adjustbox}

\usepackage{float}
\usepackage[skip=0pt]{caption}
\usepackage{bm}
\usepackage{color}
\usepackage{soul}
\usepackage{mathtools}
\usepackage{bigstrut,rotating}
\usepackage{enumitem}

\newcommand{\alg}{{DivergenceRadius}}

\newcommand{\myparatight}[1]{\smallskip\noindent{\bf {#1}:}~}
\newcommand{\mysubparatight}[1]

\newtheorem{thm}{Theorem}

\DeclareMathOperator*{\argmin}{argmin}

\definecolor{cvprblue}{rgb}{0.21,0.49,0.74}
\usepackage[pagebackref,breaklinks,colorlinks,allcolors=cvprblue]{hyperref}

\def\paperID{}
\def\confName{CVPR}
\def\confYear{2026}

\title{Robustness of Vision Foundation Models to Common Perturbations}

\author{Hongbin Liu$^1$, Zhengyuan Jiang$^1$, Cheng Hong$^2$, Neil Zhenqiang Gong$^1$\\
$^1$Duke University, $^2$Ant Group\\
\{hongbin.liu, zhengyuan.jiang, neil.gong\}@duke.edu\\
vince.hc@antgroup.com
}

\begin{document}

\maketitle

\input{0_abstract}
\input{1_introduction}
\input{2_problem}
\input{3_method}
\input{4_exp}
\input{5_related}
\input{6_conclusion}

{
    \small
    \bibliographystyle{ieeenat_fullname}
    \bibliography{main}
}

\clearpage
\input{7_appendix}

\end{document}

%% file: 0_abstract.tex
\begin{abstract}
A vision foundation model outputs an embedding vector for an image, which can be affected by common editing operations (e.g., JPEG compression, brightness, contrast adjustments). These \emph{common perturbations} alter embedding vectors and may impact the performance of downstream tasks using these embeddings. In this work, we present the \emph{first} systematic study on foundation models' robustness to such perturbations. We propose three robustness metrics and formulate five desired mathematical properties for these metrics, analyzing which properties they satisfy or violate. Using these metrics, we evaluate six industry-scale foundation models (OpenAI, Meta) across nine common perturbation categories, finding them generally non-robust. We also show that common perturbations degrade downstream application performance (e.g., classification accuracy) and that robustness values can predict performance impacts. Finally, we propose a fine-tuning approach to improve robustness without sacrificing utility.
\end{abstract}

%% file: 1_introduction.tex
\section{\label{sec:intro}Introduction}

A vision foundation model is a general-purpose feature extractor that outputs an embedding vector for an image. Typically, these models are pre-trained on vast collections of unlabeled images or image-text pairs in a self-supervised manner~\citep{radford2021learning,oquab2023dinov2} by major providers like OpenAI, Meta, and Google. For instance, OpenAI's CLIP~\citep{radford2021learning} jointly trains vision and language foundation models on 400 million image-text pairs, while Meta's DINO v2~\citep{oquab2023dinov2} trains a vision foundation model on a large set of unlabeled images. Foundation models empower various downstream applications like image classification and depth estimation.

Images in real-world settings often undergo common editing operations for various purposes. For instance, JPEG compression is widely used to reduce communication costs online, while brightness and contrast adjustments are also common (Table~\ref{tab_robustness_radius_main} in the Appendix lists 9 editing operations used in our experiments). These operations introduce \emph{common perturbations} to an image, unlike \emph{adversarial perturbations}~\citep{szegedy2013intriguing,carlini2017towards}, which are worst-case modifications designed to mislead models. Common perturbations, by contrast, occur frequently in non-adversarial, real-world scenarios.

The robustness of foundation models and their downstream applications to adversarial perturbations has been widely studied~\citep{jia2022badencoder,liu2022poisonedencoder,li2023embarrassingly,saha2022backdoor,fan2021does,jiang2020robust,qu2023reaas,jiang2023evading}. However, robustness to common perturbations remains largely unexplored. Specifically, three key questions arise regarding robustness to common perturbations: 1) How robust are foundation models, i.e., how much does an embedding vector change when an image undergoes common perturbations? 2) How robust are downstream applications, i.e., to what extent does classifier accuracy degrade with perturbed images? 3) How can we improve robustness in foundation models and their downstream applications against common perturbations?

A key challenge in answering these questions is designing a metric to quantify a foundation model’s robustness to common perturbations. Such a metric would enable systematic robustness assessments, facilitating comparisons across self-supervised learning algorithms, model architectures, and sizes. Additionally, a robustness metric could help predict the performance of downstream applications (e.g., accuracy) for perturbed images and provide guidance to enhance foundation model robustness.

\myparatight{Our work} In this work, we answer the three questions above via performing the \emph{first} systematic study on the robustness of foundation models to common perturbations. 

We begin by tackling the challenge of defining metrics to quantify a foundation model’s robustness to common perturbations. Given an image, common perturbations produce various perturbed versions, each with its own embedding vector. We quantify robustness by measuring the variations among these embedding vectors, exploring three metrics: one based on \emph{cosine similarity}, another on \emph{Euclidean distance}, and a third, \alg{}, which uses the radius of the smallest enclosing ball in the embedding space to capture robustness.

A suitable robustness metric should meet several intuitions, such as not increasing robustness as more perturbations are applied. We formalize these intuitions with five \emph{mathematical properties} and analyze which properties the metrics satisfy. We find that \alg{} satisfies all five properties, whereas the other metrics fail to meet one; additionally, the Euclidean and cosine similarity metrics are equivalent.

Using our robustness metrics, we address the first question through a systematic study of six industry-scale foundation models from the CLIP (OpenAI) and DINO v2 (Meta) families, covering different self-supervised learning algorithms, architectures, and sizes, across nine categories of common perturbations. Our findings are consistent across the three robustness metrics: foundation models generally lack robustness to common perturbations, often producing divergent embeddings for perturbed images. Additionally, we observe that foundation models based on Vision Transformer architectures are more robust than those based on ResNet architectures.

To address the second question, we evaluate the robustness of downstream classifiers and depth estimation models built on industry-scale foundation models against common perturbations. We observe that these perturbations degrade both classifier accuracy and depth estimation performance; for example, glass-blurring reduces the accuracy of a zero-shot ImageNet classifier by 9.4\%. This occurs due to variations in embedding vectors caused by perturbations. Additionally, we find that average classification accuracy and mean squared error of depth maps for perturbed images are roughly linear functions of the image's robustness value (e.g., cosine similarity or \alg{}), enabling accurate performance predictions for downstream tasks using a simple linear regression model.

Finally, we propose a fine-tuning method to enhance a foundation model’s robustness while preserving utility for downstream tasks. Our approach aims to balance two objectives: a \emph{robustness goal} and a \emph{utility goal}, each quantified by a corresponding loss term. We fine-tune the model by minimizing a weighted sum of these loss terms, with empirical results showing that our method successfully improves robustness without compromising utility.

%% file: 2_problem.tex
\section{\label{sec:problem}Problem Formulation}

\myparatight{Perturbation function} We represent common perturbations with a perturbation function $P(x, k)$, where $x$ is an image and $k$ a perturbation parameter, yielding a perturbed image $P(x, k)$. For instance, if $P$ represents JPEG compression, $k$ is the quality factor controlling compression level. For some functions, $k$ is multi-dimensional, such as \emph{fog blurring}, where $k$ includes \emph{density} and \emph{frequency}.

We denote the domain of $k$ as $\mathbb{K}$, the set from which $k$ is selected when applying $P$ to $x$. The domain $\mathbb{K}$ may include discrete values (e.g., JPEG quality factors) or continuous values (e.g., Gaussian noise standard deviation). We assume a special parameter $\bot \in \mathbb{K}$, where $P(x, \bot) = x$, returning the original image, to simplify descriptions.

\myparatight{Embedding vector} A foundation model $f$ outputs an embedding vector $f(x)$ for an image $x$. To prevent embedding magnitude from affecting downstream applications, foundation models often normalize embeddings to an $\ell_2$-norm of 1, ensuring $||f(x)||_2=1$ for any image $x$. Thus, all embedding vectors lie on a unit-radius hyper-sphere in the embedding space.

\myparatight{Desired mathematical properties of a robustness metric} 
Given a foundation model $f$, an image $x$, and a perturbation function $P$ with parameter domain $\mathbb{K}$, our goal is to define a robustness metric $\mathcal{R}(f, x, P, \mathbb{K})$ to quantify the robustness of $f$ for $x$ under $P$. This metric essentially measures variations among embedding vectors $\{f(P(x, k))\}_{k \in \mathbb{K}}$ for perturbed versions of $x$ generated by $P$.

The robustness metric should allow quantitative comparisons of different foundation models’ robustness to common perturbations. Since foundation models support downstream applications, the metric should also predict downstream performance for $x$ under $P$. For instance, if the downstream task is classification, the robustness value $\mathcal{R}(f, x, P, \mathbb{K})$ should help predict the accuracy for perturbed versions of $x$ within the domain $\mathbb{K}$.

\begin{table*}[!t]
\small
\centering
\caption{Three robustness metrics explored in this work and the desired mathematical properties they violate.} 
\begin{tabular}{|c|c|c|}
\hline
{Robustness Metric}    & {Formulation}                                    & {Violating Properties}      \\ \hline
\multirow{2}{*}{Cosine similarity}           &  \multirow{2}{*}{$\mathcal{R}_{cs}(f,x,P,\mathbb{K})  = \frac{1 - \min_{k_1,k_2 \in \mathbb{K}} cos(f(P(x,k_1)), f(P(x,k_2)))}{2}$} & \multirow{2}{*}{Worst-robustness}    \\ 
&   & \\ \hline
\multirow{2}{*}{Euclidean distance}           & \multirow{2}{*}{$\mathcal{R}_{ed}(f,x,P,\mathbb{K}) = \frac{\max_{k_1,k_2 \in \mathbb{K}} ||f(P(x,k_1)) - f(P(x,k_2))||_2}{2}$} & \multirow{2}{*}{Worst-robustness}      \\ 
&   &   \\ \hline
\multirow{2}{*}{\alg{}}           & \multirow{2}{*}{$\mathcal{R}_{dr}(f,x,P,\mathbb{K}) = \argmin r \text{  s.t.  } \exists c, ||f(P(x,k))  - c||_2 \leq r,  \forall k \in \mathbb{K}$} & \multirow{2}{*}{None}      \\ 
&   &   \\ \hline
\end{tabular}
\label{tab:negative_example_metrics}
\end{table*}

We have the following mathematical properties:

\begin{enumerate}[left=0.15cm]
    \item \myparatight{Bounded domain} To ensure interpretability and comparability, we design a scalar robustness metric with a bounded interval output, normalized to $[0,1]$ for simplicity. Thus, for any $f$, $x$, $P$, and $\mathbb{K}$, the robustness value $\mathcal{R}(f, x, P, \mathbb{K})$ should fall within $[0,1]$, where a larger value indicates \emph{less robustness}:
    \begin{align}
        \mathcal{R}(f,x,P,\mathbb{K})\in [0,1], \forall f, x, P, \mathbb{K}. 
    \end{align}

    \item \myparatight{Monotonicity} When the parameter domain $\mathbb{K}$ expands, the model should not become more robust under this larger domain. For instance, greater variation in JPEG quality factors should not decrease $\mathcal{R}(f,x,P,\mathbb{K})$. Formally:
    \begin{align}
        \mathcal{R}(f,x,P,\mathbb{K}_1) \leq \mathcal{R}(f,x,P,\mathbb{K}_2), \forall \mathbb{K}_1 \subseteq \mathbb{K}_2. 
    \end{align}

    \item \myparatight{Best robustness} The model is maximally robust for $x$ under $P$ if all perturbed versions of $x$ have the same embedding as $x$, resulting in $\mathcal{R}(f,x,P,\mathbb{K}) = 0$ if:
    \begin{align}
    f(P(x, k))=f(x), \forall k \in \mathbb{K}.
    \end{align}

    \begin{figure}[!t]
    \centering
    {\includegraphics[width=0.1\textwidth]{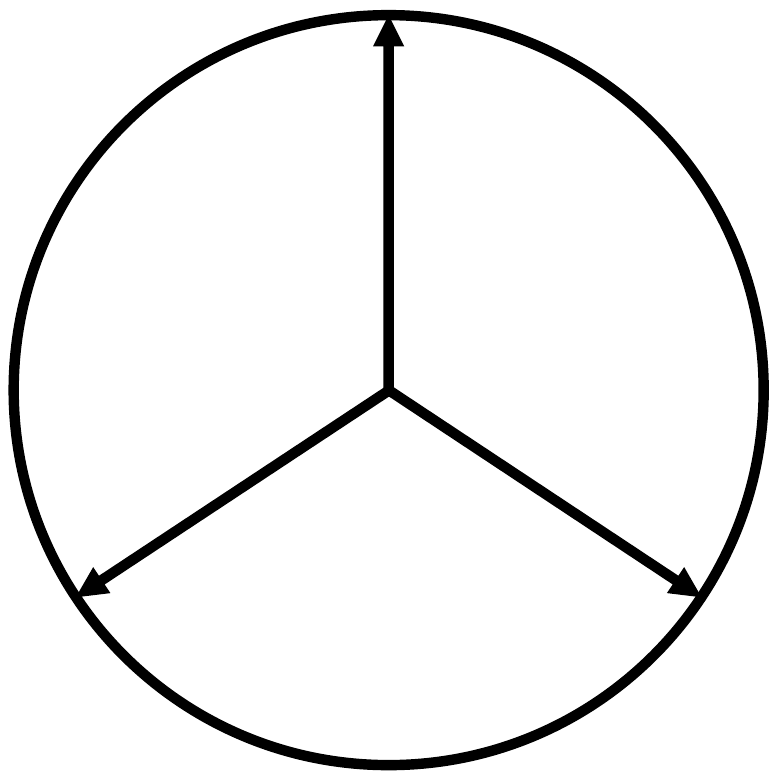}}
    \caption{An example for the worst-robustness property.}
    \label{fig:example-worst-robustness}
    \end{figure}

    \item \myparatight{Worst robustness} If the embedding vectors for perturbed images are uniformly distributed in the embedding space (sum to zero), then robustness is at its lowest with $\mathcal{R}(f,x,P,\mathbb{K}) = 1$, if $\exists \mathbb{K}' \subseteq \mathbb{K}$:
    \begin{align}
    \sum_{k \in \mathbb{K}'} f(P(x,k)) = \mathbf{0}.
    \end{align}
    where $\mathbf{0}$ is the zero vector. Figure~\ref{fig:example-worst-robustness} illustrates this in a two-dimensional space.

    \item \myparatight{Rotational invariance} Since embeddings are rotation-free, the robustness metric should be invariant to rotations in the embedding space. Let $M$ be a rotation matrix, then:
    \begin{align}
         \mathcal{R}(M\cdot f,x,P,\mathbb{K}) = \mathcal{R}(f,x,P,\mathbb{K}),
    \end{align}
    where $M\cdot f$ denotes rotating $f(x)$ by $M$. 
\end{enumerate}

%% file: 3_method.tex
\section{\label{sec:method}Robustness Metrics} 

We explore three robustness metrics and theoretically analyze what desired mathematical properties they satisfy/violate, which we summarize in Table~\ref{tab:negative_example_metrics}.

\subsection{Cosine Similarity} 
Since embedding vectors lie on a hypersphere, we can use the angles between them to quantify robustness. Specifically, the largest angle (smallest cosine similarity) between any two embedding vectors from perturbed images can define a robustness metric. Formally, for a model $f$, image $x$, perturbation function $P$, and parameter domain $\mathbb{K}$, the cosine similarity-based robustness metric $\mathcal{R}_{cs}$ is:
\begin{align}
\label{equ:cosine_similarity}
    &\mathcal{R}_{cs}(f,x,P,\mathbb{K}) \nonumber\\=& \frac{1 - \min_{k_1,k_2 \in \mathbb{K}} cos(f(P(x,k_1)), f(P(x,k_2)))}{2},
\end{align}
where $cos(\cdot, \cdot)$ is the cosine similarity. The constants normalize the value to [0,1].

We can verify that the robustness metric $\mathcal{R}_{cs}$ satisfies the five mathematical properties except the worst-robustness one. In particular, since $cos(\cdot, \cdot)$ is in [-1,1], $\mathcal{R}_{cs}(f,x,P,\mathbb{K})$ lies in [0, 1], meeting the bounded-domain property. As $\mathbb{K}$ expands, $\min_{k_1,k_2 \in \mathbb{K}} cos(f(P(x,k_1)), f(P(x,k_2)))$ does not increase, so $\mathcal{R}_{cs}$ does not decrease, satisfying monotonicity. The best-robustness property holds because $\mathcal{R}_{cs}(f,x,P,\mathbb{K})=0$ when all perturbed versions of $x$ have the same embedding. Finally, $\mathcal{R}_{cs}$ is rotation-invariant since cosine similarity is.

However, $\mathcal{R}_{cs}$ does not satisfy the worst-robustness property. For instance, in Figure~\ref{fig:example-worst-robustness}, $\mathcal{R}_{cs}$ is 0.75, not 1, as $\mathcal{R}_{cs}=1$ only when embedding vectors cover exactly half the hypersphere, with cosine similarity of -1. If embedding vectors are more evenly distributed, the smallest cosine similarity is greater than -1, violating worst-robustness.

\myparatight{Monotonically increasing function of cosine similarity} A robustness metric based on any monotonically increasing function of $\mathcal{R}_{cs}$ cannot satisfy all five desired properties. Formally, we define such a metric:
\begin{align}
\label{eq:pcs}
    \mathcal{R}_g(f,x,P,\mathbb{K}) =  g(\mathcal{R}_{cs}(f,x,P,\mathbb{K})),
\end{align}
where $g$ is a monotonically increasing function satisfying $g(0)=0$ and $g(1)=1$. We can show $\mathcal{R}_g$ satisfies bounded-domain, best-robustness, and rotation-invariance properties but fails the worst-robustness property. Formally, we have:

\begin{thm}[Monotonically Increasing Function of $\mathcal{R}_{cs}$]\label{theorem_imposibility_cosine_function}
Given any $g$ that is a monotonically increasing function of $\mathcal{R}_{cs}$ and satisfies $g(0)=0$ and $g(1)=1$, $\mathcal{R}_g = g(\mathcal{R}_{cs})$ does not satisfy the worst-robustness property. 
 \end{thm}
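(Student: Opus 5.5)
A single counterexample suffices: I will exhibit $f$, $x$, $P$, $\mathbb{K}$ such that some subset of the embeddings sums to zero, yet $\mathcal{R}_g<1$. Choose $f$ and $P$ so that the set of perturbed embeddings $\{f(P(x,k))\}_{k\in\mathbb{K}}$ is exactly the three vertices of an equilateral triangle inscribed in a great circle of the unit hypersphere, as in Figure~\ref{fig:example-worst-robustness}. For instance, take three unit vectors $e_1,e_2,e_3$ in a two-dimensional subspace at mutual angles $2\pi/3$, and set $\mathbb{K}=\{\bot,k_1,k_2\}$ with $f(x)=e_1$, $f(P(x,k_1))=e_2$, $f(P(x,k_2))=e_3$. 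Such $f$ and $P$ can be defined directly on these three images, since the property is supposed to hold for all $f,x,P,\mathbb{K}$. Their sum is $\mathbf{0}$, so with $\mathbb{K}'=\mathbb{K}$ the worst-robustness property demands $\mathcal{R}_g=1$.

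Next I compute $\mathcal{R}_{cs}$ for this configuration. The pairwise cosines are $1$ on the diagonal and $-1/2$ off it, so the minimum is $-1/2$. Hence $\mathcal{R}_{cs}=(1+1/2)/2=3/4$, matching the value quoted in the text.

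It remains to show $g(3/4)\neq 1$. This is the only delicate step, because ``monotonically increasing'' could be read as non-decreasing, and then a $g$ could already equal $1$ at $3/4$. I will use the strict reading, so $g(3/4)<g(1)=1$, and state this assumption explicitly.

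To make the argument robust to the weaker reading, I would strengthen the counterexample with a family of configurations. Take $n$ equally spaced unit vectors on a great circle; they sum to zero, and the minimum cosine is $\cos(\pi(n-1)/n)$ for odd $n$. This gives $\mathcal{R}_{cs}=(1-\cos(\pi(n-1)/n))/2<1$. The worst-robustness property would then force $g=1$ at all these values. Under non-decreasing $g$ with $g(0)=0$, that is not contradictory for a step function, so the strict-monotonicity assumption is genuinely needed. I will note this and conclude under strict monotonicity.
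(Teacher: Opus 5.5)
Your proof is correct and is essentially the paper's: the paper also uses the three-vectors-at-$120^\circ$ configuration of Figure~\ref{fig:example-worst-robustness}, where $\mathcal{R}_{cs}=0.75<1$, and concludes $\mathcal{R}_g<1$ because $g$ is increasing with $g(1)=1$. Your explicit appeal to strict monotonicity is a worthwhile sharpening, since the paper's step ``$\mathcal{R}_g<1$ if $\mathcal{R}_{cs}<1$'' uses it silently, and it cannot be dropped: any zero-sum set of unit vectors has a pair with negative inner product, so $\mathcal{R}_{cs}>1/2$ in every worst-robustness case, and the non-decreasing $g$ with $g(t)=0$ for $t\le 1/2$ and $g(t)=1$ for $t>1/2$ would then satisfy the property.
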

\begin{proof}
To prove this, we construct a counter-example. Since $\mathcal{R}_g = g(\mathcal{R}_{cs})$ increases with $\mathcal{R}_{cs}$, and since $g(0)=0$ and $g(1)=1$, we have $\mathcal{R}_g < 1$ if $\mathcal{R}_{cs} < 1$. In Figure~\ref{fig:example-worst-robustness}, $\mathcal{R}_{cs} = 0.75 < 1$, so $\mathcal{R}_g < 1$, countering the worst-robustness property.
\end{proof}

\subsection{Euclidean Distance} \label{sec:euclideandistance}
Another intuitive robustness metric is to use the Euclidean distances between embedding vectors of perturbed images to quantify their variations. Formally, given a model $f$, image $x$, perturbation function $P$, and parameter domain $\mathbb{K}$, we define a Euclidean distance-based robustness metric $\mathcal{R}_{ed}$ as:
\begin{align}
\label{equ:euclidean_distance}
&\mathcal{R}_{ed}(f,x,P,\mathbb{K}) \nonumber \\
=& \frac{\max_{k_1,k_2 \in \mathbb{K}} ||f(P(x,k_1)) - f(P(x,k_2))||_2}{2},
\end{align}
where $||\cdot||_2$ denotes the Euclidean distance, with the constant 2 normalizing the metric to [0,1].

We can show that $\mathcal{R}_{ed}$ is a monotonically increasing function of $\mathcal{R}_{cs}$, specifically as follows:

\begin{thm}\label{theorem_equivalence} For any model $f$, image $x$, perturbation function $P$, and parameter domain $\mathbb{K}$, $\mathcal{R}_{ed}$ is the square root of $\mathcal{R}_{cs}$:
\begin{align}
    \mathcal{R}_{ed}(f,x,P,\mathbb{K}) = \sqrt{\mathcal{R}_{cs}(f,x,P,\mathbb{K})}.
\end{align}
\end{thm}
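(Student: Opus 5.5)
The plan is to reduce everything to a single pairwise identity on the unit sphere, and then to pass that identity through the $\max$ and $\min$ in the two definitions.

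\textbf{Step 1: the pairwise identity.} Fix $k_1,k_2\in\mathbb{K}$ and write $u=f(P(x,k_1))$, $v=f(P(x,k_2))$. By the normalization assumption in the problem formulation, $\|u\|_2=\|v\|_2=1$, so $cos(u,v)=u^\top v$. Expanding the square gives
\begin{align}
\|u-v\|_2^2 = \|u\|_2^2+\|v\|_2^2-2u^\top v = 2-2\,cos(u,v).
\end{align}
Dividing by $4$, this becomes
\begin{align}
\left(\frac{\|u-v\|_2}{2}\right)^2 = \frac{1-cos(u,v)}{2}.
\end{align}

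\textbf{Step 2: pass to the extremes.} The map $t\mapsto (1-t)/2$ is strictly decreasing. The map $s\mapsto s^2$ is strictly increasing on $[0,\infty)$, and the distances lie in $[0,2]$. Hence a pair $(k_1,k_2)$ maximizes $\|u-v\|_2$ exactly when it minimizes $cos(u,v)$. It follows that
\begin{align}
\mathcal{R}_{ed}(f,x,P,\mathbb{K})^2 = \frac{1-\min_{k_1,k_2\in\mathbb{K}} cos(f(P(x,k_1)),f(P(x,k_2)))}{2} = \mathcal{R}_{cs}(f,x,P,\mathbb{K}).
\end{align}
Since $\mathcal{R}_{ed}\ge 0$, taking the nonnegative square root gives the claim.

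\textbf{The only delicate point.} The one step needing care is exchanging $\max$ or $\min$ with a monotone transformation when $\mathbb{K}$ is infinite, for example a continuous range of Gaussian noise levels, where the extremum may not be attained. I would handle this by reading $\max/\min$ as $\sup/\inf$ and using that continuous monotone maps commute with suprema and infima. Concretely, $\sup_{k_1,k_2} \sqrt{h(k_1,k_2)} = \sqrt{\sup_{k_1,k_2} h(k_1,k_2)}$ for nonnegative $h$, and $\sup\,(1-t)/2 = (1-\inf t)/2$. Everything else is routine.

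As a corollary, $\mathcal{R}_{ed}$ is $g(\mathcal{R}_{cs})$ with $g(t)=\sqrt{t}$, which satisfies $g(0)=0$ and $g(1)=1$. Theorem~\ref{theorem_imposibility_cosine_function} then shows that $\mathcal{R}_{ed}$ also violates worst-robustness.
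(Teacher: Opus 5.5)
Your proof is correct and matches the paper's argument: the paper also expands $\|u-v\|_2^2 = 2-2\,cos(u,v)$ using unit norms, moves the square root and the affine map through the max/min, and derives the same corollary from Theorem~\ref{theorem_imposibility_cosine_function}. Your remark that max/min should be read as sup/inf when $\mathbb{K}$ is continuous adds a bit of rigor that the paper leaves implicit.
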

\begin{proof}
    Please refer to the Appendix. 
\end{proof}

Theorem~\ref{theorem_equivalence} further shows that $\mathcal{R}_{cs}$ and $\mathcal{R}_{ed}$ are equivalent, as $\mathcal{R}_{cs}$ can be converted to $\mathcal{R}_{ed}$ by taking the square root. Therefore, we omit results for $\mathcal{R}_{ed}$ in our experiments for simplicity.

\subsection{\alg{}} 

\subsubsection{Formulating an Optimization Problem} We first outline the intuition behind \alg{} and then present its formulation.

\myparatight{Intuition} 
The cosine similarity and Euclidean distance metrics do not satisfy the worst-robustness property, which requires a maximum robustness value of 1 when the embedding vectors for perturbed images are equally distributed in the embedding space. Our intuition is that if the embedding vectors are uniformly spread on the unit hyper-sphere, the radius of the smallest ball enclosing these vectors will equal 1. Thus, constructing a minimum enclosing ball for the perturbed embedding vectors can satisfy the worst-robustness property and indicate robustness: a smaller radius implies greater robustness.

\myparatight{Optimization problem} 
We aim to find the smallest-radius high-dimensional ball with center $c$ that encloses all embedding vectors of perturbed versions of an image $x$ within the perturbation domain $\mathbb{K}$. This radius $r$, our \alg{}, is denoted as $\mathcal{R}_{dr}(f,x,P,\mathbb{K})$. Formally:
\begin{align}
\label{divergence_radius}
&\mathcal{R}_{dr}(f,x,P,\mathbb{K}) = \argmin r, \nonumber\\ &\text{  s.t.  } \exists c, ||f(P(x,k))  - c||_2 \leq r,  \forall k \in \mathbb{K}.
\end{align}

Figure~\ref{fig:minimum-enclosing-ball} illustrates an example of a minimum ball enclosing embedding vectors for three perturbed versions of an image $x$. Note that $\mathbb{K}$ includes a special parameter $\bot$ where $P(x, \bot) = x$.

\begin{figure}[!t]
    \centering
    {\includegraphics[width=0.2\textwidth]{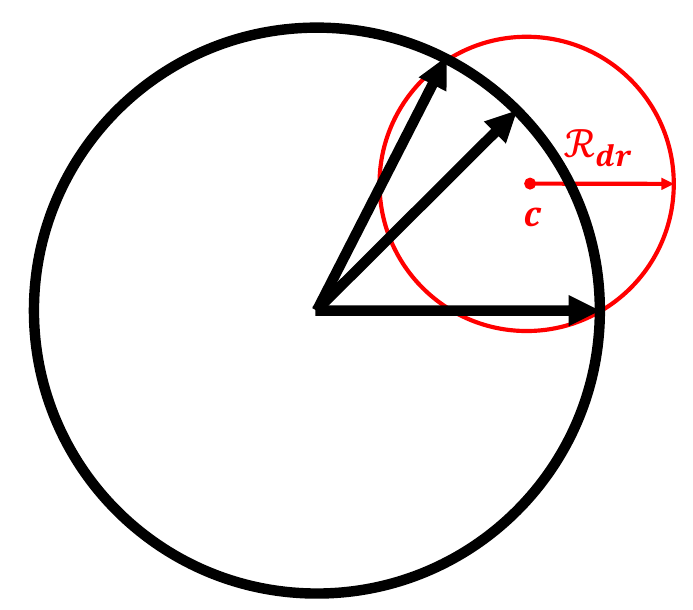}}
    \caption{An example to illustrate the minimum enclosing ball. The red circle is the minimum enclosing ball of the three embedding vectors.}
    \label{fig:minimum-enclosing-ball}
\end{figure}

\subsubsection{\alg{} Satisfies the Mathematical Properties} 

We show that the robustness metric $\mathcal{R}_{dr}$ satisfies all the five mathematical properties. 

\begin{proof}
Please refer to the Appendix.
\end{proof}

\subsubsection{Solving \alg{}} We consider both discrete and continuous $\mathbb{K}$.

\myparatight{Discrete $\mathbb{K}$} For a discrete domain $\mathbb{K}$, the embedding vectors $\{f(P(x,k))\}_{k\in\mathbb{K}}$ are discrete points on the unit hyper-sphere. In this case, we can use Welzl's algorithm~\citep{welzl2005smallest} to efficiently find the minimum enclosing ball’s center and radius, solving Equation~\ref{divergence_radius} in $O(dn)$ time, where $d$ is the embedding dimension and $n$ the number of discrete values in $\mathbb{K}$.

\myparatight{Continuous $\mathbb{K}$} When $\mathbb{K}$ is continuous, finding an exact solution is infeasible due to infinite embedding vectors. To approximate, we sample a discrete subset $\bar{\mathbb{K}}$ from $\mathbb{K}$, transforming the problem into a discrete one. We use two sampling methods: \emph{random sampling} (uniform random selection) and \emph{equally-spaced sampling}, where equally spaced values better represent the domain when using fewer samples. For a domain $\mathbb{K} = [a,b]$ and $m$ samples, equally-spaced sampling yields values $a, a + (b-a)/(m-1), \ldots, b$. Our experiments show that equally-spaced sampling outperforms random sampling for \alg{} estimation accuracy.

After obtaining $\bar{\mathbb{K}}$, we apply Welzl's algorithm to find an approximate \alg{}. For large discrete domains, a small sample set can similarly reduce computational cost. Random and equally-spaced sampling methods also apply to estimating the cosine similarity-based robustness $\mathcal{R}_{cs}$ and Euclidean distance-based robustness $\mathcal{R}_{ed}$.

%% file: 4_exp.tex
\input{4.1_exp_setup}
\input{4.2_exp_measurement}
\input{4.3_exp_downstream}
\input{4.4_exp_enhancement}

%% file: 4.1_exp_setup.tex
\section{\label{sec:measurement} Measuring Robustness of Real-world Foundation Models}

In this section, we evaluate the robustness of industry-scale foundation models using cosine similarity and \alg{}. Although cosine similarity does not theoretically satisfy the worst-robustness property, we include it in our experiments since the worst-robustness scenario does not occur in the evaluated perturbations. We omit Euclidean distance results due to its equivalence with cosine similarity.

\subsection{Measurement Setup}
\myparatight{Foundation models} We evaluate foundation models pre-trained using various algorithms, architectures, and sizes, allowing comparison of pre-training methods and model robustness. Specifically, we evaluate two popular families of vision foundation models: CLIP~\citep{radford2021learning} and DINO v2~\citep{oquab2023dinov2}. CLIP (by OpenAI) was trained with \emph{multi-modal self-supervised learning} on 400 million image-text pairs, while DINO v2 (by Meta) used \emph{self-supervised learning} on 142 million unlabeled images. In the CLIP family, we test ViT-B/16, ViT-L/14, RN50, and RN50$\times$64 (Vision Transformer and ResNet architectures). In the DINO v2 family, we evaluate ViT-L/14 and ViT-g/14. Details on these models are in Table~\ref{tab:models} in the Appendix.

\myparatight{Datasets} For robustness evaluation, we use two image classification datasets, ImageNet~\citep{deng2009imagenet} and Food101~\citep{bossard14}, and one depth estimation dataset, NYU-Depth V2~\citep{Silberman:ECCV12}. Details of these datasets are provided in Table~\ref{tab:dataset} (Appendix). Although labels are not required for robustness testing, they are used later to evaluate downstream applications.

\myparatight{Common perturbations}
Following~\citep{hendrycks2019benchmarking}, we use nine common perturbation functions representing typical image editing operations: \textit{JPEG compression}, \textit{Brightness adjustment}, \textit{Contrast adjustment}, \textit{Defocus blurring}, \textit{Elastic blurring}, \textit{Fog blurring}, \textit{Frost blurring}, \textit{Gaussian noise}, and \textit{Glass blurring}. Each has one variable parameter, with fixed values for additional parameters if present.

Table~\ref{tab_robustness_radius_main} in the Appendix details each perturbation's parameter domain $\mathbb{K}$, additional fixed parameters, and a visualization of a maximally distorted example for each function. Following prior work~\citep{hendrycks2019benchmarking}, the selected $\mathbb{K}$ for each perturbation reflects realistic image editing in real-world scenarios.

%% file: 4.2_exp_measurement.tex
\begin{figure}[!t]
    \centering
    \includegraphics[width=0.33\textwidth]{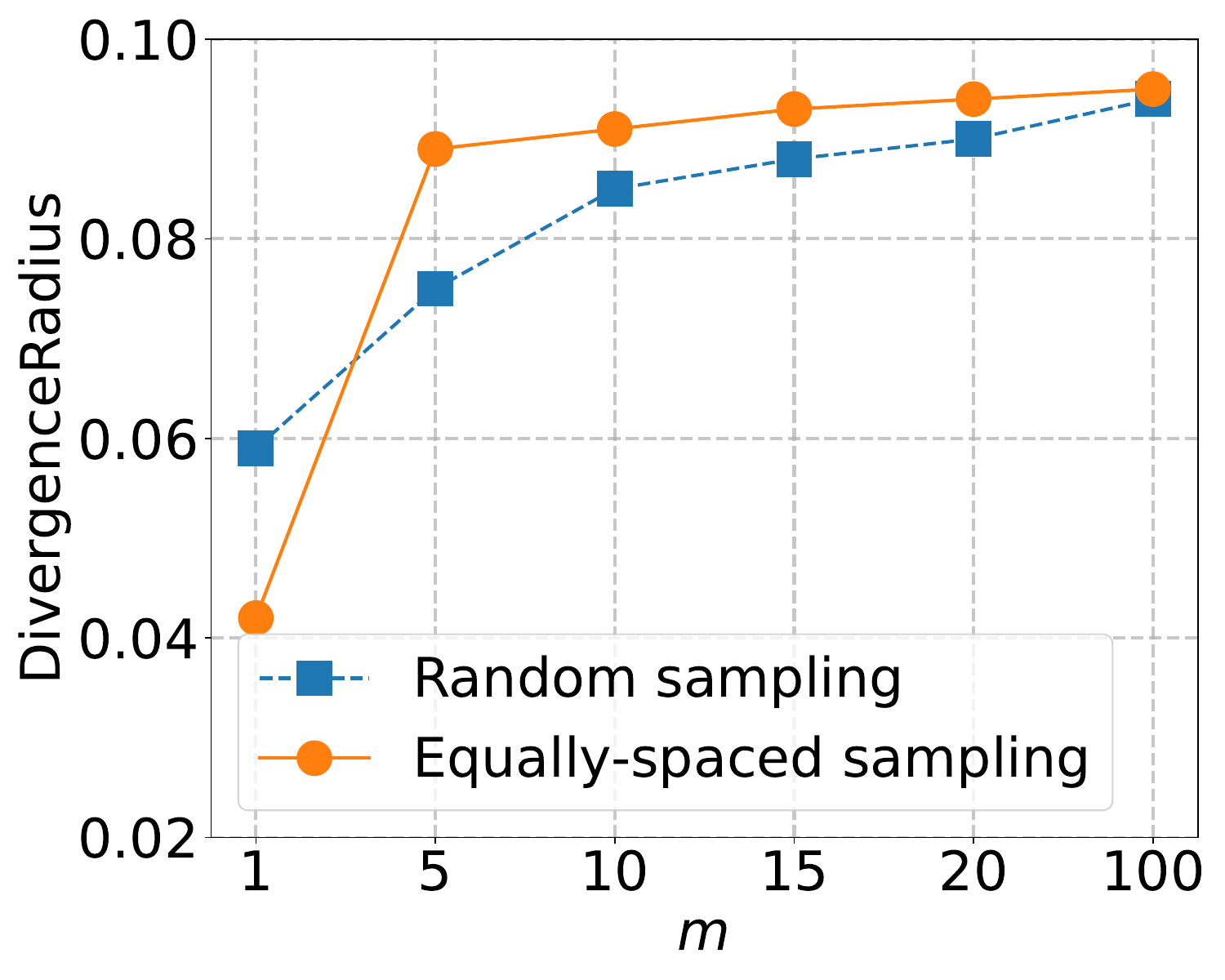}
    \caption{Comparing random sampling and equally-spaced sampling at computing \alg{}, where  CLIP ViT-L/14 foundation model, JPEG compression, and ImageNet images are used. $m$ is the number of discrete values sampled from the domain $\mathbb{K}$. }
    \label{fig:impact_sampling}
\end{figure}

\begin{figure*}[t!]
    \centering
    \subfloat[JPEG compression]{\includegraphics[width=0.33\textwidth]{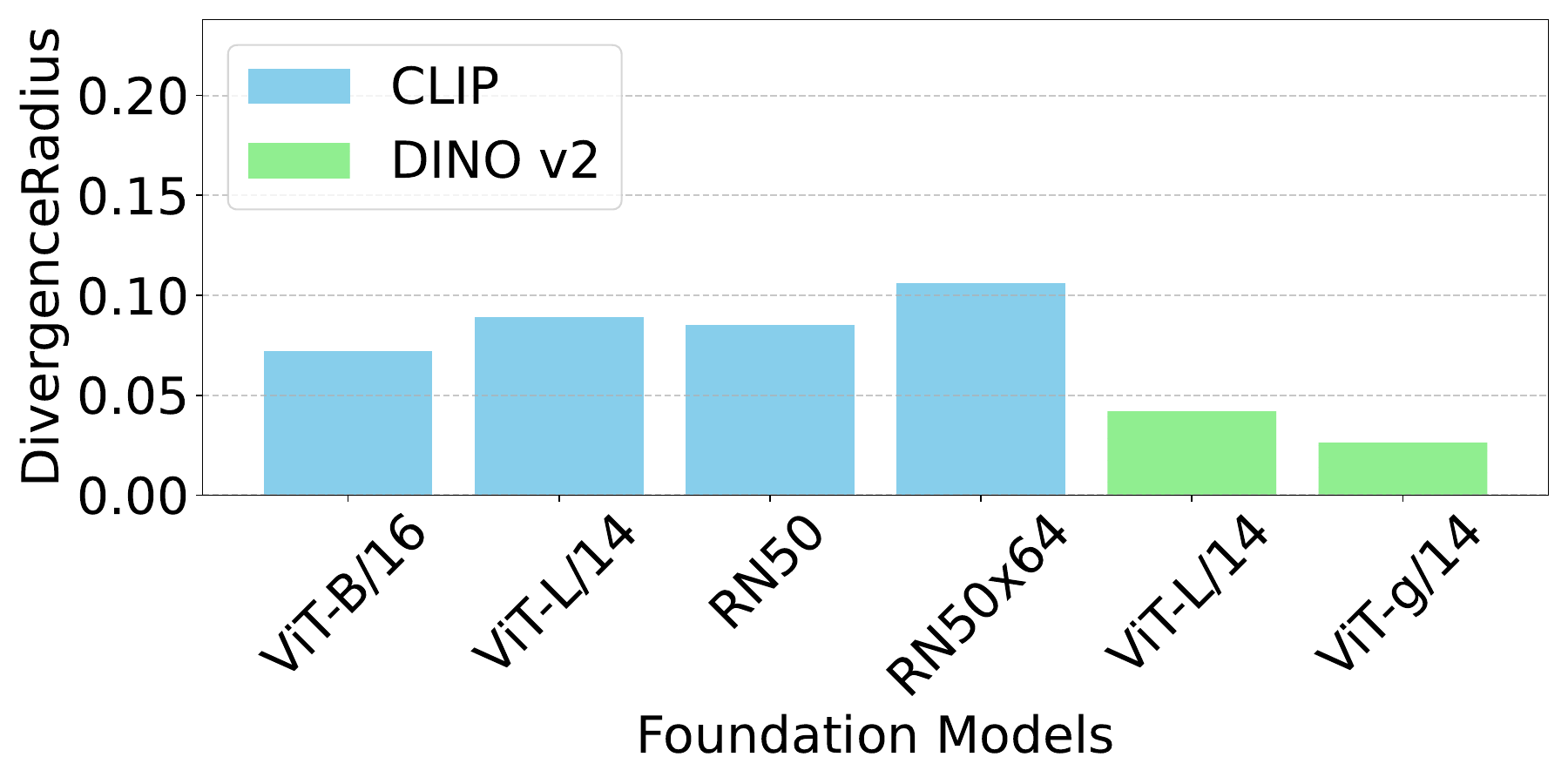}\label{fig:1}}
    \hfill
    \subfloat[Brightness adjustment]{\includegraphics[width=0.33\textwidth]{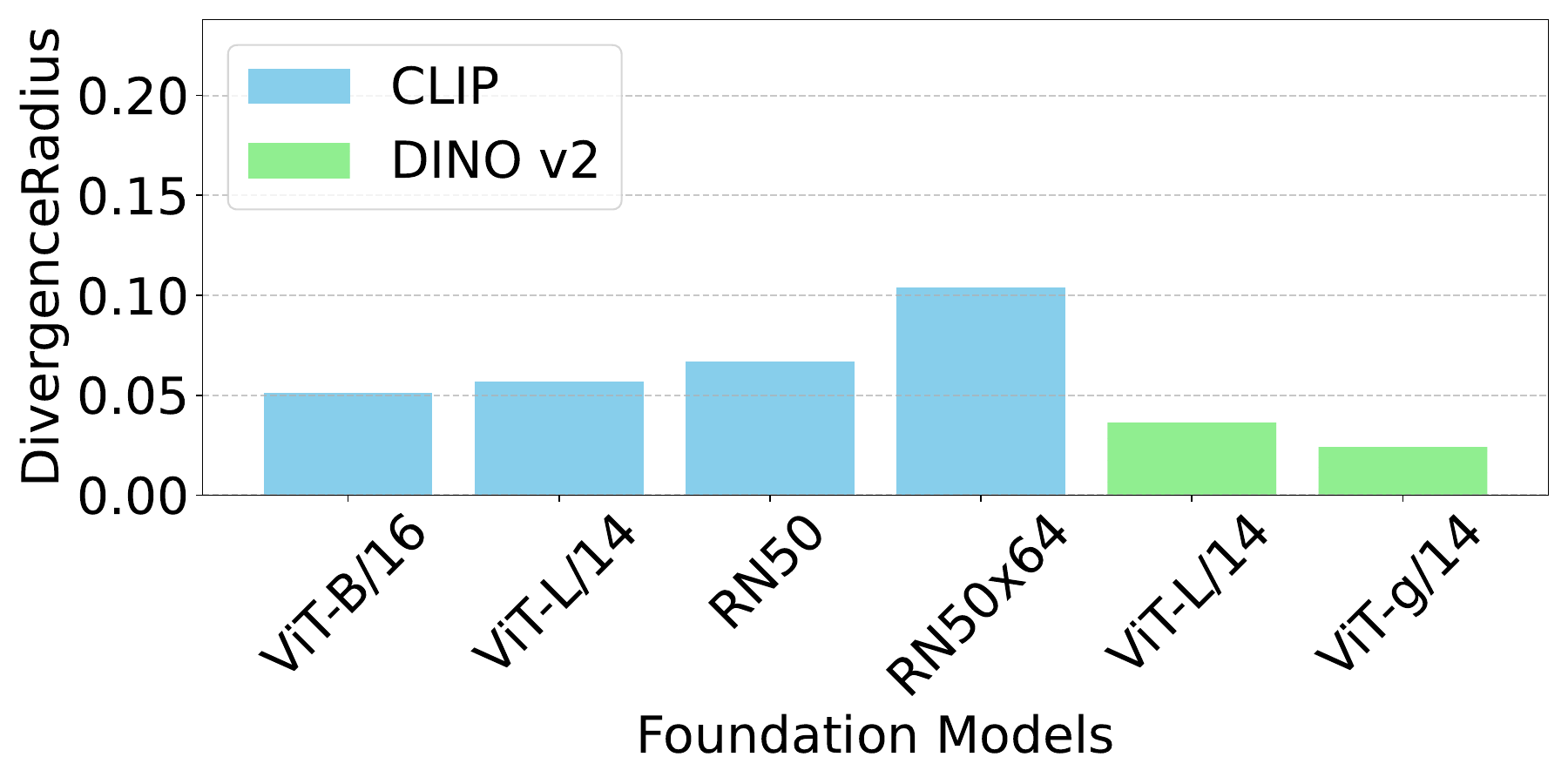}\label{fig:2}}
    \hfill
    \subfloat[Contrast adjustment]{\includegraphics[width=0.33\textwidth]{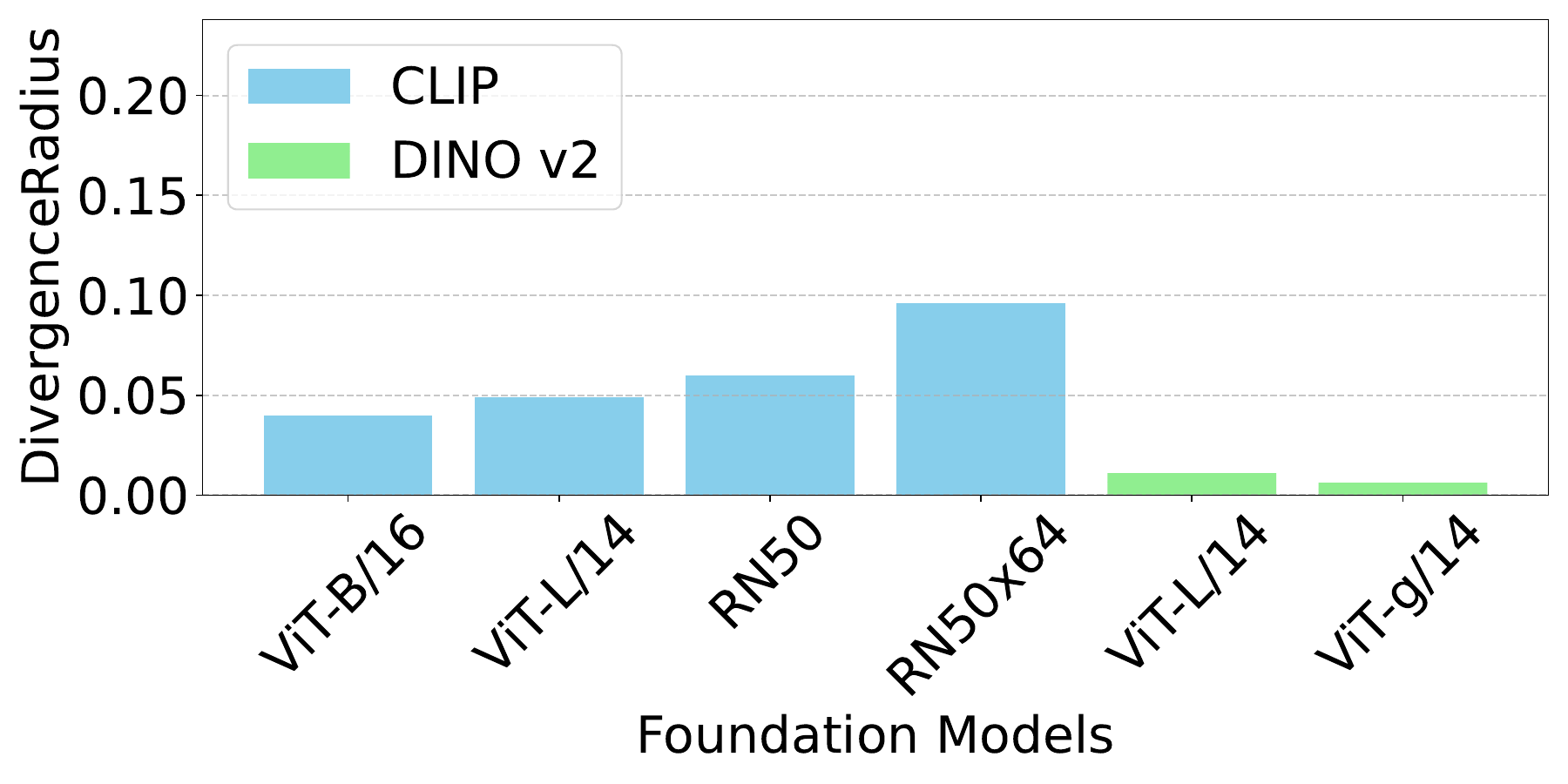}\label{fig:3}}
    \\
    \subfloat[Defocus blurring]{\includegraphics[width=0.33\textwidth]{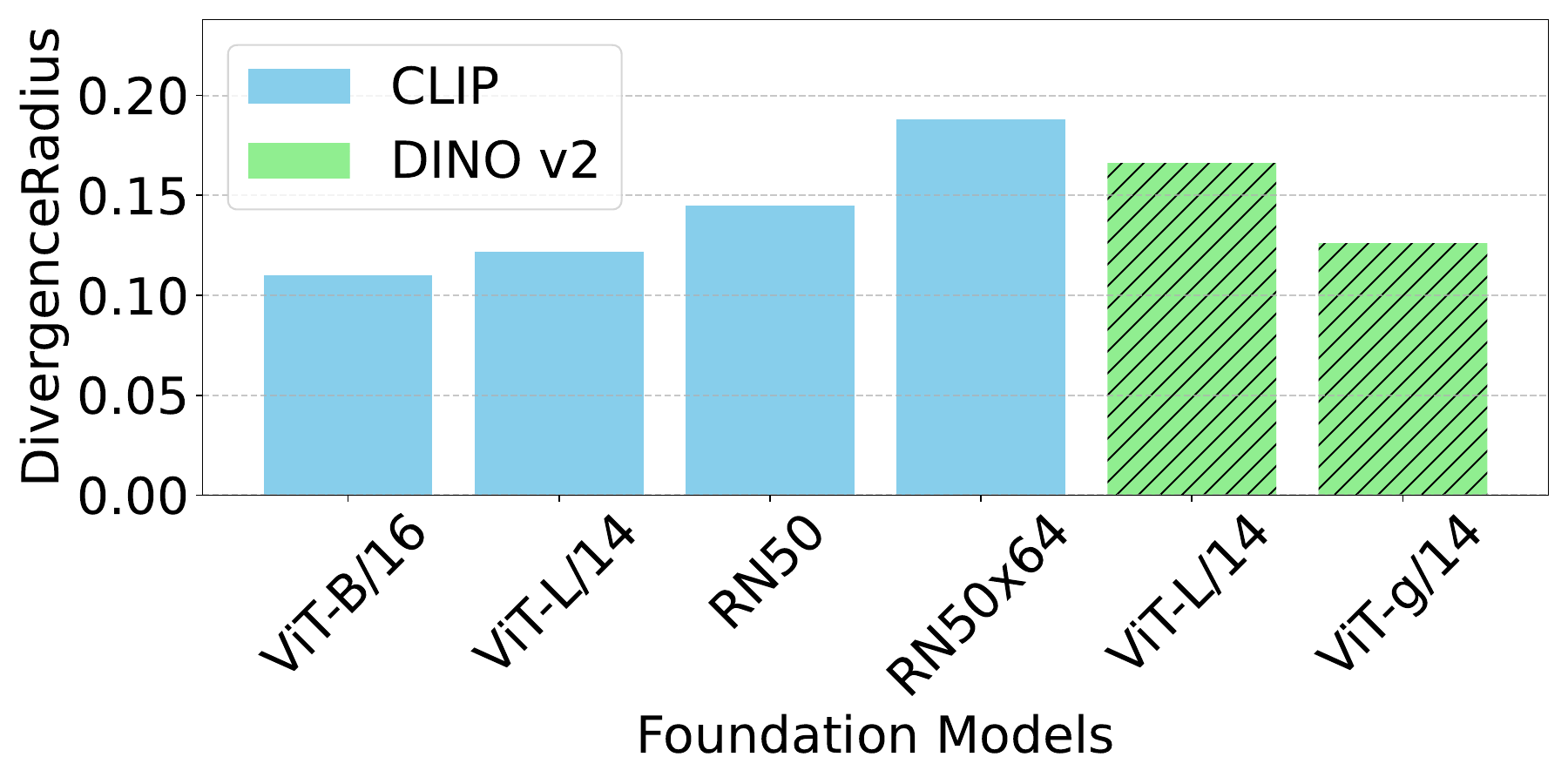}\label{fig:4}}
    \hfill
    \subfloat[Elastic blurring]{\includegraphics[width=0.33\textwidth]{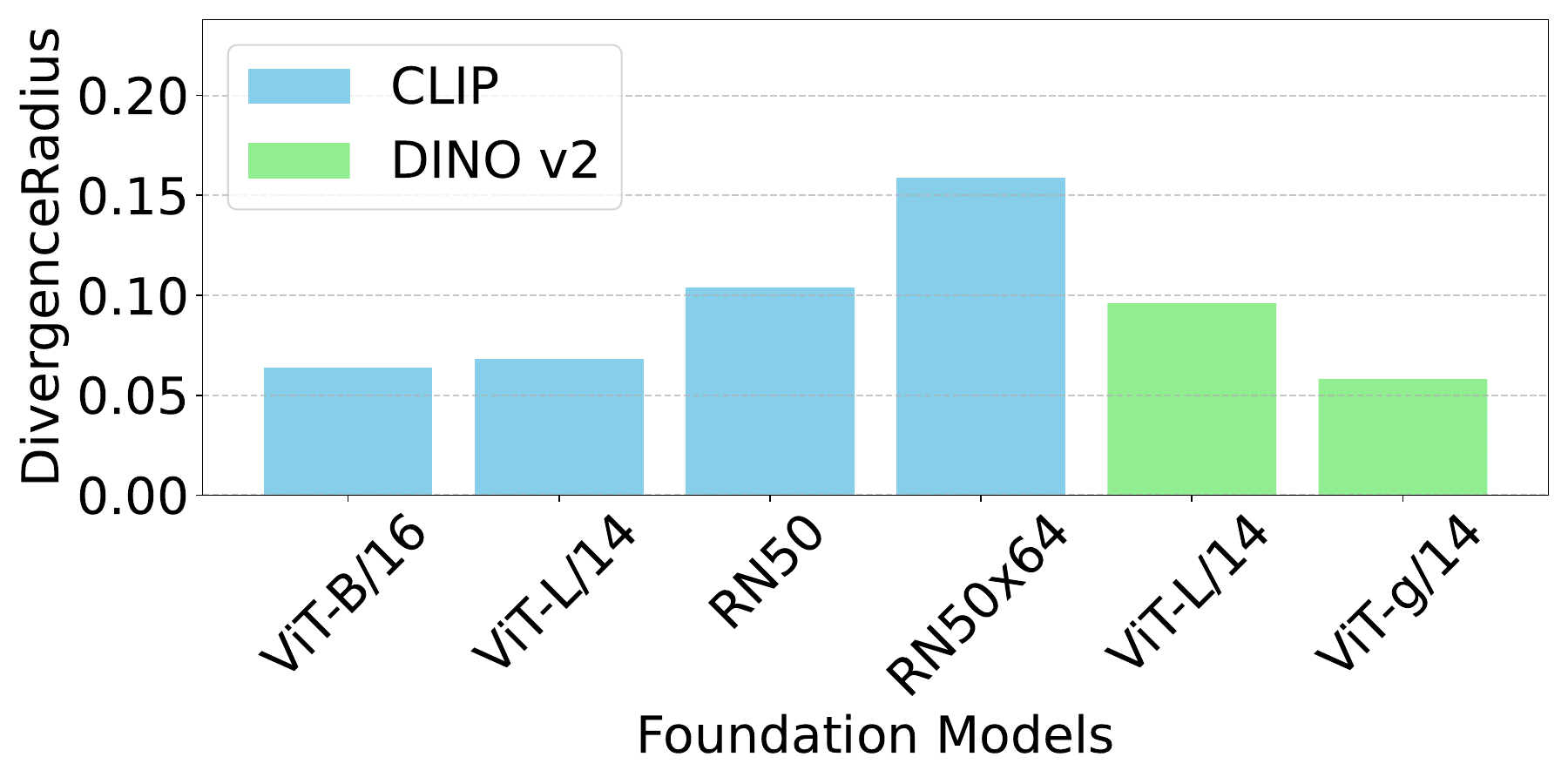}\label{fig:5}}
    \hfill
    \subfloat[Fog blurring]{\includegraphics[width=0.33\textwidth]{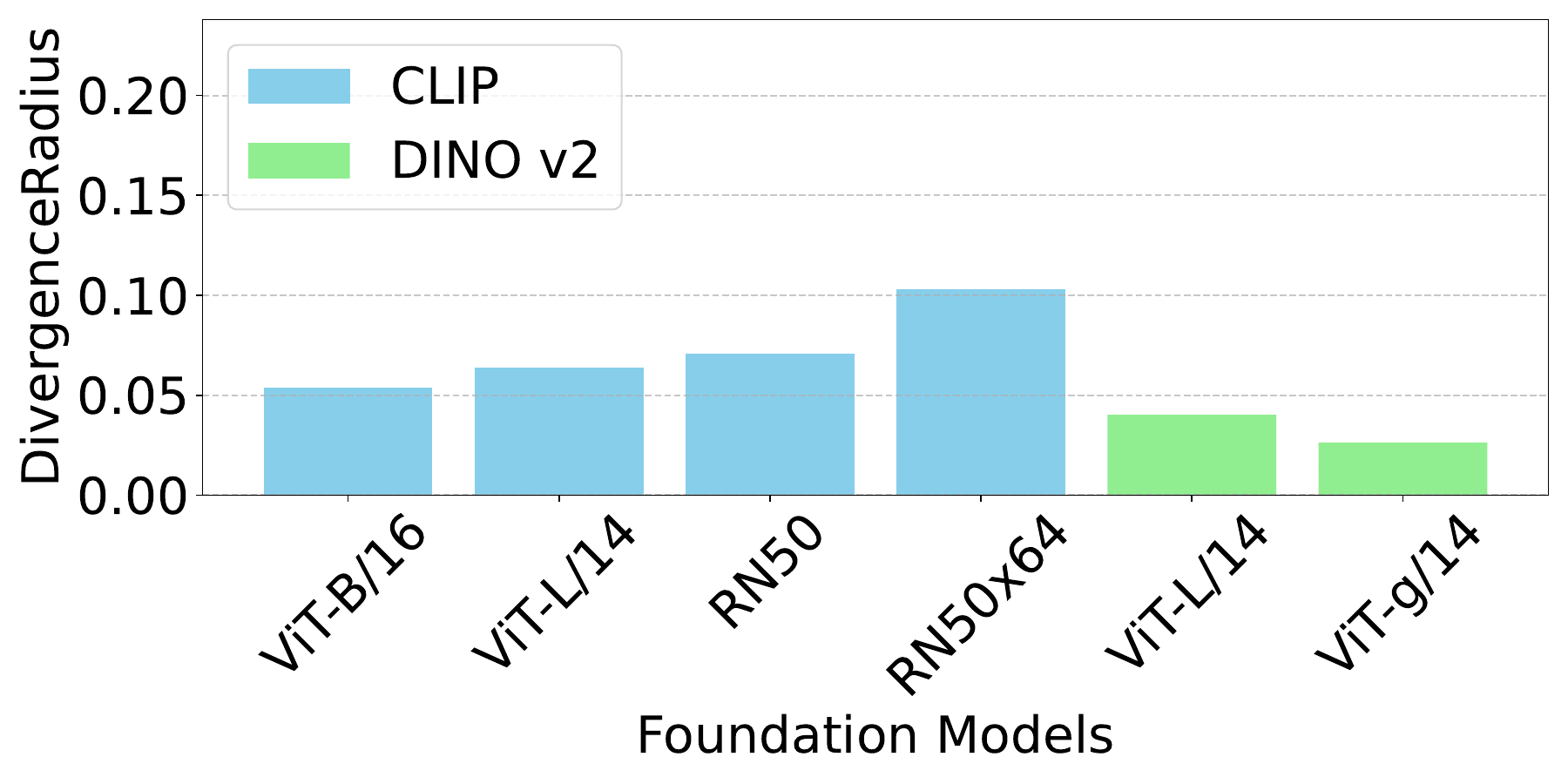}\label{fig:6}}
    \\
    \subfloat[Frost blurring]{\includegraphics[width=0.33\textwidth]{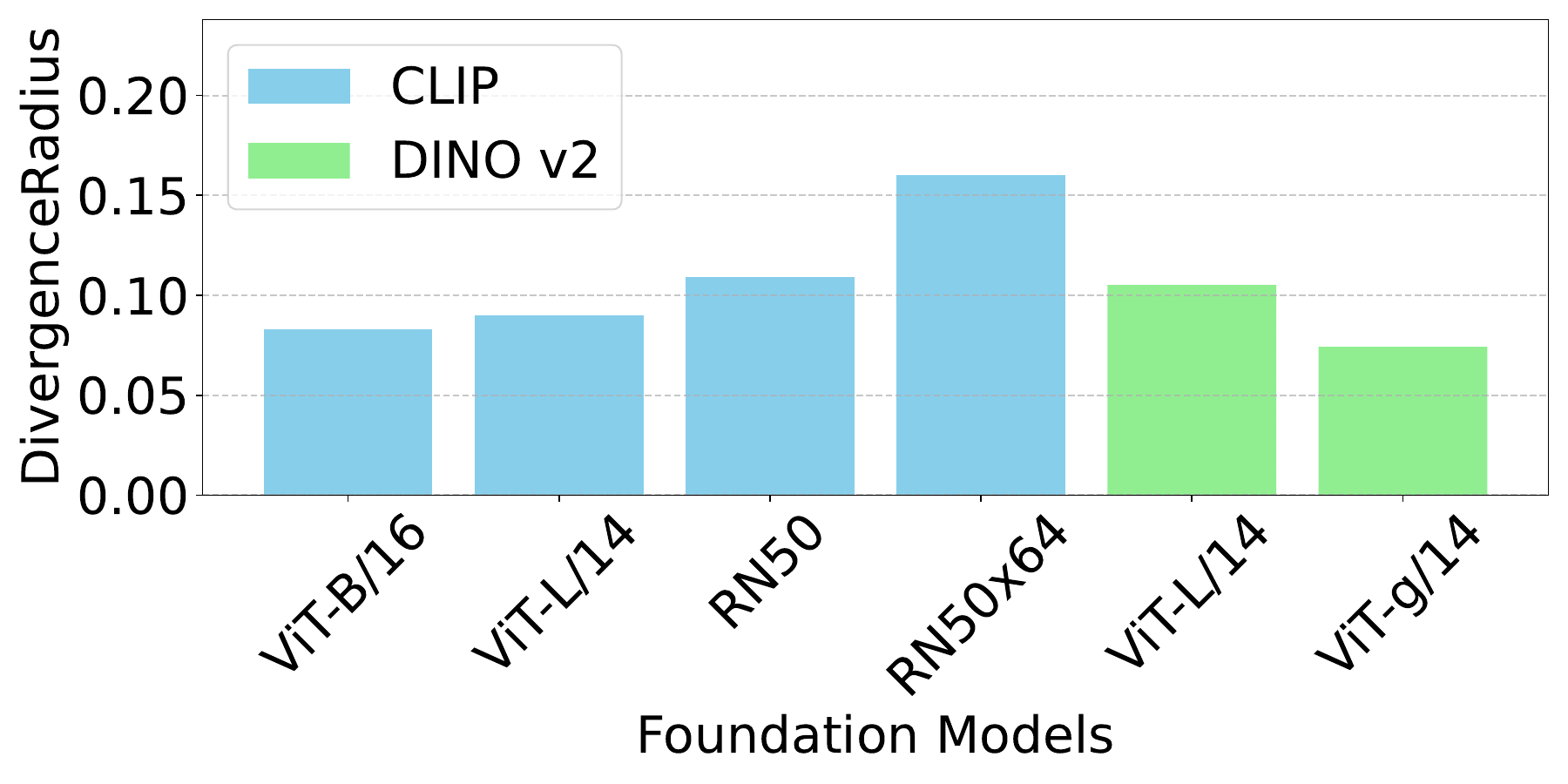}\label{fig:7}}
    \hfill
    \subfloat[Gaussian noise]{\includegraphics[width=0.33\textwidth]{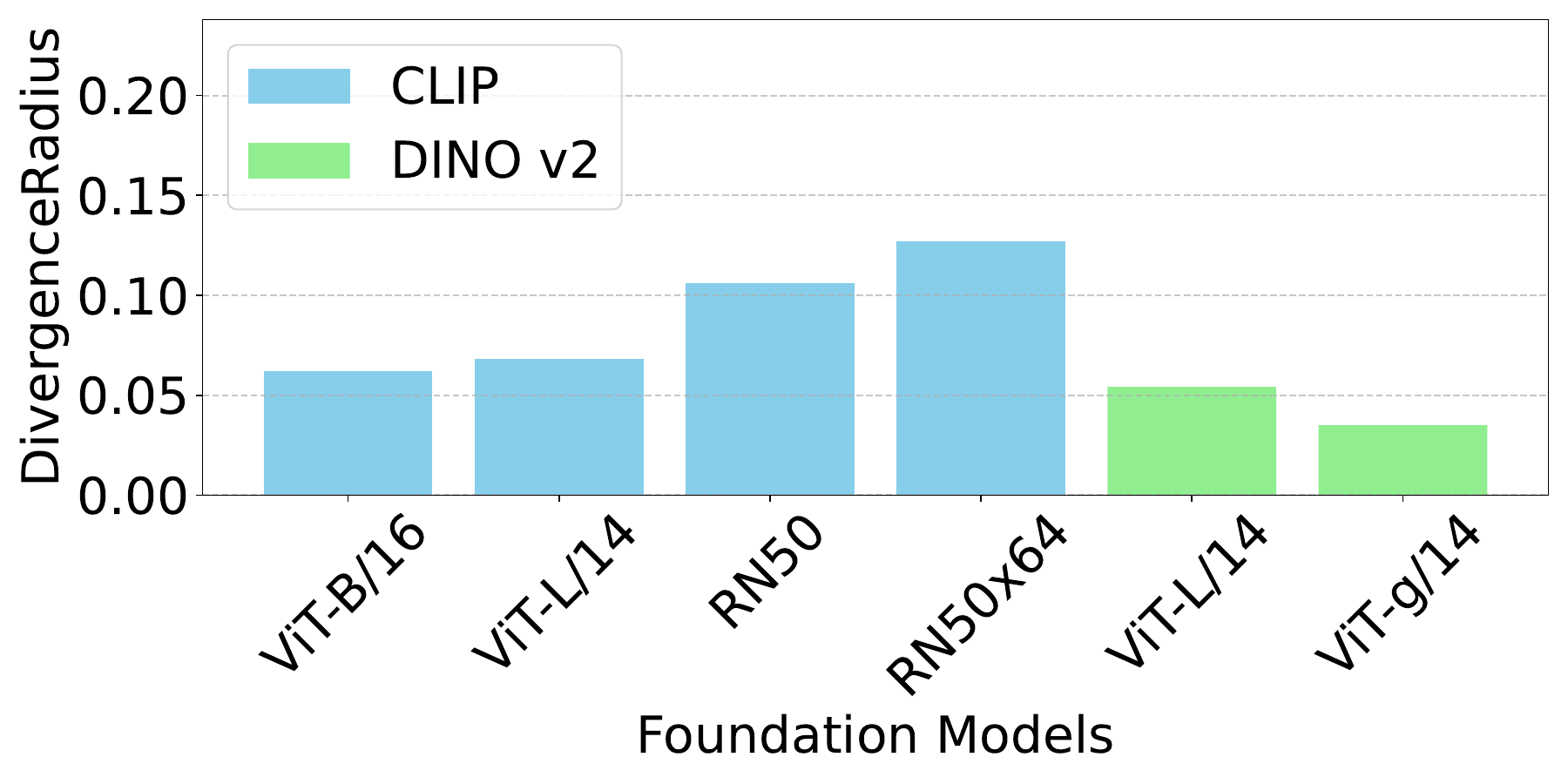}\label{fig:8}}
    \hfill
    \subfloat[Glass blurring]{\includegraphics[width=0.33\textwidth]{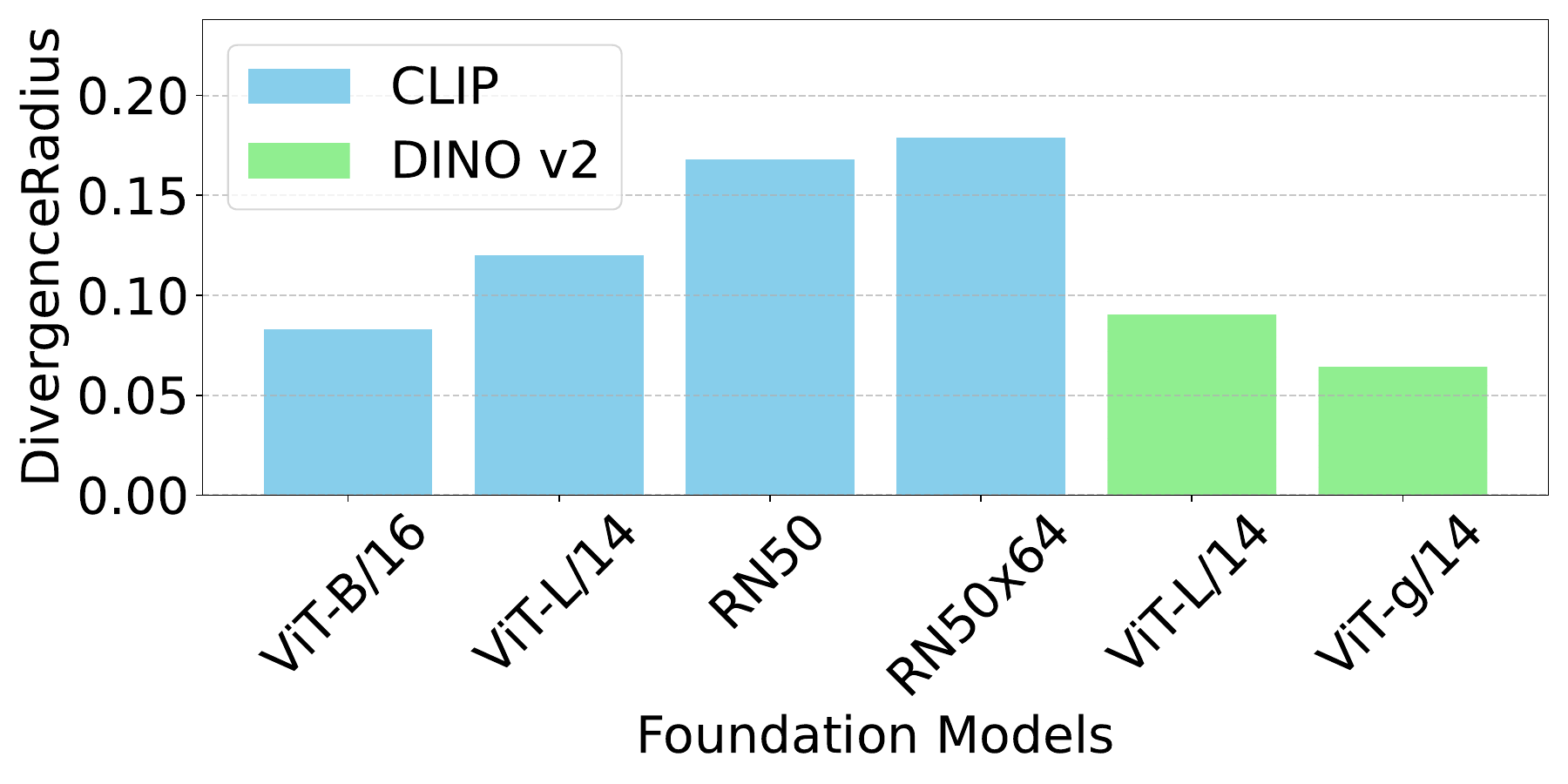}}
    \caption{Average \alg{} of ImageNet testing images for different foundation models and  perturbation functions.}
    \label{fig:divergence_radius_vision_imagenet}
    \vspace{-2mm}
\end{figure*}

\subsection{Measurement Results}

\myparatight{Random sampling v.s. equally-spaced sampling} When computing robustness values (i.e., $\mathcal{R}_{cs}$ or $\mathcal{R}_{dr}$), we can use \emph{random sampling}, which selects $m$ discrete values from $\mathbb{K}$ at random, or \emph{equally-spaced sampling}, which selects $m$ evenly spaced values. We compare these methods by computing $\mathcal{R}_{dr}$ for a foundation model, perturbation function, and image. Figure~\ref{fig:impact_sampling} shows the average $\mathcal{R}_{dr}$ on ImageNet images for CLIP ViT-L/14 under JPEG compression as $m$ varies.

We observe that \alg{} initially increases and then saturates with both sampling methods. As $m$ grows, more diverse parameters from $\mathbb{K}$ yield a higher $\mathcal{R}_{dr}$, approaching the true robustness value. Equally-spaced sampling converges more quickly, reaching near-saturation at $m \geq 5$, whereas random sampling requires $m \geq 20$. This indicates that equally-spaced sampling provides a more efficient approximation of $\mathcal{R}_{dr}$. Thus, for computational efficiency, we use equally-spaced sampling with $m=5$ for each perturbation function in other experiments.

\myparatight{Comparing model architectures} Figure~\ref{fig:divergence_radius_vision_imagenet}, Figure~\ref{fig:divergence_radius_vision_Food101} (Appendix), and Figure~\ref{fig:divergence_radius_vision_nyu} (Appendix) show the average $\mathcal{R}_{dr}$ on ImageNet, Food101, and NYU-Depth V2 datasets across different foundation models and perturbations. Figures~\ref{fig:cosine_vision_imagenet}, \ref{fig:cosine_vision_Food101}, and \ref{fig:cosine_vision_nyu} (Appendix) present the corresponding $\mathcal{R}_{cs}$ results. To compare architectures, we examine models with the same pre-training algorithm and similar sizes: specifically, CLIP ViT-B/16 vs. CLIP RN50 and CLIP ViT-L/14 vs. CLIP RN50×64. We observe that ViT-B/16 (or ViT-L/14) is consistently more robust than RN50 (or RN50×64) across all perturbations and datasets, as evidenced by lower \alg{} values. This suggests that Vision Transformers are generally more robust to common image perturbations than ResNet architectures. Similar results are also observed when using cosine similarity.

\myparatight{Comparing pre-training algorithms} To compare pre-training algorithms, we evaluate the robustness of CLIP ViT-L/14 and DINO v2 ViT-L/14, which share the same architecture and model size. We find no consistent trend in robustness across perturbation types. For example, DINO v2 ViT-L/14 has lower \alg{} (i.e., greater robustness) than CLIP ViT-L/14 on JPEG compression, Brightness adjustment, Contrast adjustment, Fog blurring, Gaussian noise, and Glass blurring across all datasets. However, DINO v2 ViT-L/14 shows higher \alg{} (i.e., lower robustness) on other perturbations. For instance, under JPEG compression, the average \alg{} for ImageNet images is 0.038 for DINO v2 ViT-L/14 and 0.084 for CLIP ViT-L/14, while under Defocus blurring, it is 0.146 for DINO v2 ViT-L/14 and 0.108 for CLIP ViT-L/14.

\myparatight{Comparing model sizes} For model size comparisons, we examine foundation models within the same architecture and pre-training algorithm: CLIP ViT-B/16 vs. CLIP ViT-L/14, CLIP RN50 vs. CLIP RN50x64, and DINO v2 ViT-L/14 vs. DINO v2 ViT-g/14. In the CLIP family, larger models are generally less robust to common perturbations than smaller ones. For example, ViT-L/14 shows a higher average $\mathcal{R}_{dr}$ (or $\mathcal{R}_{cs}$) than ViT-B/16 across all perturbations and datasets, and RN$50\times64$ has higher robustness metrics than RN50, except for a few cases on Food101 (e.g., JPEG compression, Defocus blurring). Conversely, in the DINO v2 family, larger models are more robust: ViT-g/14 consistently shows lower $\mathcal{R}_{dr}$ values than ViT-L/14 across perturbations and datasets. This contrast suggests that pre-training settings—multi-modal self-supervised learning for CLIP vs. image-only self-supervised learning for DINO v2—impact robustness differently as model size increases.

%% file: 4.3_exp_downstream.tex
\section{\label{sec:exp_downstream}Measuring Performance of Downstream Applications}

In this section, we measure the robustness of downstream applications to common perturbations and demonstrate that an image's robustness value (i.e., $\mathcal{R}_{dr}$ or $\mathcal{R}_{cs}$) can predict the performance of downstream applications on its perturbed versions. We omit Euclidean distance results due to its equivalence with cosine similarity.

\subsection{Experimental Setup}

\myparatight{Downstream applications} Given a pre-trained vision foundation model, we consider the following three popular downstream applications: zero-shot classification, linear-probe classification, and depth estimation. The Appendix shows more details of these applications.

\begin{figure*}[t!]
    \centering
    \subfloat[Zero-shot classification]{\includegraphics[width=0.25\textwidth]{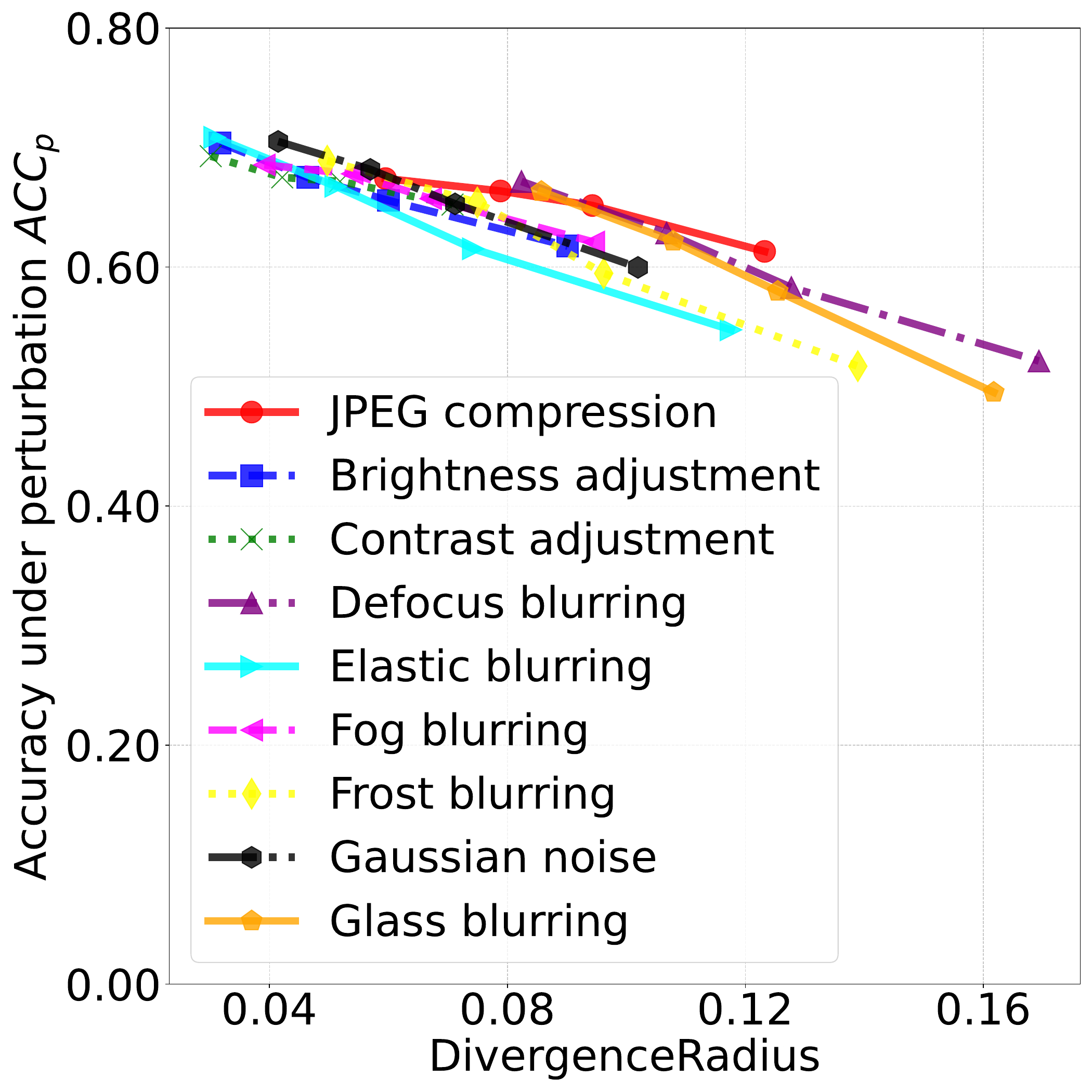}\label{fig:vision_downstream_zero_shot_imagenet}}
    \quad \quad \quad
    \subfloat[Linear-probe classification]{\includegraphics[width=0.25\textwidth]{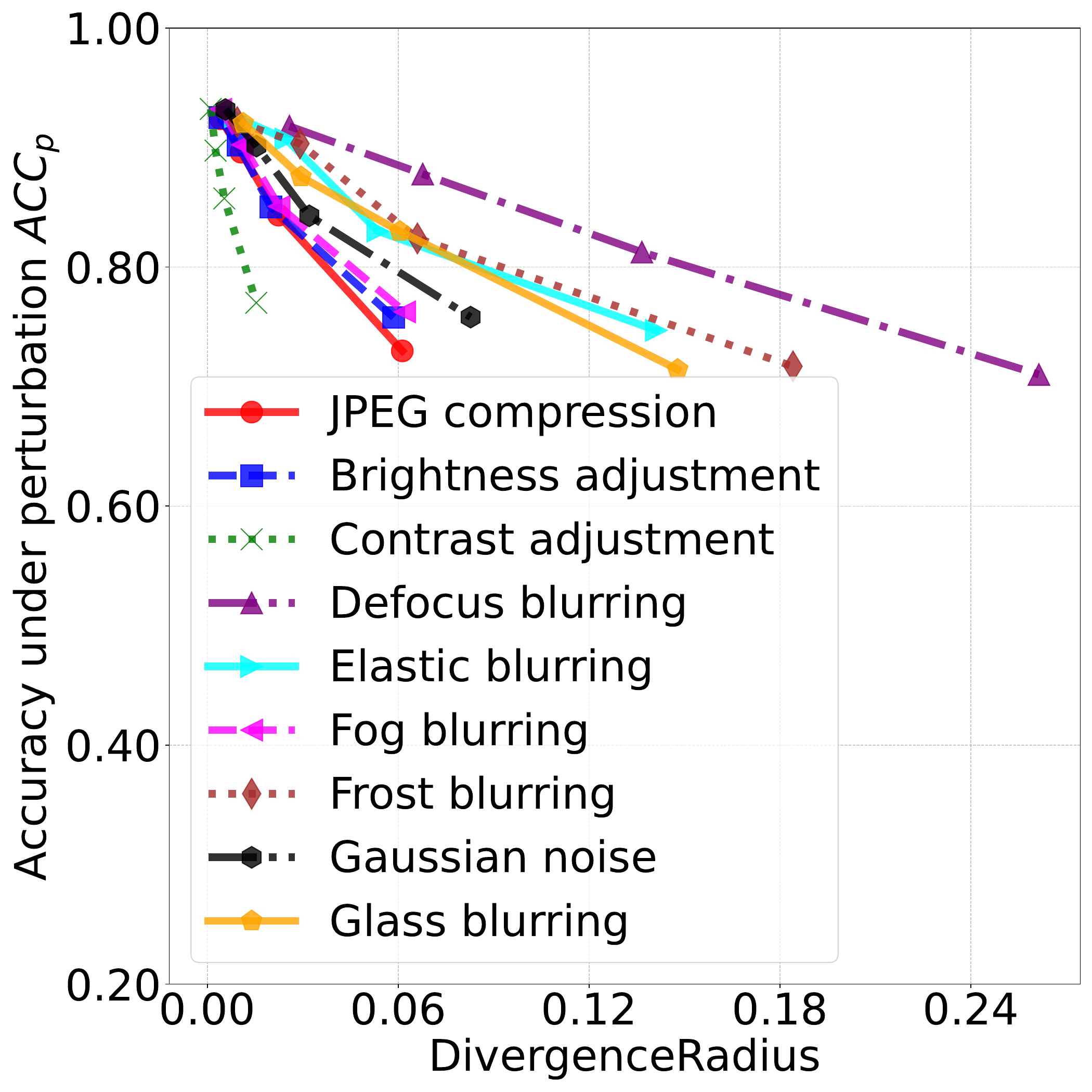}\label{fig:vision_downstream_linear_probe_imagenet}}
    \caption{Accuracy under perturbation $ACC_p$ vs.~\alg{}  of ImageNet testing images for (a) zero-shot classification and (b) linear-probe classification when different perturbation functions are used.   Zero-shot classification is based on the CLIP ViT-L/14 foundation model and linear-probe classification is based on the DINO v2 ViT-g/14 foundation model.} 
    \label{fig:vision_downstream_imagenet}
    \vspace{-4mm}
\end{figure*}

\begin{table}[t!]
\centering
\fontsize{7}{11}\selectfont 
\caption{ $ACC$ and average $ACC_p$ of ImageNet's testing images for two downstream classifiers. Zero-shot classification is based on the CLIP ViT-L/14 foundation model and linear-probe classification is based on the DINO v2 ViT-g/14 foundation model.}
\label{tab:acc_p_imagenet}
\begin{tabular}{|c|c|c|c|}
\hline
\multicolumn{2}{|c|}{ }  &    \makecell{Zero-shot\\Classification} &   \makecell{Linear-probe\\Classification} \\ \hline
\multicolumn{2}{|c|}{$ACC$ (\%)} & 68.4    &     86.6 \\ \hline
\multirow{9}{*}{\makecell{$ACC_p$\\(\%)}} &  JPEG compression    & 65.1 ($\downarrow$ 3.3)    &     84.6 ($\downarrow$ 2.0)   \\ \cline{2-4}
&  Brightness adjustment  &  66.3 ($\downarrow$ 2.1)    &  85.7 ($\downarrow$ 0.9)  \\ \cline{2-4}
&  Contrast adjustment  &  67.3 ($\downarrow$ 1.1)   &  86.4 ($\downarrow$ 0.2)  \\ \cline{2-4}
&  Defocus blurring  &  60.1 ($\downarrow$ 8.3)   &  82.2 ($\downarrow$ 4.4)  \\ \cline{2-4}
&  Elastic blurring  &  63.5 ($\downarrow$ 4.9)   &  85.0 ($\downarrow$ 1.6)  \\ \cline{2-4}
&  Fog blurring  &  66.0 ($\downarrow$ 2.4)   &  86.1 ($\downarrow$ 0.5)  \\ \cline{2-4}
&  Frost blurring  &  61.4 ($\downarrow$ 7.0)   &  83.6 ($\downarrow$ 3.0)  \\ \cline{2-4}
&  Gaussian  noise  &  66.0 ($\downarrow$ 2.4)   &  85.7 ($\downarrow$ 0.9)  \\ \cline{2-4}
&  Glass blurring  &  59.0 ($\downarrow$ 9.4)   &  82.9 ($\downarrow$ 3.7)  \\ \hline
\end{tabular}
\vspace{-2mm}
\end{table}

\myparatight{Evaluation metrics ($ACC$, $ACC_p$, {$RMSE$, and $RMSE_p$})} For a downstream classifier $g \circ f$, where $f$ is a foundation model and $g$ a classifier head, \emph{accuracy} ($ACC$) is the fraction of correctly predicted labels. \emph{Accuracy under perturbation} ($ACC_p$) is calculated for each perturbed image. For depth estimation heads, we use $RMSE$ and $RMSE_p$ (definitions in the Appendix).

\myparatight{Parameter settings} We evaluate the classifiers with the highest accuracy on zero-shot or linear-probe tasks. For zero-shot classification, we use CLIP ViT-L/14; for linear-probing, we use DINO v2 ViT-g/14, training a one-layer classifier for Food101.

\subsection{Experimental Results}
Due to space limits, we discuss results for downstream classifiers and defer depth estimation results to the Appendix.

\myparatight{$ACC$ vs. $ACC_p$}
Table~\ref{tab:acc_p_imagenet} shows $ACC$ and average $ACC_p$ under different perturbations for ImageNet; results for Food101 are in Table~\ref{tab:acc_p_food101}. Common perturbations reduce $ACC_p$ compared to $ACC$, indicating degraded accuracy. For example, Glass blurring reduces ImageNet zero-shot accuracy by 9.4\%, and Frost blurring reduces Food101 linear-probe accuracy by 3.3\%. The accuracy drop correlates with the average robustness value across perturbations. For instance, in Figure~\ref{fig:divergence_radius_vision_imagenet}, both CLIP ViT-L/14 and DINO v2 ViT-g/14 exhibit the highest \alg{} under Defocus blurring, corresponding to the largest $ACC_p$ drop in Table~\ref{tab:acc_p_imagenet}. This suggests that higher \alg{} values indicate greater embedding diversity, increasing the chance of misclassification.

\begin{table}[t!]
\centering
\fontsize{8pt}{10pt}\selectfont
\caption{$ACC$ and average $ACC_p$ of Food101's testing images for two downstream classifiers. Zero-shot classification is based on the CLIP ViT-L/14 foundation model and linear-probe classification is based on the DINO v2 ViT-g/14 foundation model.}
\label{tab:acc_p_food101}
\begin{tabular}{|c|c|c|c|}
\hline
\multicolumn{2}{|c|}{ }  &    \makecell{Zero-shot\\Classification} &   \makecell{Linear-probe\\Classification} \\ \hline
\multicolumn{2}{|c|}{$ACC$ (\%)} &  92.0   &   94.1   \\ \hline
\multirow{9}{*}{\makecell{$ACC_p$\\(\%)}} &  JPEG compression    &  87.4 ($\downarrow$ 4.6)    &  91.2 ($\downarrow$ 2.9)   \\ \cline{2-4}
&  Brightness adjustment  &   88.4 ($\downarrow$ 3.6)    &  90.8 ($\downarrow$ 3.3)  \\ \cline{2-4}
&  Contrast adjustment  &   90.9 ($\downarrow$ 1.1)   &  91.8 ($\downarrow$ 2.3)  \\ \cline{2-4}
&  Defocus blurring  &   81.7 ($\downarrow$ 10.3)   &  87.0 ($\downarrow$ 7.1)  \\ \cline{2-4}
&  Elastic blurring  &   87.2 ($\downarrow$ 4.8)   &  89.6 ($\downarrow$ 4.5)  \\ \cline{2-4}
&  Fog  blurring&   88.3 ($\downarrow$ 3.7)   &  90.9 ($\downarrow$ 3.2)  \\ \cline{2-4}
&  Frost  blurring&   79.3 ($\downarrow$ 12.7)   &  85.0 ($\downarrow$ 9.1)  \\ \cline{2-4}
&  Gaussian noise  &   86.6 ($\downarrow$ 5.4)   &  89.7 ($\downarrow$ 4.4)  \\ \cline{2-4}
&  Glass blurring  &   82.9 ($\downarrow$ 9.1)   &  87.3 ($\downarrow$ 6.8)  \\ \hline
\end{tabular}
\vspace{-2mm}
\end{table}

\myparatight{$ACC_p$ vs. robustness value} Figure~\ref{fig:vision_downstream_imagenet} shows the relationship between $ACC_p$ and \alg{} for ImageNet images under various perturbations; results for Food101 are in Figure~\ref{fig:vision_downstream_food101}. Across datasets and perturbations, $ACC_p$ decreases approximately linearly as \alg{} or cosine similarity increases, indicating that greater embedding diversity leads to lower accuracy.

\begin{figure}[t!]
    \centering
    \subfloat[Zero-shot classification]{\includegraphics[width=0.24\textwidth]{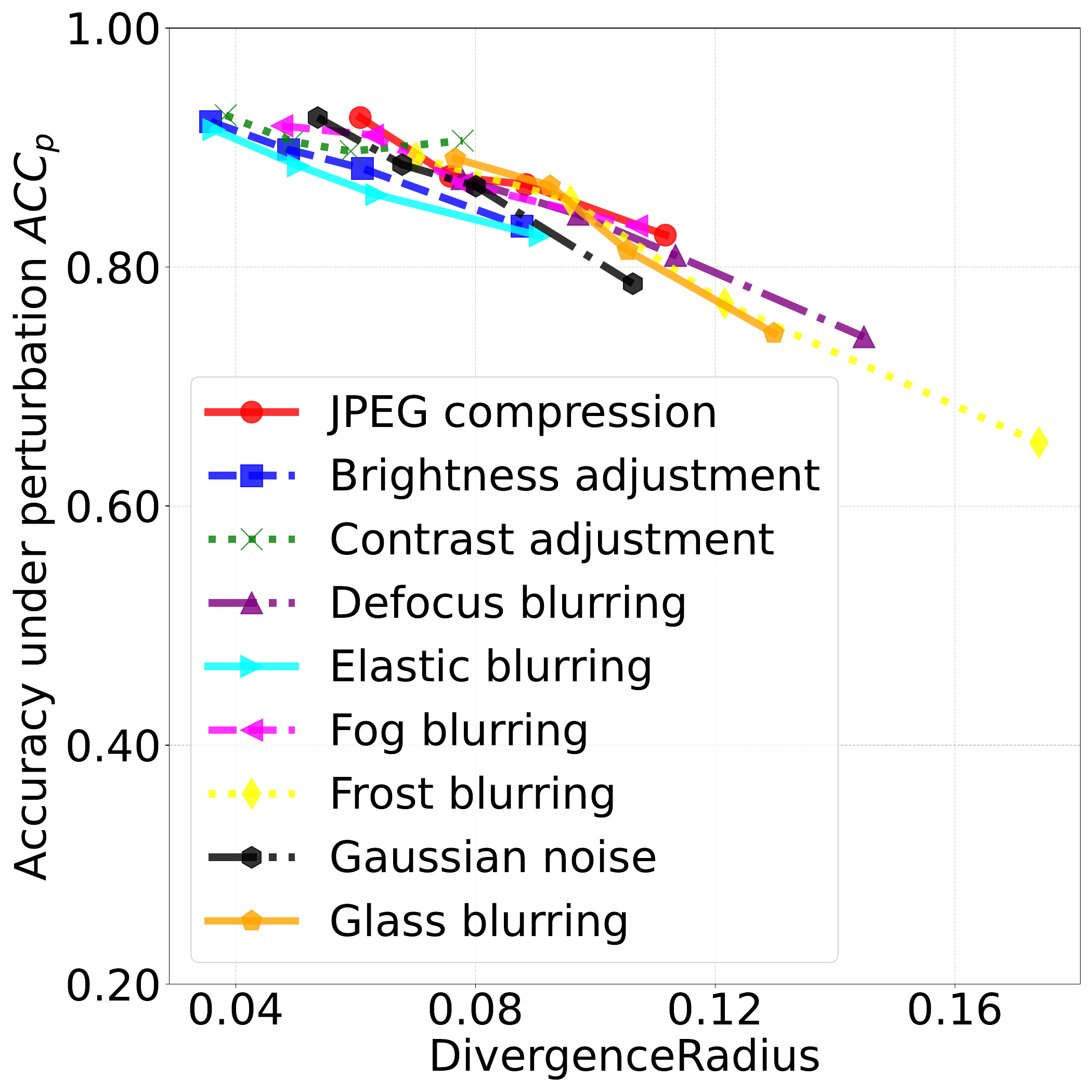} \label{fig:vision_downstream_zero_shot_food101}}
    \subfloat[Linear-probe classification]{\includegraphics[width=0.24\textwidth]{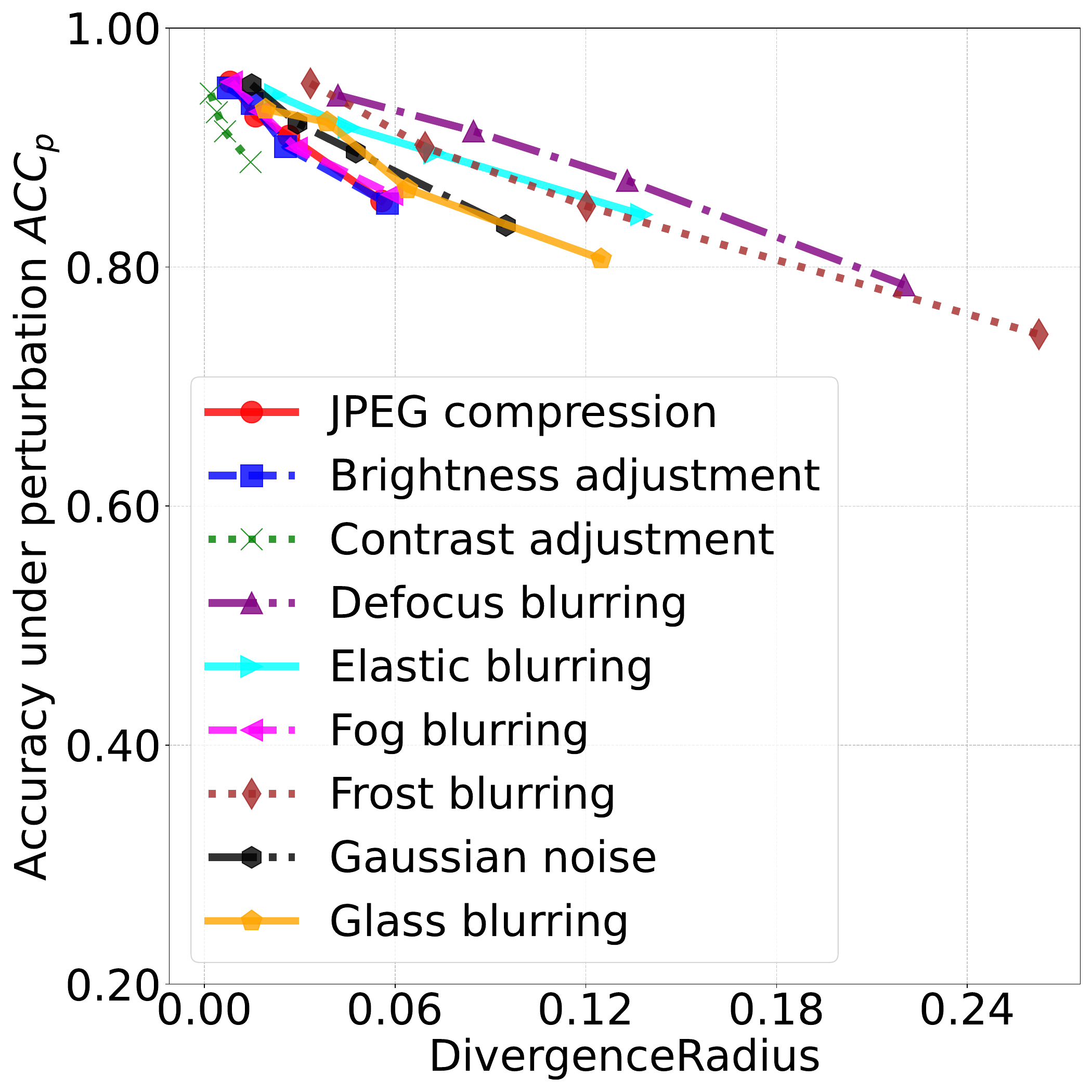}\label{fig:vision_downstream_linear_probe_food101}}
    \caption{Accuracy under perturbation $ACC_p$ vs.~\alg{}  of Food101 testing images for (a) zero-shot classification and (b) linear-probe classification when different perturbation functions are used.   Zero-shot classification is based on the CLIP ViT-L/14 foundation model and linear-probe classification is based on the DINO v2 ViT-g/14 foundation model.}
    \label{fig:vision_downstream_food101}
    \vspace{-2mm}
\end{figure}

\myparatight{Predicting $ACC_p$ using robustness values} The linear trend between $ACC_p$ and \alg{} (or cosine similarity) enables accurate $ACC_p$ prediction via linear regression. We divide the dataset, train a linear model on the first half (predicting $ACC_p$ from \alg{} or cosine similarity), and test it on the second half. Figures~\ref{fig:mse_downstream_prediction_imagenet} and~\ref{fig:mse_downstream_prediction_food101} in the Appendix show low mean squared errors, indicating that \alg{} or cosine similarity can reliably predict downstream accuracy under perturbations.

%% file: 4.4_exp_enhancement.tex
\section{Robustness Enhancement}
\subsection{Method}

\myparatight{Robustness and utility goals} We propose a fine-tuning method to enhance a foundation model's robustness against common perturbations. Given a foundation model $f$, our goal is to produce a model $f'$ that meets both a \emph{robustness goal} (increased robustness to perturbations) and a \emph{utility goal} (maintaining performance on unperturbed images).

\myparatight{Formulating an optimization problem} We define two loss terms for robustness and utility. Minimizing these terms involves optimizing a weighted sum. For a set of unlabeled images $\mathcal{D}$, we quantify robustness by the cosine similarity between an image’s embedding and those of its perturbed versions. Specifically, the robustness loss term $\mathcal{L}_1$ is: $\mathcal{L}_1 = -\frac{1}{\|\mathcal{D}\|} \sum\limits_{x \in \mathcal{D}} \operatorname{cos}\left(f'(x), f'(P(x,k))\right)$, where $P$ is a perturbation function and $k$ is sampled from $\mathbb{K}$ each epoch. To ensure utility, we use cosine similarity between embeddings of unperturbed images from $f$ and $f'$. The utility loss term $\mathcal{L}_2$ is: $\mathcal{L}_2 = -\frac{1}{\|\mathcal{D}\|} \sum\limits_{x \in \mathcal{D}} \operatorname{cos}\left(f(x), f'(x)\right)$.

The optimization problem then minimizes the weighted sum of these terms: $\min_{f'} \mathcal{L}_1 + \lambda \mathcal{L}_2$, where $\lambda$ balances robustness and utility.

\myparatight{Solving the optimization problem} We use a gradient-based method, initializing $f'$ as $f$, and iteratively updating $f'$ using the gradient computed over mini-batches from $\mathcal{D}$.

\subsection{Experimental Setup}

We use CLIP ViT-L/14 for zero-shot classification. Fine-tuning data $\mathcal{D}$ includes 20,000 randomly selected ImageNet training images, with JPEG compression as the default perturbation. Models are fine-tuned for 50 epochs with a learning rate of $1\times10^{-5}$ and $\lambda = 1$ unless otherwise stated.

\subsection{Experimental Results}

\myparatight{Achieving the robustness goal} Figures~\ref{fig:enhancement_cosine_similarity_clip} and~\ref{fig:enhancement_divergence_radius_clip} show the average cosine similarity and \alg{} for ImageNet/Food101 testing images before and after fine-tuning. We observe decreases in both metrics, indicating that embedding vectors of perturbed images become closer to those of unperturbed images, thus enhancing robustness.

\begin{figure}[!h]
    \centering
    \subfloat[Cosine similarity]{\label{fig:enhancement_cosine_similarity_clip}\includegraphics[width=0.2\textwidth]{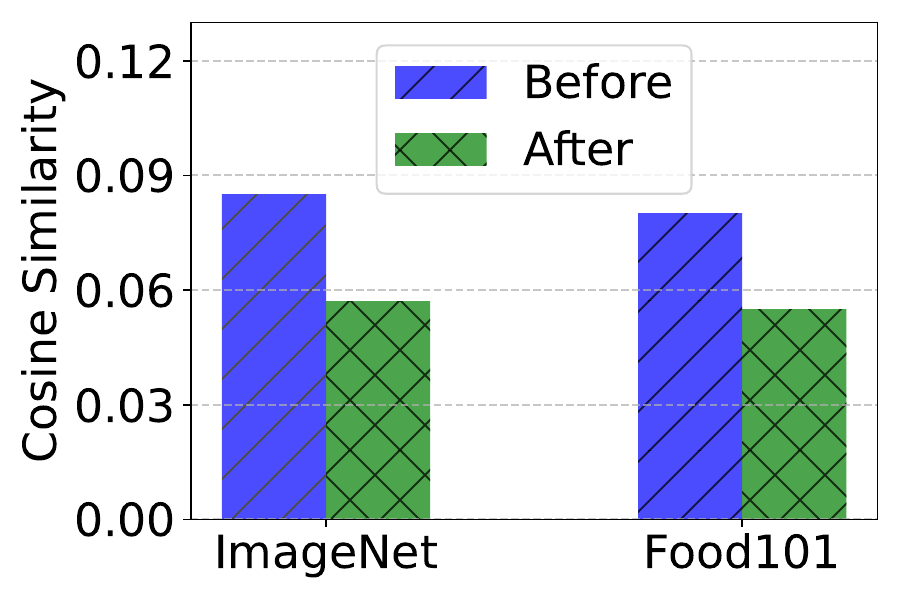}}
    \\
    \subfloat[\alg{}]{\label{fig:enhancement_divergence_radius_clip}\includegraphics[width=0.2\textwidth]{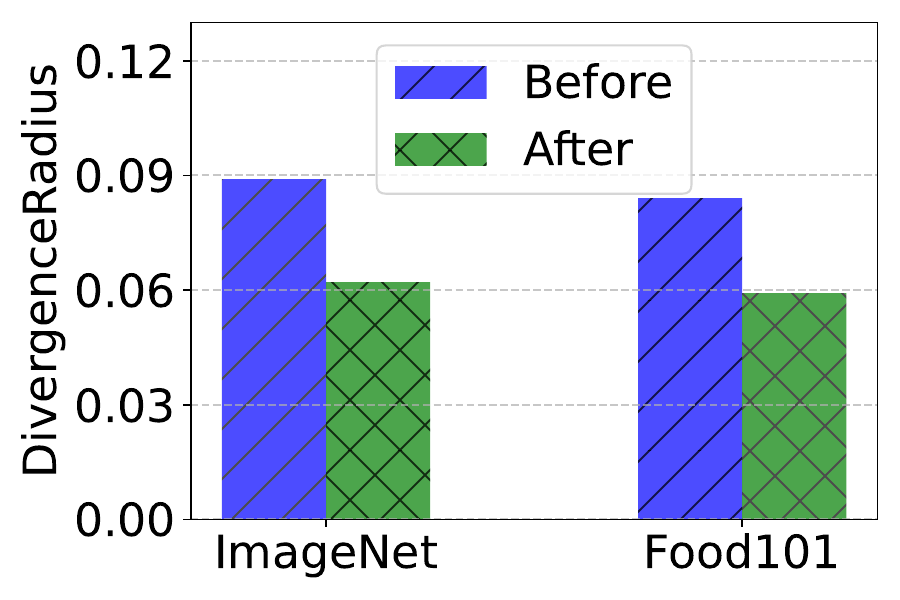}}
    \\
    \subfloat[$ACC$]{\label{fig:enhancement_acc_clip}\includegraphics[width=0.2\textwidth]{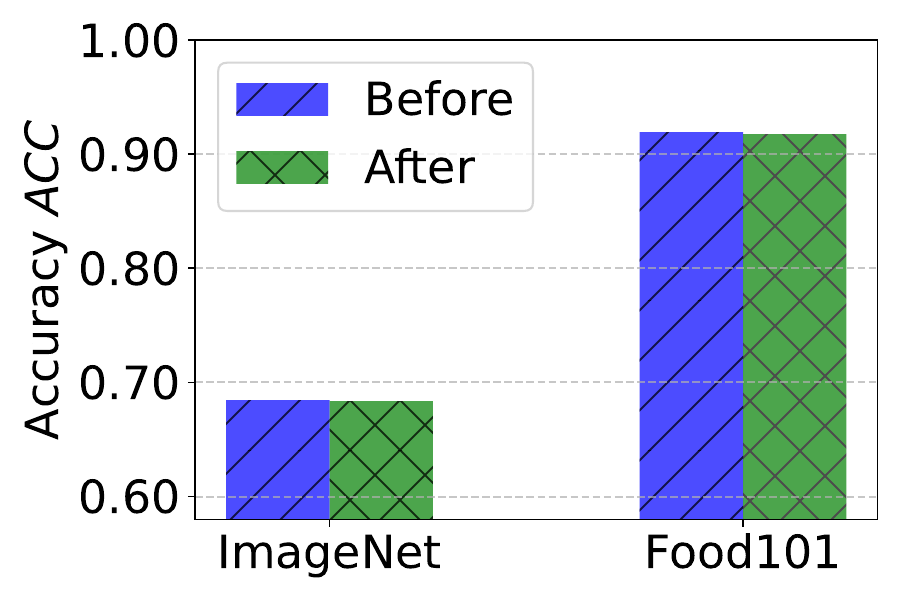}}
    \caption{(a) Average cosine similarity, (b) average~\alg{}, and (c) $ACC$ of zero-shot classification for the two datasets before and after robustness enhancement, where the foundation model is CLIP ViT-L/14.}
    \label{fig:enhancement_results_clip}
\end{figure}

\myparatight{Achieving the utility goal} Figure~\ref{fig:enhancement_acc_clip} shows that zero-shot classification accuracy $ACC$ remains nearly unchanged on unperturbed images after fine-tuning, indicating that the utility goal is met due to the inclusion of $\mathcal{L}_2$.

\myparatight{Impact of $\lambda$} Table~\ref{tab:enhancement_impact_lambda} in the Appendix shows how $\lambda$ affects the robustness-utility trade-off. When $\lambda$ is small (e.g., 0), robustness is achieved but utility declines, whereas large $\lambda$ (e.g., 5) favors utility over robustness.

%% file: 5_related.tex
\section{\label{sec:preliminaries}Related Work}

\myparatight{Foundation models} 
Foundation models~\citep{radford2021learning,oquab2023dinov2,chen2020simple,li2022blip,he2020momentum} are pre-trained neural networks used as general-purpose feature extractors, often for vision tasks. Vision foundation models are typically pre-trained on large datasets of unlabeled images~\citep{oquab2023dinov2,chen2020simple} or image-text pairs~\citep{radford2021learning}. For example, Meta’s DINO v2~\citep{oquab2023dinov2} is trained on 142 million images, while OpenAI’s CLIP~\citep{radford2021learning} is trained on 400 million image-text pairs.

\myparatight{Common perturbations} 
Common perturbations frequently arise in real-world, non-adversarial settings. While adversarial robustness of foundation models has been widely studied, robustness to common perturbations is less explored. Zhu et al.~\citep{zhu2023promptbench} investigated language models’ robustness to prompt perturbations but did not focus on vision models. Hendrycks and Dietterich~\citep{hendrycks2019benchmarking} assessed classifier robustness to common perturbations, and studies~\citep{bai2021transformers,bhojanapalli2021understanding,wang2022can,paul2022vision} compared robustness of vision transformers and CNNs, though they focused on classifiers, not foundation models. Hendrycks et al.~\citep{hendrycks2021many} found that common perturbations could improve out-of-distribution robustness for classifiers, while our work centers on in-distribution robustness.

%% file: 6_conclusion.tex
\section{\label{sec:conclusion}Conclusion and Future Work}

In this work, we introduce three metrics—cosine similarity, Euclidean distance, and \alg{}—to quantify foundation model robustness to common perturbations. Theoretically, we showed that \alg{} meets all five desired mathematical properties, while cosine similarity and Euclidean distance do not satisfy the worst-robustness property. Using these metrics, we empirically evaluated the robustness of industry-scale foundation models and downstream applications, finding limited robustness to common perturbations, which impacts downstream performance. We also demonstrated that fine-tuning models to align perturbed embeddings with the original can enhance robustness without affecting utility. Future work includes extending these analyses to language models and investigating robustness against adversarial perturbations.

%% file: 7_appendix.tex
\clearpage
\appendix

\section{Impact Statements}

This work strengthens the reliability of vision foundation models by systematically evaluating their robustness to common perturbations and proposing fine-tuning strategies to enhance stability without sacrificing utility. By introducing principled robustness metrics and analyzing industry-scale models, our findings highlight vulnerabilities that can impact real-world applications. Addressing these weaknesses improves the robustness of AI systems in practical settings, ensuring consistent performance across diverse conditions. Our work provides a foundation for future research on enhancing model robustness, including extending robustness analyses to language models and defending against adversarial perturbations.

\section{\label{sec:proof_theorem_equivalence}Proof of Theorem~\ref{theorem_equivalence}}

\begin{proof}
Since $||f(x)||_2=1$ for any $x$, we have the following:
\begin{align}
    &||f(P(x,k_1)) - f(P(x,k_2))||_2^2 \nonumber \\
    =&||f(P(x,k_1))||_2^2 + ||f(P(x,k_2))||_2^2 \nonumber\\&- 2 f(P(x,k_1))^T\cdot f(P(x,k_2)) \nonumber \\
    =&2 - 2\cdot cos (f(P(x,k_1)), f(P(x,k_2))),
\end{align}
where $T$ represents transpose and $f(P(x,k_1))^T\cdot f(P(x,k_2))$ is the inner product between two embedding vectors. Therefore, we have:
\begin{align}
    &\mathcal{R}_{ed}(f,x,P,\mathbb{K}) \nonumber \\=& \frac{\max_{k_1,k_2 \in \mathbb{K}} ||f(P(x,k_1)) - f(P(x,k_2))||_2}{2} \nonumber \\
    =&\frac{\max_{k_1,k_2 \in \mathbb{K}} \sqrt{2 - 2 cos (f(P(x,k_1)), f(P(x,k_2)))}}{2} \nonumber \\
    =& \sqrt{\frac{1 - \min_{k_1,k_2 \in \mathbb{K}} cos(f(P(x,k_1)), f(P(x,k_2)))}{2}} \nonumber \\
    =& \sqrt{\mathcal{R}_{cs}(f,x,P,\mathbb{K})}.
\end{align}
Therefore, by Theorem~\ref{theorem_imposibility_cosine_function}, the Euclidean distance-based metric $\mathcal{R}_{ed}$ does not satisfy the worst-robustness property. 
\end{proof}

\section{\label{sec:proof_our_alg}\alg{} Satisfies the Mathematical Properties}
\begin{proof}
\myparatight{Bounded domain} The radius of any ball is non-negative, and thus $\mathcal{R}_{dr}(f,x,P,\mathbb{K})\geq 0$. Moreover, since all embedding vectors outputted by a foundation model lie on  the unit hyper-sphere, the unit ball with radius 1 can enclose all embedding vectors. Therefore, we have $\mathcal{R}_{dr}(f,x,P,\mathbb{K})\leq 1$. Thus, we have $\mathcal{R}_{dr}(f,x,P,\mathbb{K})\in  [0, 1]$.

\myparatight{Monotonicity}  Suppose the perturbation parameter domain $\mathbb{K}_1$ expands to $\mathbb{K}_2$.  If the embedding vectors of the perturbed images corresponding to the expanded perturbation parameters (i.e., parameters in $\mathbb{K}_2$-$\mathbb{K}_1$) fall outside of the minimum enclosing ball for $\mathbb{K}_1$, the minimum enclosing ball for $\mathbb{K}_2$ expands to have a larger radius, i.e., $\mathcal{R}_{dr}(f,x,P,\mathbb{K}_1) < \mathcal{R}_{dr}(f,x,P,\mathbb{K}_2)$; otherwise, the minimum enclosing ball remains the same, i.e., $\mathcal{R}_{dr}(f,x,P,\mathbb{K}_1) = \mathcal{R}_{dr}(f,x,P,\mathbb{K}_2)$. Therefore, we have $\mathcal{R}_{dr}(f,x,P,\mathbb{K}_1) \leq \mathcal{R}_{dr}(f,x,P,\mathbb{K}_2)$ if $\mathbb{K}_1 \subseteq \mathbb{K}_2$, which satisfies the monotonicity property.

\myparatight{Best robustness} If all the perturbed versions of the image have the same embedding vector, the minimum enclosing ball has a radius 0. Thus, we have $\mathcal{R}_{dr}(f,x,P,\mathbb{K})=0$ in such case, achieving the best-robustness property.

\myparatight{Worst robustness} In the worst-robustness case, there exists a subdomain $\mathbb{K}' \subseteq \mathbb{K}$, where $\mathbb{K}'$ consists of discrete values and the corresponding embedding vectors are equally distributed in the embedding space, i.e., $\sum_{k \in \mathbb{K}'} f(P(x,k)) = 0$. Without loss of generality, we assume the subdomain $\mathbb{K}^{\prime}$ contains $n$ discrete values $k_1, \cdots, k_n$. Then, we have the following:
\begin{align}
\sum_{i=1}^{n} f(P(x,k_i)) = \mathbf{0}.
\end{align}
Based on Equation~\ref{divergence_radius}, we have the following equation group:
\begin{align}
\label{equation_group_worst_robustness}
    \left\{
        \begin{array}{l}
            ||f(P(x, k_1))||_2^2 - 2f^T(P(x, k_1))\cdot c + ||c||_2^2 \leq r^2 \\
            ||f(P(x, k_2))||_2^2 - 2f^T(P(x, k_2))\cdot c + ||c||_2^2 \leq r^2  \\
            \cdots \\
            ||f(P(x, k_n))||_2^2 - 2f^T(P(x, k_n))\cdot c + ||c||_2^2 \leq r^2 
        \end{array}
    \right., 
\end{align}
where $T$ indicates transpose of a vector.  
Since the embedding vectors lie on the unit hyper-sphere, we have $||f(P(x,k_i))||_2^2=1$ for $i=1,\cdots,n$. After summing up the $n$ inequalities in the equation group~\ref{equation_group_worst_robustness}, we have the following:
\begin{align}
& n \cdot 1 -2 (\sum_{i=1}^n f^T(P(x,k_i)) )\cdot c + n||c||_2^2 \leq n r^2, \nonumber\\
\Leftrightarrow & r^2 \geq 1+||c||_2^2 \geq 1, 
\end{align}
where $r=1$ when $c=\mathbf{0}$. Since  $\mathcal{R}_{dr}(f,x,P,\mathbb{K})$ is the smallest $r$, we have $\mathcal{R}_{dr}(f,x,P,\mathbb{K})=1$, achieving the worst-robustness property.

\myparatight{Rotational invariance} Suppose the rotation matrix is $M$.  We observe that $||M \cdot f(P(x,k))  - M\cdot c||_2 = ||M \cdot (f(P(x,k))  - c)||_2 = ||f(P(x,k))  - c||_2 \leq \mathcal{R}_{dr}(f,x,P,\mathbb{K})$ for any $k \in \mathbb{K}$. Therefore, we have $\mathcal{R}_{dr}(M\cdot f,x,P,\mathbb{K}) \leq \mathcal{R}_{dr}(f,x,P,\mathbb{K})$.  We denote the inverse matrix of the rotation matrix as $M^{-1}$.  We have $\mathcal{R}_{dr}(f,x,P,\mathbb{K}) = \mathcal{R}_{dr}(M^{-1} \cdot M\cdot f,x,P,\mathbb{K}) \leq \mathcal{R}_{dr}(M \cdot f,x,P,\mathbb{K})$. Thus, we have $\mathcal{R}_{dr}(M\cdot f,x,P,\mathbb{K}) = \mathcal{R}_{dr}(f,x,P,\mathbb{K})$,  achieving the rotation-invariance property. Essentially,  the embedding vectors are enclosed in a minimum ball with center $c$ and radius $\mathcal{R}_{dr}(f,x,P,\mathbb{K})$ before rotation, and the embedding vectors are enclosed in a minimum ball with center $Mc$ and the same radius after rotation.  
\end{proof}

\begin{table}[t!]
\fontsize{8}{11}\selectfont 
\caption{Benchmark datasets.}
\label{tab:dataset}
\centering
\begin{tabular}{|c|c|c|c|}
\hline
Dataset & ImageNet & Food101 & NYU-Depth V2\\ \hline
\makecell{\#Training\\images} & 1,281,167 & 75,750 &  50,688\\ \hline
\makecell{\#Testing\\images} & 50,000 & 25,250 & 654 \\ \hline
\#Classes & 1,000 & 101 & - \\ \hline
\end{tabular}
\end{table}

\begin{table*}[!h]
\small
\centering
\caption{Foundation models.}
\label{tab:models}
\begin{tabular}{|c|c|c|c|c|}
\hline
Foundation Model & Model Family & Pre-training Algorithm & Architecture & \makecell{\# Parameters (M)} \\ 
\hline
CLIP ViT-B/16 & CLIP & Multi-modal self-supervised learning & Vision Transformer & 86 \\ \hline
CLIP ViT-L/14 & CLIP & Multi-modal self-supervised learning & Vision Transformer & 304 \\ \hline
CLIP RN50 & CLIP & Multi-modal self-supervised learning & ResNet & 38 \\ \hline
CLIP RN50$\times$64 & CLIP & Multi-modal self-supervised learning & ResNet & 420 \\ \hline
DINO v2 ViT-L/14 & DINO v2 & Self-supervised learning & Vision Transformer & 304 \\ \hline
DINO v2 ViT-g/14 & DINO v2 & Self-supervised learning & Vision Transformer & 1136 \\ \hline
\end{tabular}
\end{table*}

\section{\label{sec:applications}Details of Zero-shot Classification, Linear-probe Classification, and Depth Estimation}
\begin{itemize}
\item {\bf Zero-shot classification:} Jointly pre-trained image and text models like CLIP can perform zero-shot classification without labeled training data. Given an image $x$ and a set of text labels $\mathcal{Y}$, CLIP represents $x$ as an image embedding and the labels in $\mathcal{Y}$ as text embeddings, selecting the label most similar to $x$'s embedding. For instance, CLIP’s ViT-L/14 achieves 68.38\% accuracy on zero-shot ImageNet classification.

\item {\bf Linear-probe classification:} Vision foundation models can be adapted for linear-probe classification using labeled data to fine-tune a simple classification head. With foundation model parameters frozen, a feed-forward classification head is added and trained on labeled data. This approach achieves high accuracy with minimal additional parameters; for example, DINO-v2 reaches 86.6\% accuracy on ImageNet with a single-layer classification head.

\item {\bf Depth estimation:} Vision foundation models can also support depth estimation, predicting per-pixel depth in an image. A convolutional decoder head (with batch normalization, a $1\times1$ convolution, ReLU activation, and a sigmoid function) is appended to the frozen foundation model and fine-tuned on NYU-Depth V2 training data. For example, DINO-v2 achieves a root mean squared error of 0.35 on NYU-Depth V2.
\end{itemize}

\section{\label{sec:metrics_downstream_depth_estimation}Evaluation Metrics $RMSE$ and $RMSE_p$}
Given a downstream depth estimation model $g \circ f$, where $f$ is a foundation model and $g$ is a depth estimation head built on top of $f$. The  \emph{root mean squared error} (denoted as $RMSE$) for a testing dataset is the average root mean squared error between model's predicted depth map and the ground-truth depth map. We evaluate common perturbations to images, and thus we also consider \emph{root mean squared error under perturbation} (denoted as $RMSE_p$) for each testing image. 
Specifically, given a testing image $x$ with ground-truth depth map $y$, we evaluate the root mean squared error of a downstream depth estimation model $g \circ f$ for $x$ under a perturbation function $P$.  Without loss of generality, we assume the perturbation parameter domain $\mathbb{K}$ contains $m$ discrete values, where $m$ discrete values can be sampled via random sampling or equally-spaced sampling if $\mathbb{K}$ contains more than $m$ discrete values or is continuous. Formally, the $RMSE$ of the downstream classifier $g \circ f$ for $x$ under a perturbation function $P$ with a perturbation parameter domain $\mathbb{K}$ is defined as below:
\begin{align}
&RMSE_p(x, g, f, P, \mathbb{K}) \nonumber \\=& \frac{1}{m} \sum_{k\in\mathbb{K}} \sqrt{\frac{1}{n}\sum_{i=1}^{n}(y_{i} - g \circ f (P(x,k))_{i})^{2}},
\end{align}
where $g \circ f (P(x,k))$ is the predicted depth map for image $P(x,k)$ and $i$ is the index of each pixel in a depth map. Intuitively, the smaller $RMSE$ indicates the more accurate depth estimation.

\section{\label{sec:results_downstream_depth_estimation}Parameter Settings and Experimental Results of Depth Estimation}
\myparatight{Parameter settings}
We consider the downstream depth estimation model that achieves the lowest root mean squared error. Specifically, we use DINO v2 ViT-g/14 foundation model and the one-layer depth estimation head publicly released together with DINO v2 for NYU-Depth V2.

\myparatight{Root mean squared error $RMSE$ vs. root mean squared error under perturbation $RMSE_p$}
Table~\ref{tab:rmse_p_nyu} shows the $RMSE$ and average $RMSE_p$ under different perturbation functions of NYU-Depth V2 testing images for depth estimation model. First, we find that common perturbations degrade RMSE of downstream depth estimation models, i.e., average $RMSE_p$ is larger than $RMSE$. For instance, Frost blurring increases the $RMSE$ of depth estimation by 0.12.

Second, the increased $RMSE$ caused by a perturbation function is aligned with the average~\alg{}  under the perturbation function. For example, in Figure~\ref{fig:divergence_radius_vision_nyu}, DINO v2 ViT-g/14 have the largest average \alg{}  under Frost blurring compared to other perturbation functions. Correspondingly, in Table~\ref{tab:rmse_p_nyu},  average $RMSE_p$ under Frost blurring increases the most (0.12). This occurs because a higher average \alg{} indicates more diversity in the embedding vectors of perturbed images, making it more likely that the downstream depth estimation models predict inaccurate depth maps for them. 

\begin{table}[t!]
\centering
\fontsize{8pt}{10pt}\selectfont
\caption{$RMSE$ and average $RMSE_p$ of NYU-Depth V2's testing images for depth estimation when using DINO v2 ViT-g/14 foundation model.}
\label{tab:rmse_p_nyu}
\begin{tabular}{|c|c|c|}
\hline
\multicolumn{2}{|c|}{$RMSE$} & 0.35 \\ \hline
\multirow{9}{*}{\makecell{$RMSE_p$}} & JPEG compression & 0.37 ($\uparrow$ 0.02) \\ \cline{2-3}
& Brightness adjustment & 0.36 ($\uparrow$ 0.01) \\ \cline{2-3}
& Contrast adjustment & 0.35 ($\uparrow$ 0.00) \\ \cline{2-3}
& Defocus blurring & 0.39 ($\uparrow$ 0.04) \\ \cline{2-3}
& Elastic blurring & 0.38 ($\uparrow$ 0.03) \\ \cline{2-3}
& Fog blurring& 0.38 ($\uparrow$ 0.03) \\ \cline{2-3}
& Frost blurring& 0.47 ($\uparrow$ 0.12) \\ \cline{2-3}
& Gaussian noise & 0.37 ($\uparrow$ 0.02) \\ \cline{2-3}
& Glass blurring & 0.38 ($\uparrow$ 0.03) \\ \hline
\end{tabular}
\end{table}

\myparatight{Root mean squared error under perturbation $RMSE_p$ vs. robustness value} Figure~\ref{fig:vision_downstream_depth_estimation_nyud}  shows the relationship between an image's $RMSE_p$ under a perturbation function and the image's corresponding robustness value (cosine similarity or \alg{}) for the NYU-Depth V2 testing images and 9 perturbation functions. Specifically, given a perturbation function, for each testing image, we compute its robustness value and its $RMSE_p$ of a downstream depth estimation model, i.e., we obtain a pair (robustness value, $RMSE_p$). Then, we rank the pairs of all testing images in an increasing order according to the robustness values and divide the ranked pairs into 4 groups equally. For each group of pairs, we calculate the mean robustness value and mean $RMSE_p$, which are shown in Figure~\ref{fig:vision_downstream_depth_estimation_nyud}. Across perturbation functions, we find that $RMSE_p$ roughly increases \emph{linearly} when the robustness metrics increases. $RMSE_p$ increases as the robustness value increases because a larger robustness value indicates more diverse embedding vectors for the perturbed images, leading to less accurate depth estimation. 

\begin{figure}[!h]
    \centering
    \subfloat[Zero-shot classification]{\includegraphics[width=0.24\textwidth]{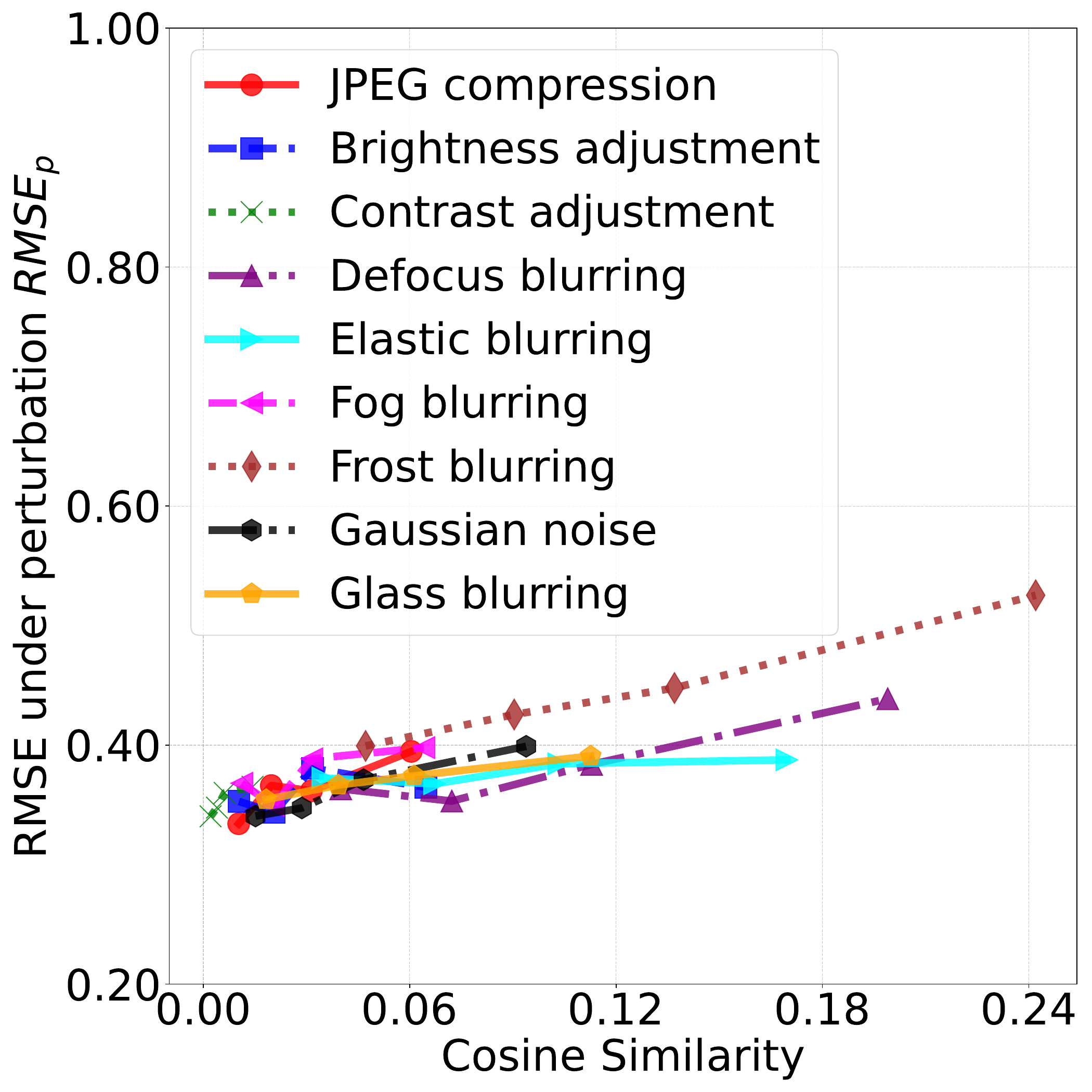} }
    \subfloat[Linear-probe classification]{\includegraphics[width=0.24\textwidth]{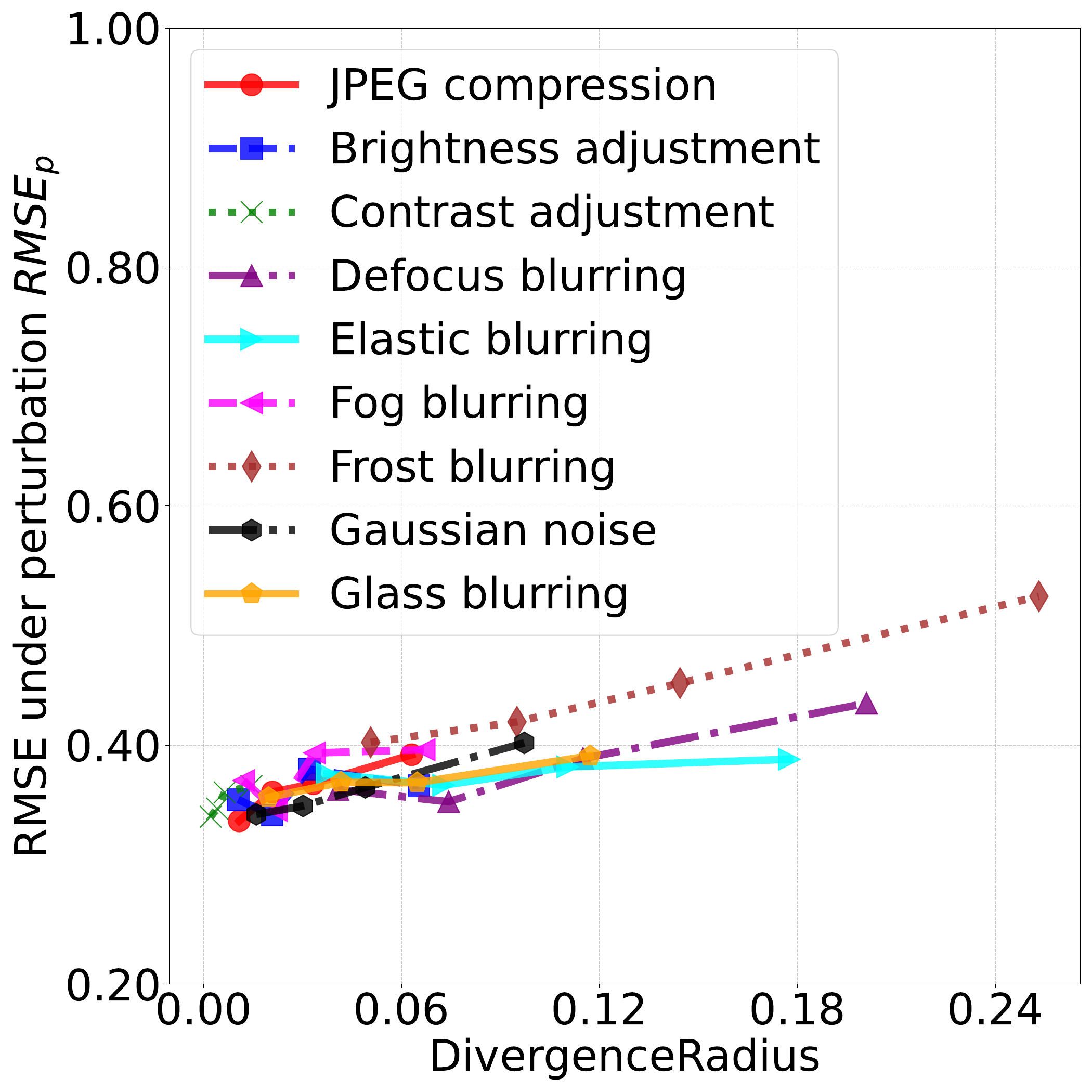}}
    \caption{Root mean squared error under perturbation $RMSE_p$ vs. (a) cosine similarity, (b) \alg{} of NYUd testing images for depth estimation when different perturbation functions are used. We use the DINO v2 ViT-g/14 foundation model.}
    \label{fig:vision_downstream_depth_estimation_nyud}
\end{figure}

\begin{figure}[!h]
    \centering
    \includegraphics[width=0.4\textwidth]{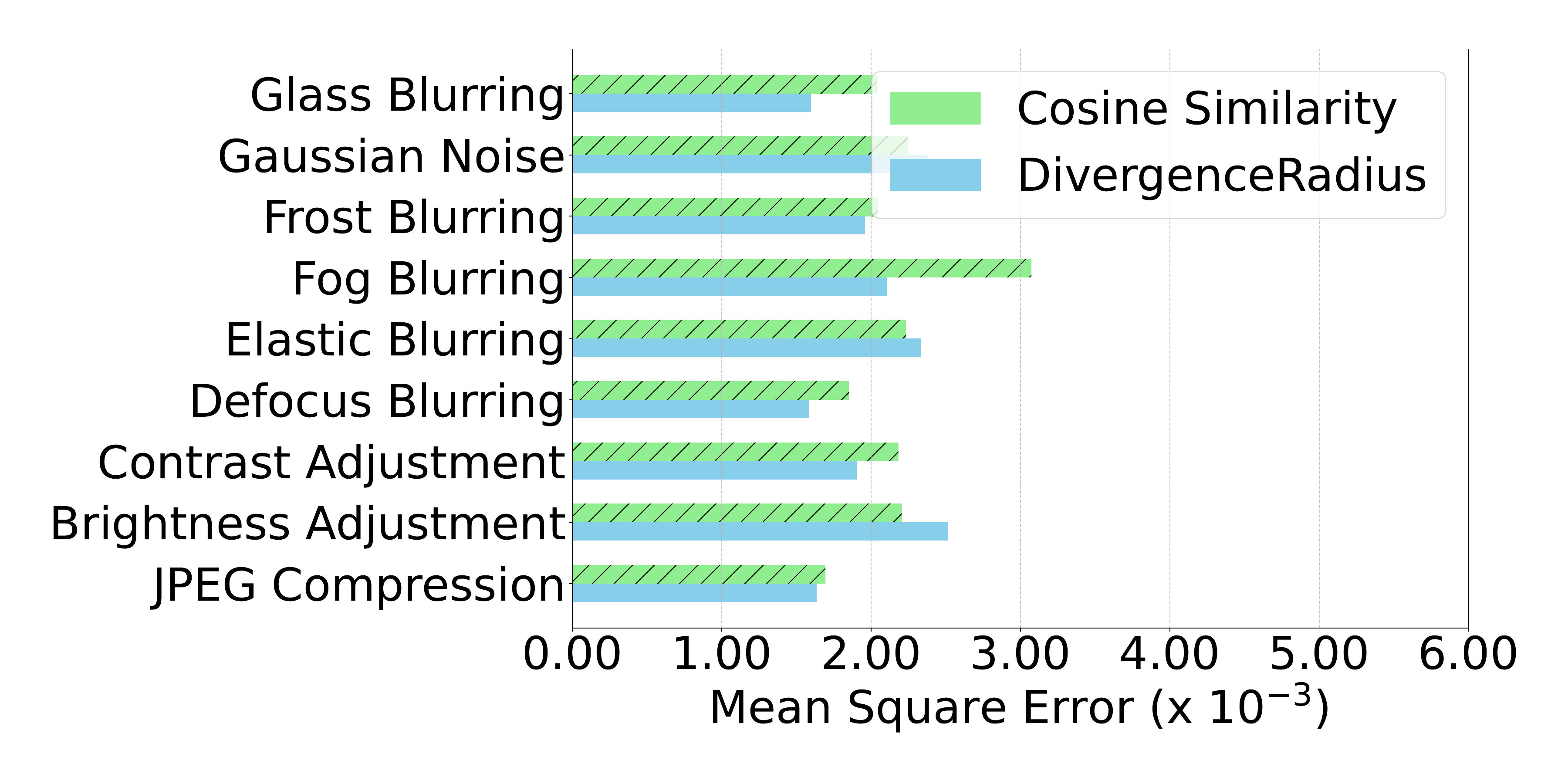}
    \caption{Mean squared error of predicting an image's $RMSE_p$ using its cosine similarity or \alg{} for NYU-Depth V2 under different perturbation functions.}
    \label{fig:mse_downstream_prediction_nyu}
\end{figure}

\begin{table*}[t!]
\fontsize{7}{11}\selectfont 
\caption{Pearson correlations between average $ACC$-average $ACC_p$ and average cosine similarity or \alg{} across the 9 perturbation functions for the two datasets and downstream classifiers.}
\label{tab:pearson_correlations}
\centering
\subfloat[Cosine similarity]{
\begin{tabular}{|c|c|c|}
\hline
 & \makecell{Zero-shot\\Classification} & \makecell{Linear-probe\\Classification}  \\ \hline
ImageNet  & 0.91   & 0.89   \\ \hline
Food101   & 0.94  & 0.92   \\ \hline
\end{tabular}
}
\subfloat[\alg{}]{
\begin{tabular}{|c|c|c|}
\hline
 & \makecell{Zero-shot\\Classification} & \makecell{Linear-probe\\Classification}  \\ \hline
ImageNet  & 0.92   & 0.89   \\ \hline
Food101   & 0.94  & 0.93   \\ \hline
\end{tabular}
}
\end{table*}

\myparatight{Predicting $RMSE_p$ of an image using its robustness value} The linear relationship between $RMSE_p$ and the robustness value indicates that we can predict $RMSE_p$ of an image using its robustness value by a linear regression model. We evaluate the performance of such prediction. Towards this goal, we divide the testing images of a dataset into two halves. We use  the pairs (robustness value, $RMSE_p$) of the first half of testing images to train a linear regression model, which takes a robustness value as input and outputs $RMSE_p$. Then, we evaluate this linear regression model on the pairs (\alg{}, $RMSE_p$) of the second half of the testing images.  Figure~\ref{fig:mse_downstream_prediction_nyu} shows the mean squared errors of the linear regression models under the 9 perturbation functions for the depth estimation model. The mean squared errors are very small, which indicates that an image's \alg{} under a perturbation function can be used to accurately predict a downstream depth estimation model's $RMSE$ for the image when it is perturbed by the perturbation function.

\begin{figure}[!h]
    \centering
    \subfloat[Cosine similarity]{\includegraphics[width=0.40\textwidth]{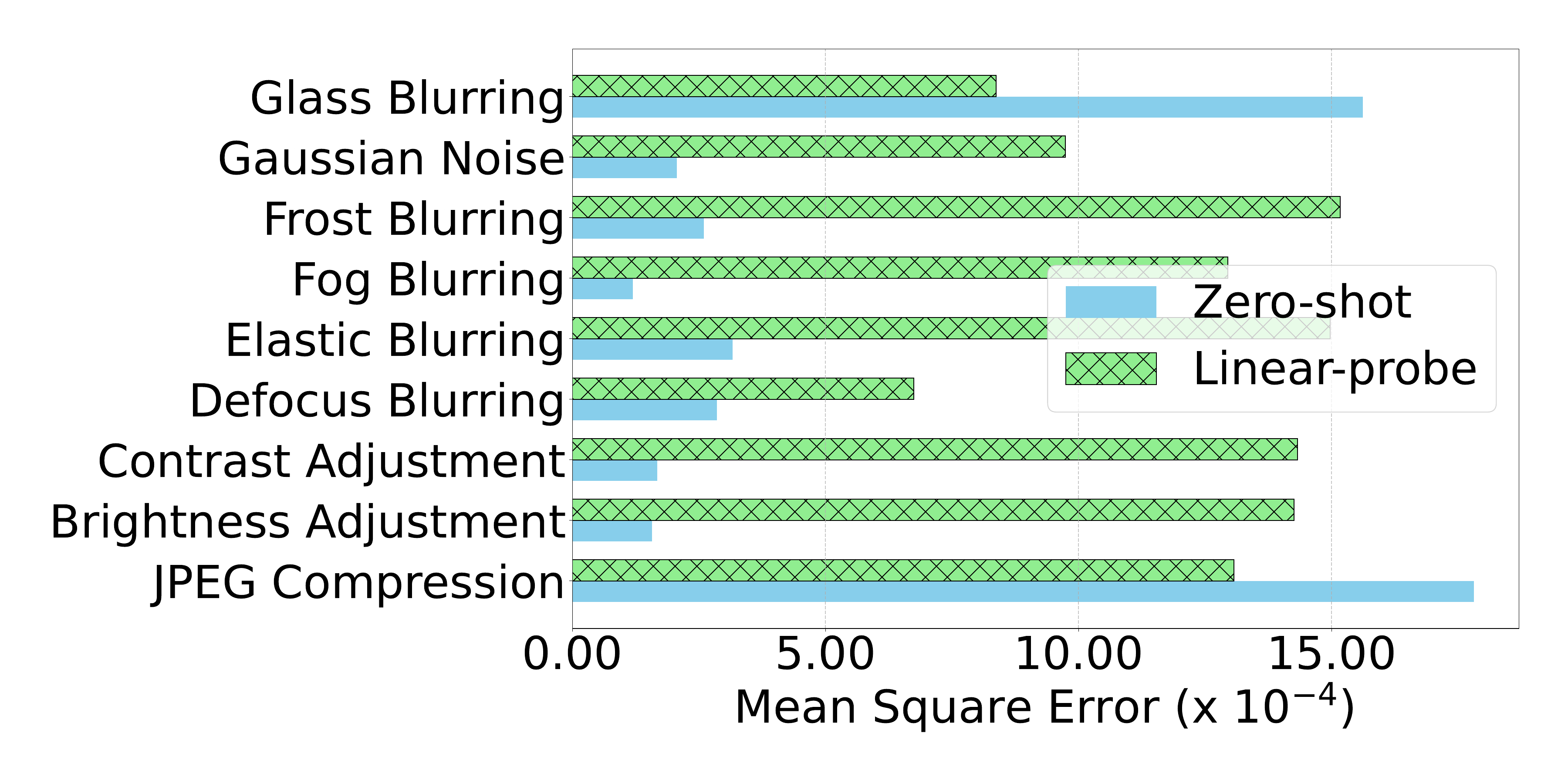}}
    \\
    \subfloat[\alg{}]{\includegraphics[width=0.40\textwidth]{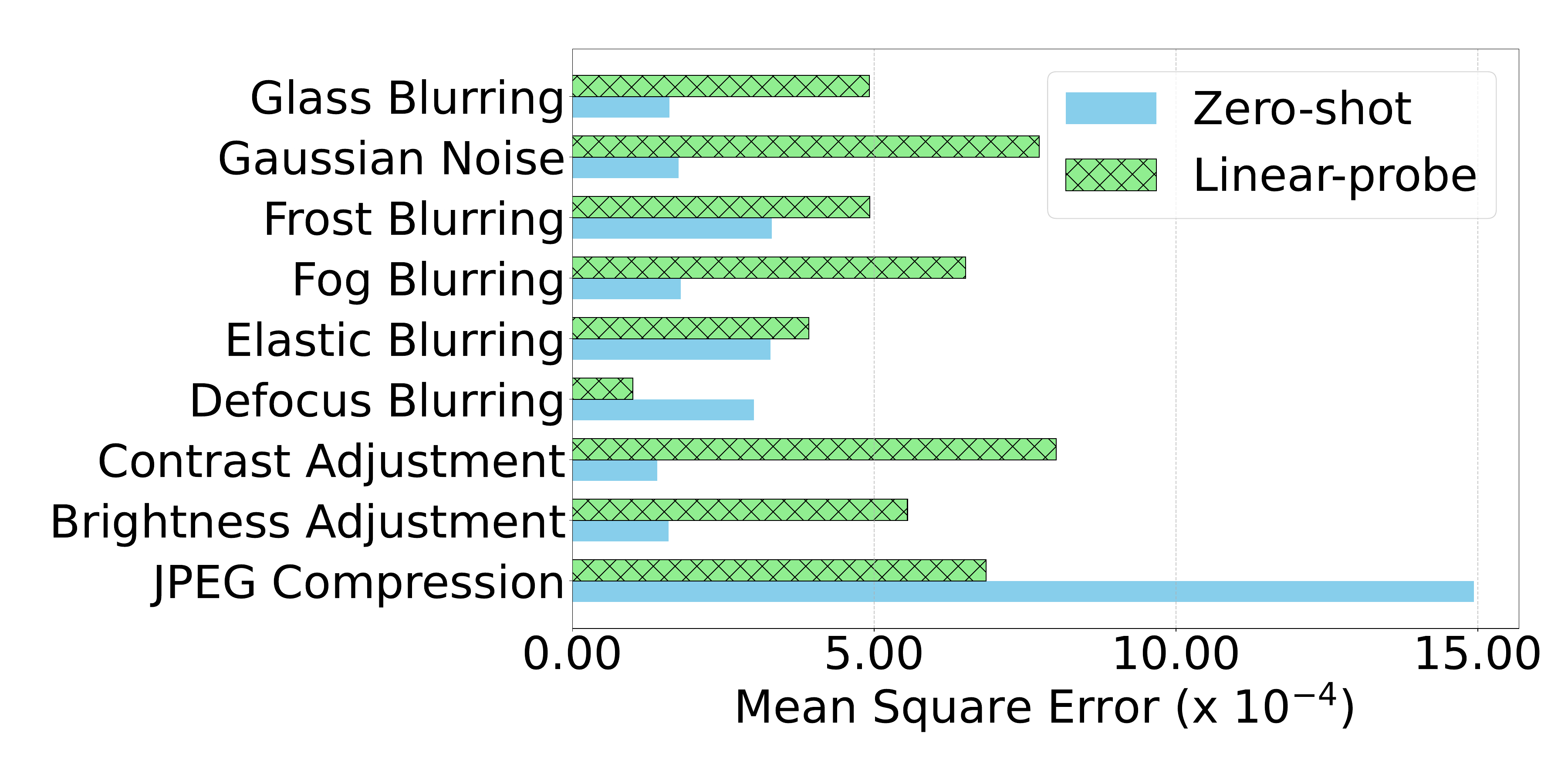}}
    \caption{Mean squared error of predicting an image's $ACC_p$ using its (a) cosine similarity or (b) \alg{} for ImageNet  under different perturbation functions in two downstream classifiers.}
    \label{fig:mse_downstream_prediction_imagenet}
\end{figure}

\begin{figure}[!h]
    \centering
    \subfloat[Cosine similarity]{\includegraphics[width=0.4\textwidth]{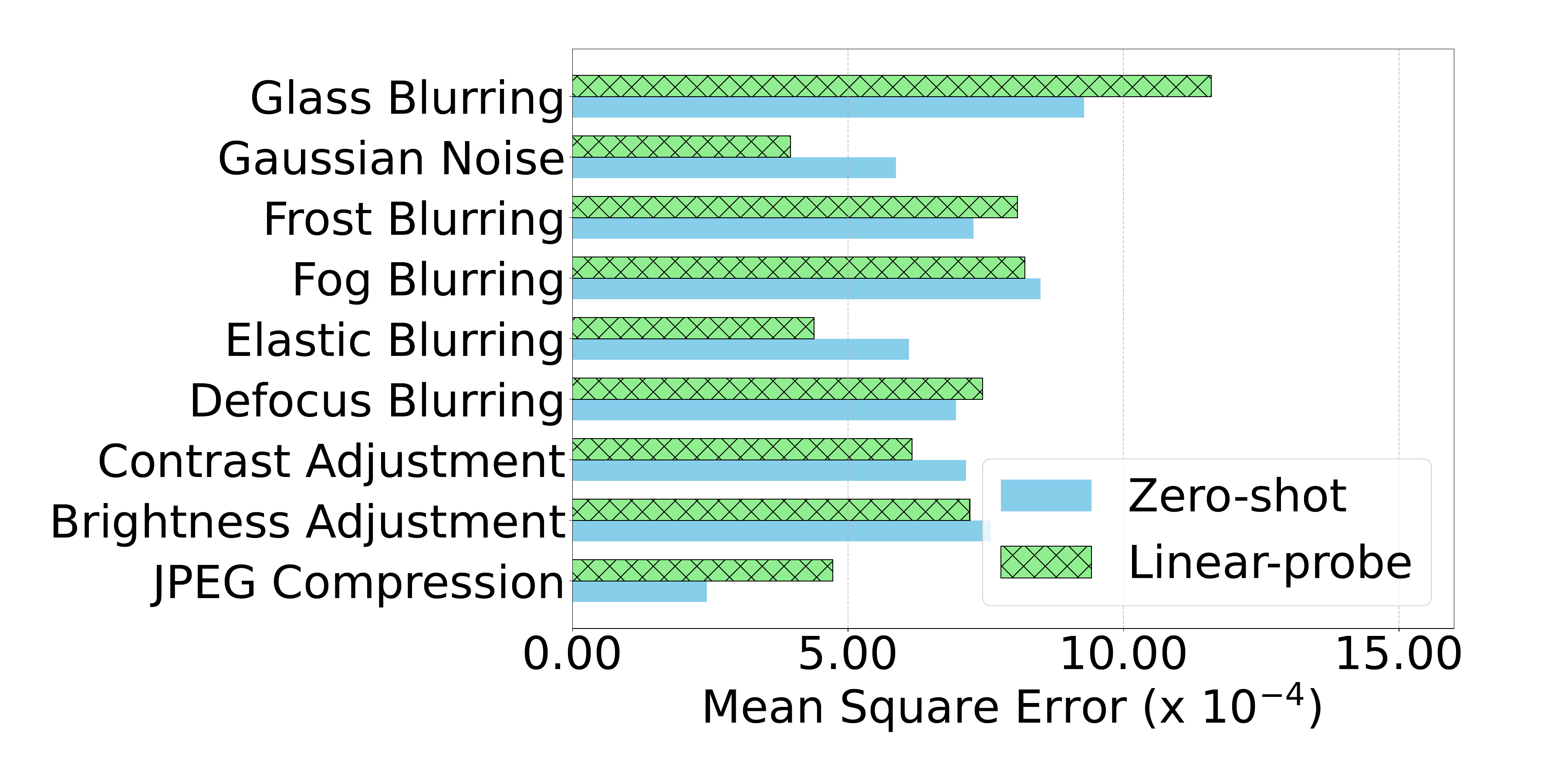}}\\
    \subfloat[\alg{}]{\includegraphics[width=0.4\textwidth]{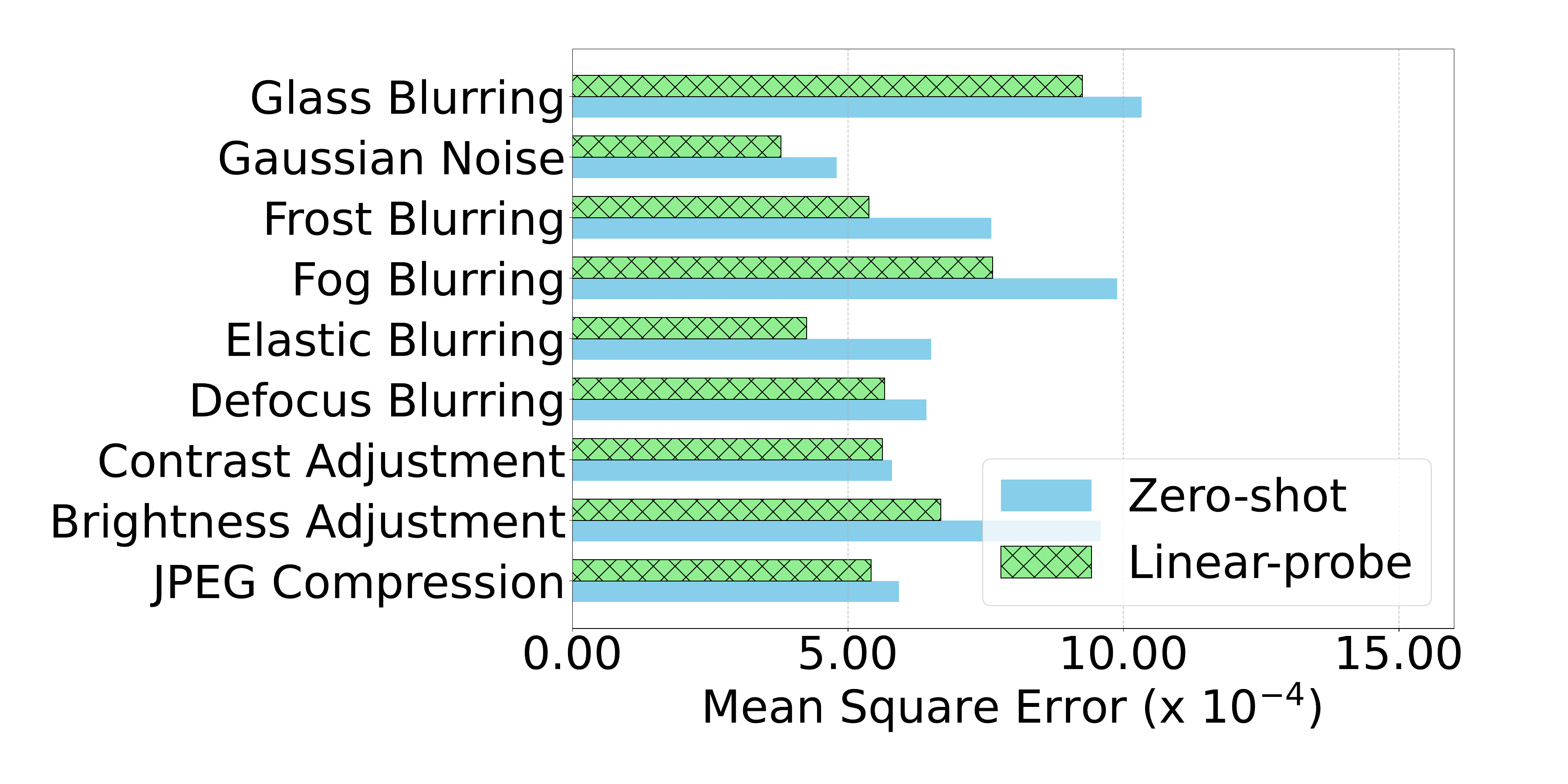}}
    \\
    \caption{Mean squared error of predicting an image's $ACC_p$ using its (a) cosine similarity or (b) \alg{}   for Food101  under different perturbation functions in two downstream classifiers.}
    \label{fig:mse_downstream_prediction_food101}
\end{figure}

\begin{table}[!h]
\fontsize{8pt}{10pt}\selectfont
\centering
\caption{Details of perturbation functions.}
\label{tab_robustness_radius_main}
\begin{tabular}{|>{\centering\arraybackslash}m{1.5cm}|c|c|>{\centering\arraybackslash}m{1.2cm}|}
\hline
\makecell{Perturbation\\Function} & \makecell{Key Perturbation\\Parameter $\mathbb{K}$} & Domain $\mathbb{K}$ & \makecell{Maximally\\Distorted\\ Example}\\
\hline
\makecell{JPEG\\Compression} & Quality factor & $[30,70]$ & \adjustbox{valign=c}{\includegraphics[width=0.05\textwidth]{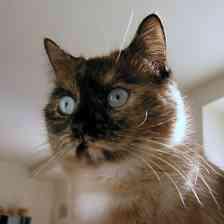}} \\
\hline
\makecell{Brightness\\Adjustment} & \makecell{Value in Hue-Saturation\\-Value space} & $[0.1, 0.5]$ & \adjustbox{valign=c}{\includegraphics[width=0.05\textwidth]{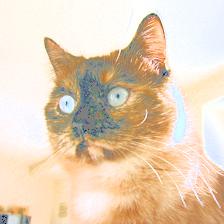}} \\
\hline
\makecell{Contrast\\Adjustment} & \makecell{Amplifying factor of\\deviations from the mean} & $[0.3, 0.7]$ & \adjustbox{valign=c}{\includegraphics[width=0.05\textwidth]{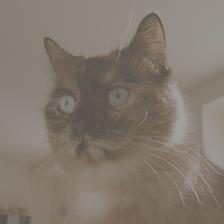}} \\
\hline
\makecell{Defocus\\Blurring} & Bluring disk kernel & $[1, 5]$ & \adjustbox{valign=c}{\includegraphics[width=0.05\textwidth]{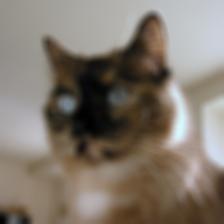}} \\
\hline
\makecell{Elastic\\Blurring} & Scaling factor & $[0.01, 0.05]$ & \adjustbox{valign=c}{\includegraphics[width=0.05\textwidth]{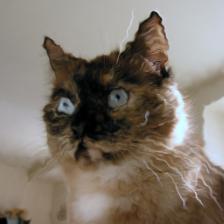}} \\
\hline
\makecell{Fog\\Blurring} & Density of fog & $[0.5, 2.5]$ & \adjustbox{valign=c}{\includegraphics[width=0.05\textwidth]{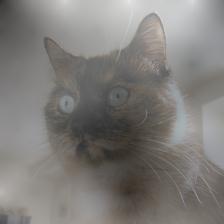}} \\
\hline
\makecell{Frost\\Blurring} & Weight of the frost & $[0.2, 0.6]$ & \adjustbox{valign=c}{\includegraphics[width=0.05\textwidth]{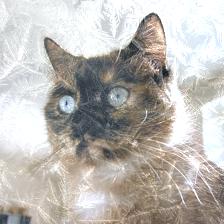}} \\
\hline
\makecell{Gaussian\\Noise} & Standard deviation & $[0.02, 0.10]$ & \adjustbox{valign=c}{\includegraphics[width=0.05\textwidth]{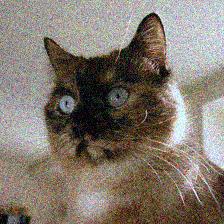}} \\
\hline
\makecell{Glass Blurring} & Standard deviation & $[0.2, 1.0]$ & \adjustbox{valign=c}{\includegraphics[width=0.05\textwidth]{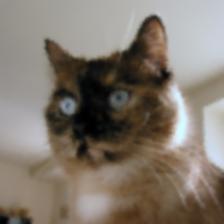}} \\
\hline
\end{tabular}
\end{table}

\begin{figure*}[t!]
    \centering
    \subfloat[JPEG compression]{\includegraphics[width=0.2\textwidth]{figures/bar_graphs/bar_graph_jpeg_mean.pdf}\label{fig:1}}
    \hfill
    \subfloat[Brightness adjustment]{\includegraphics[width=0.2\textwidth]{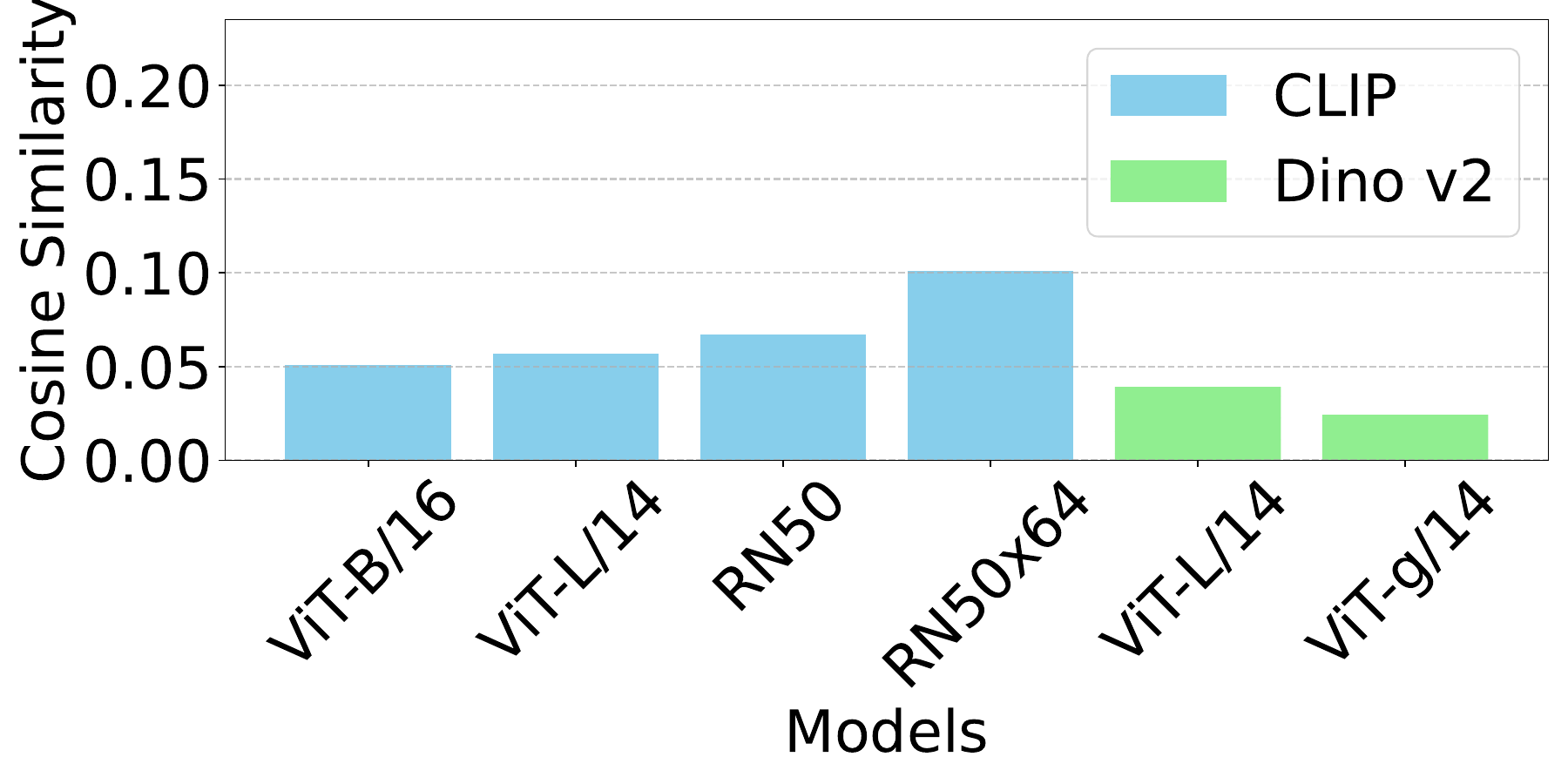}\label{fig:2}}
    \hfill
    \subfloat[Contrast adjustment]{\includegraphics[width=0.2\textwidth]{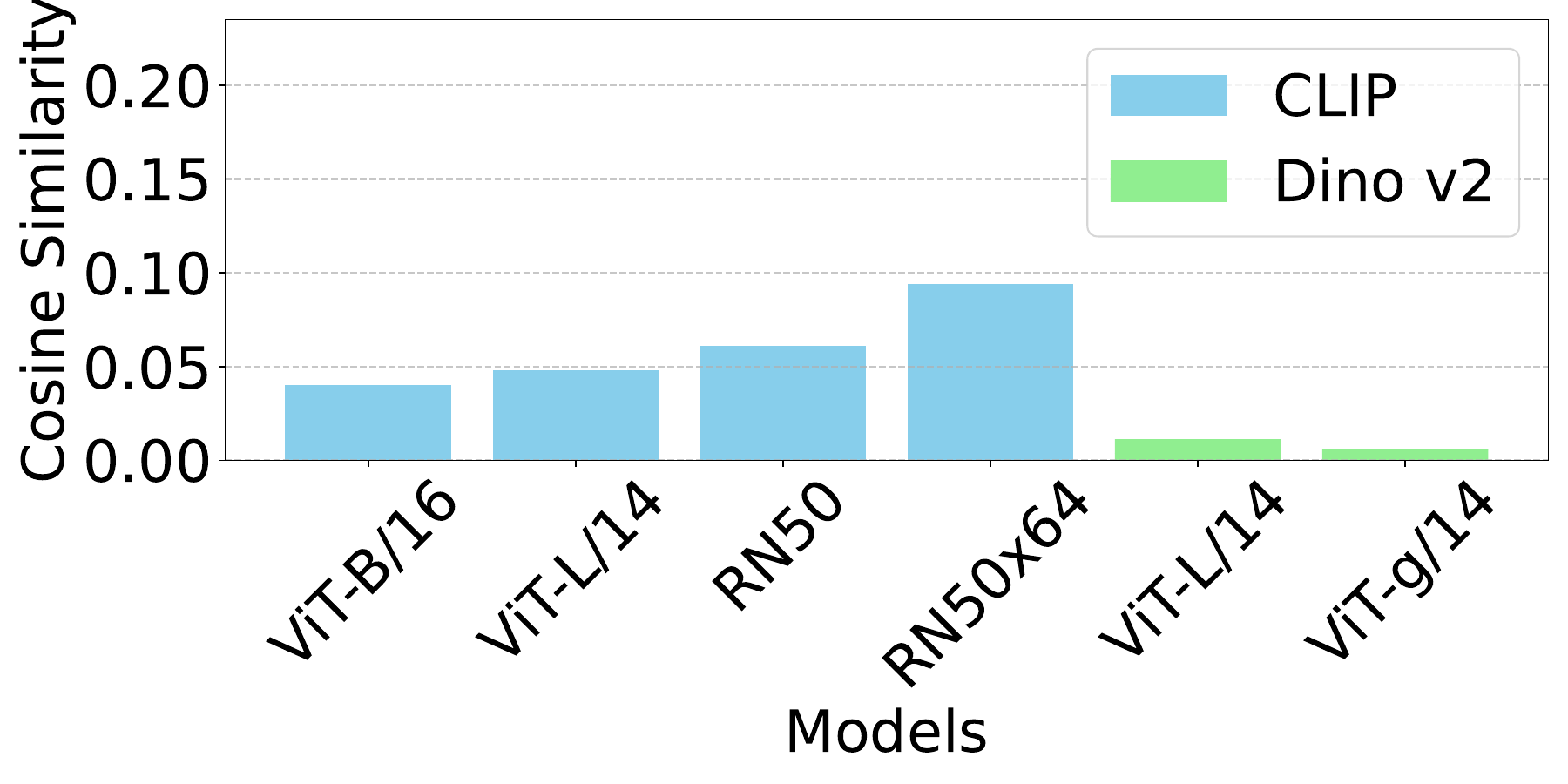}\label{fig:3}}
    \hfill
    \subfloat[Defocus blurring]{\includegraphics[width=0.2\textwidth]{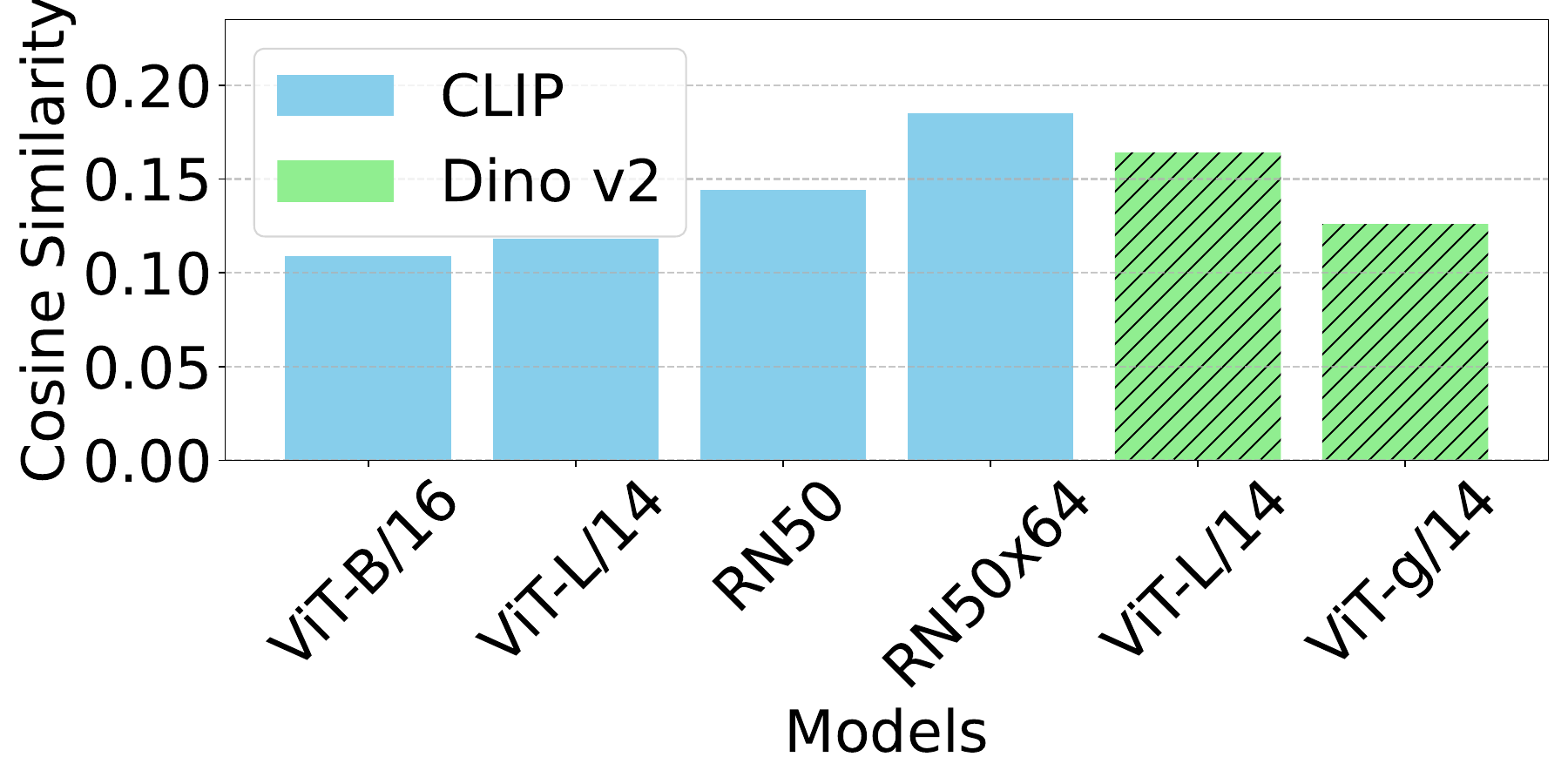}\label{fig:4}}
    \hfill
    \subfloat[Elastic blurring]{\includegraphics[width=0.2\textwidth]{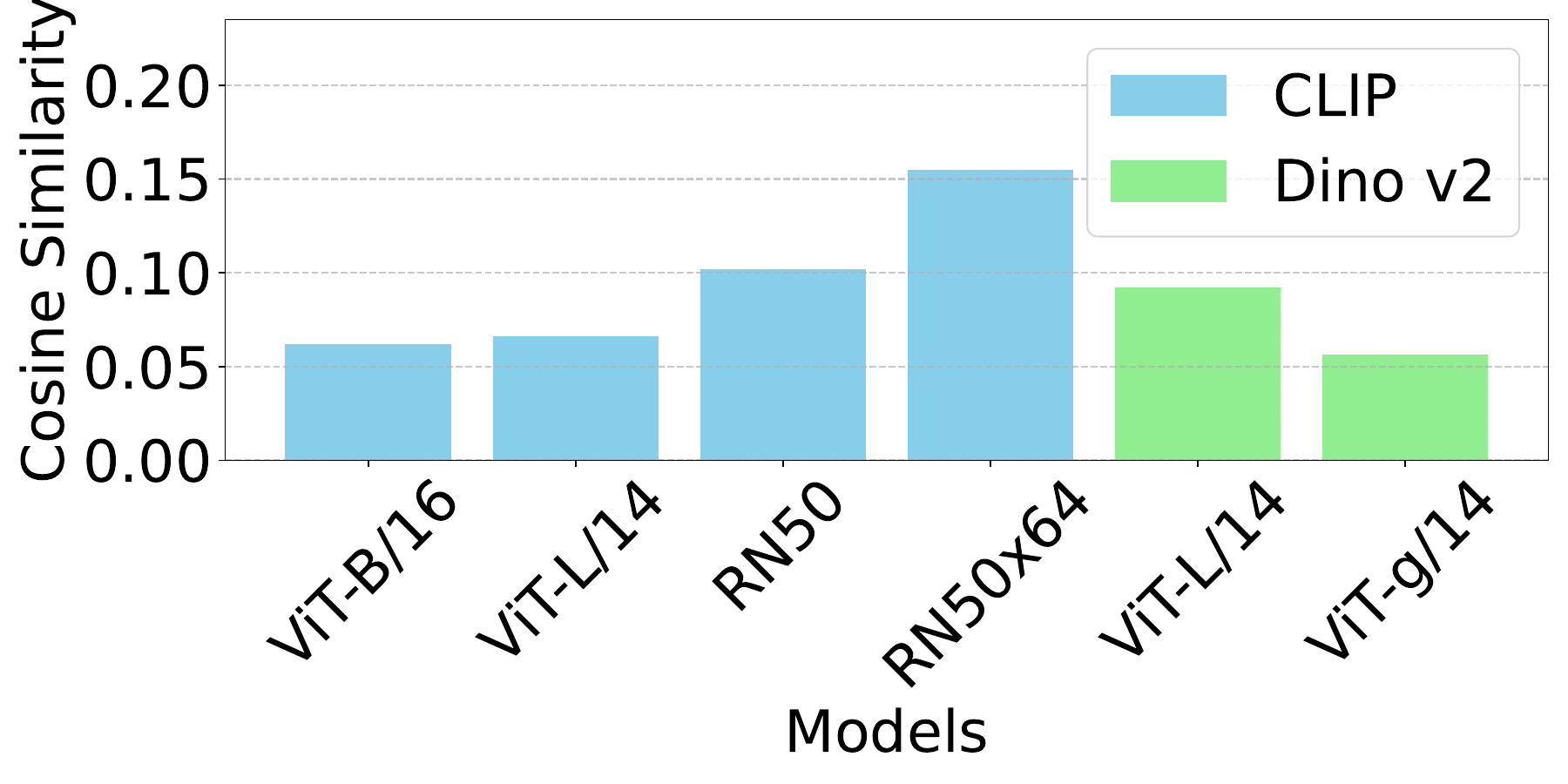}\label{fig:5}}
    \\
    \subfloat[Fog blurring]{\includegraphics[width=0.2\textwidth]{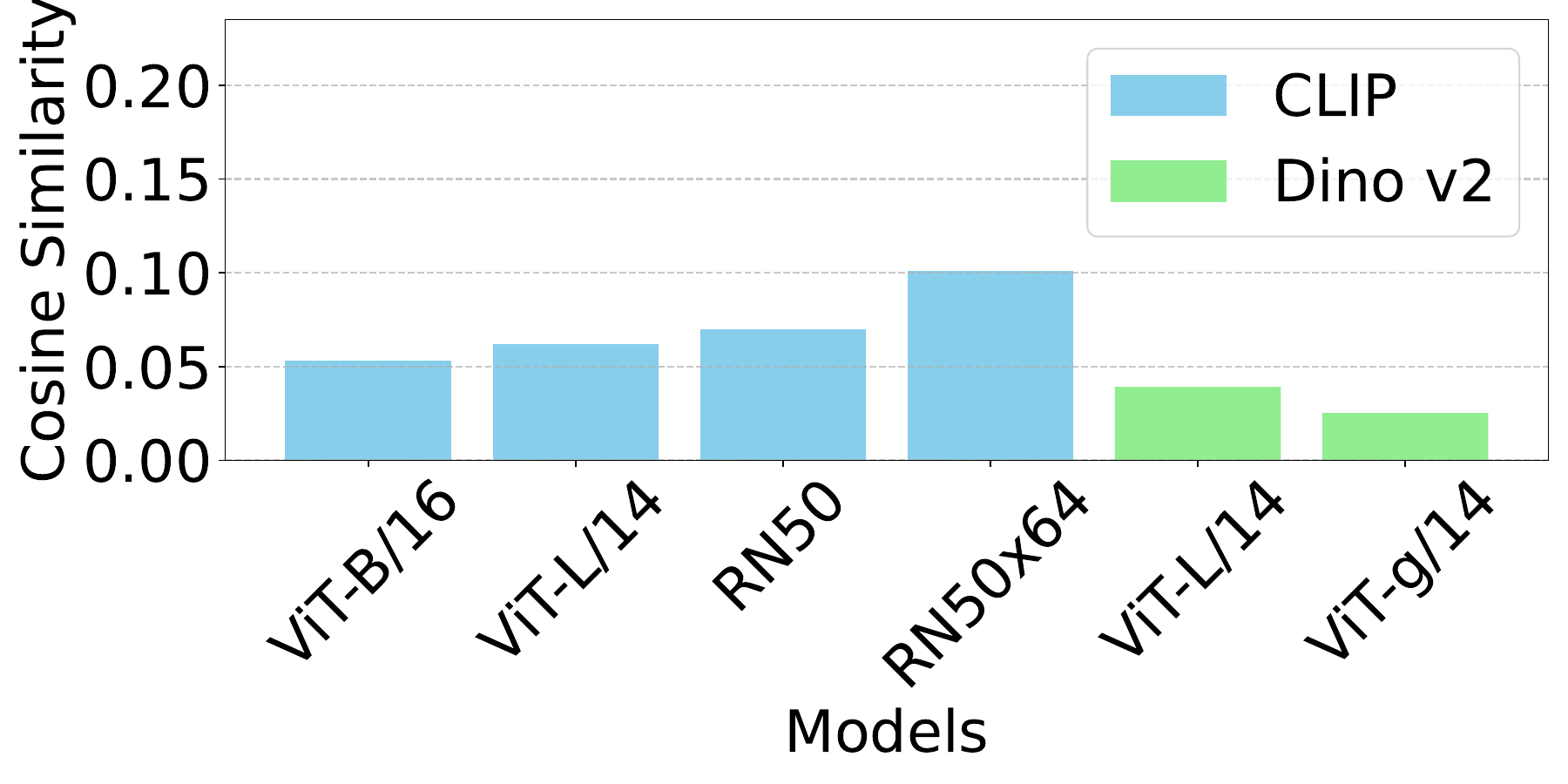}\label{fig:6}}    
    \hfill
    \subfloat[Frost blurring]{\includegraphics[width=0.2\textwidth]{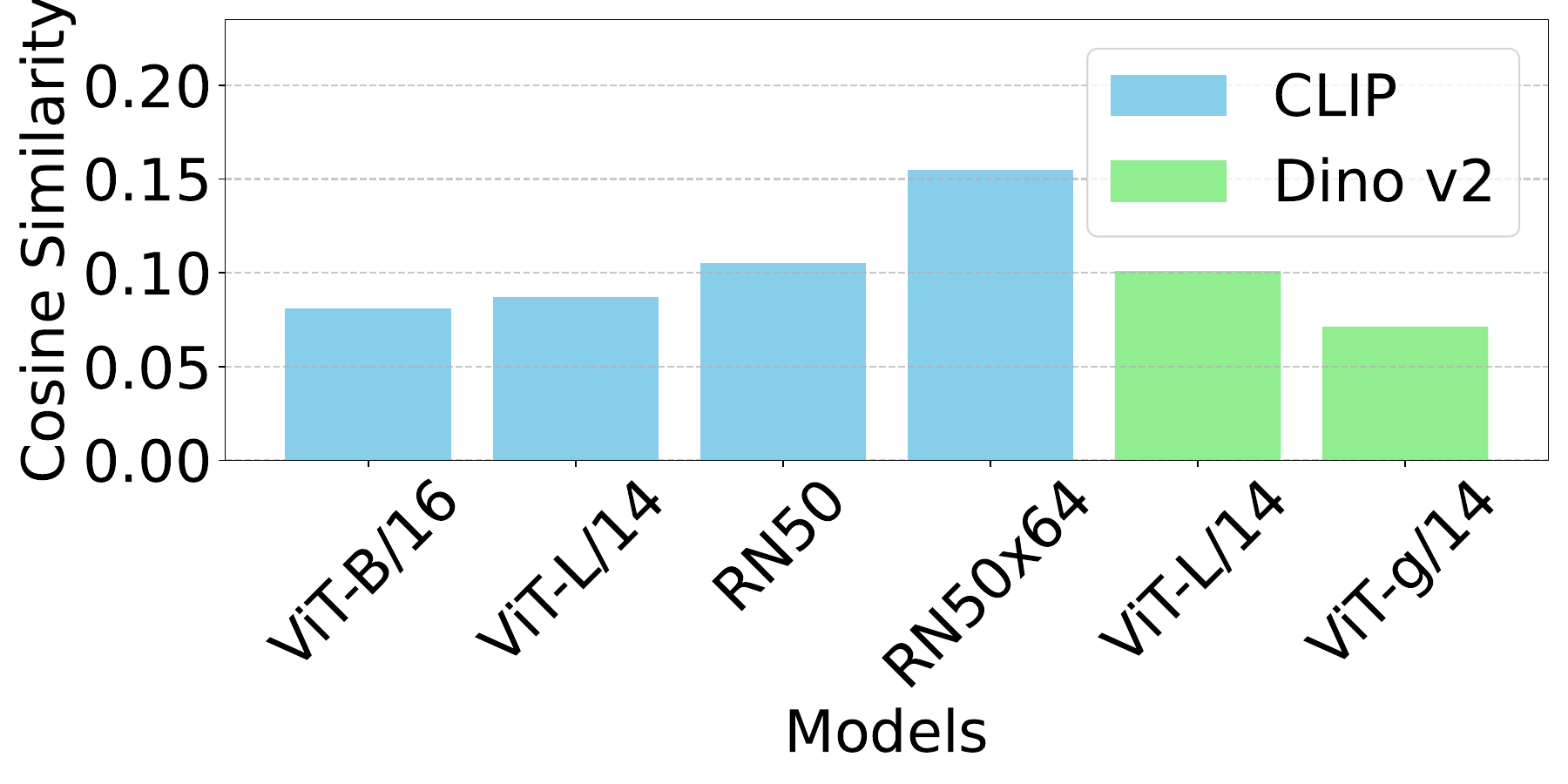}\label{fig:7}}
    \hfill
    \subfloat[Gaussian noise]{\includegraphics[width=0.2\textwidth]{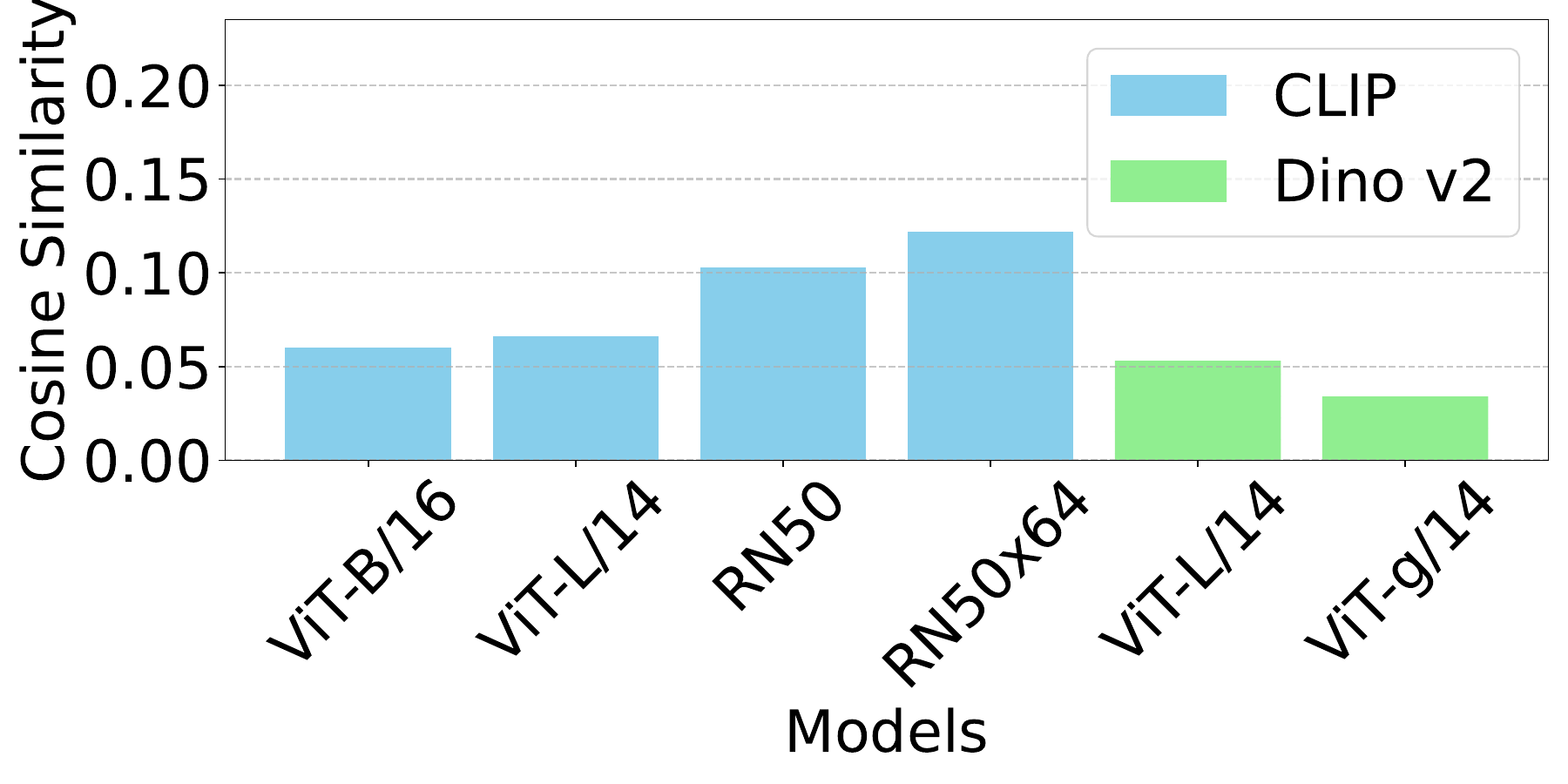}\label{fig:8}}
    \hfill
    \subfloat[Glass blurring]{\includegraphics[width=0.2\textwidth]{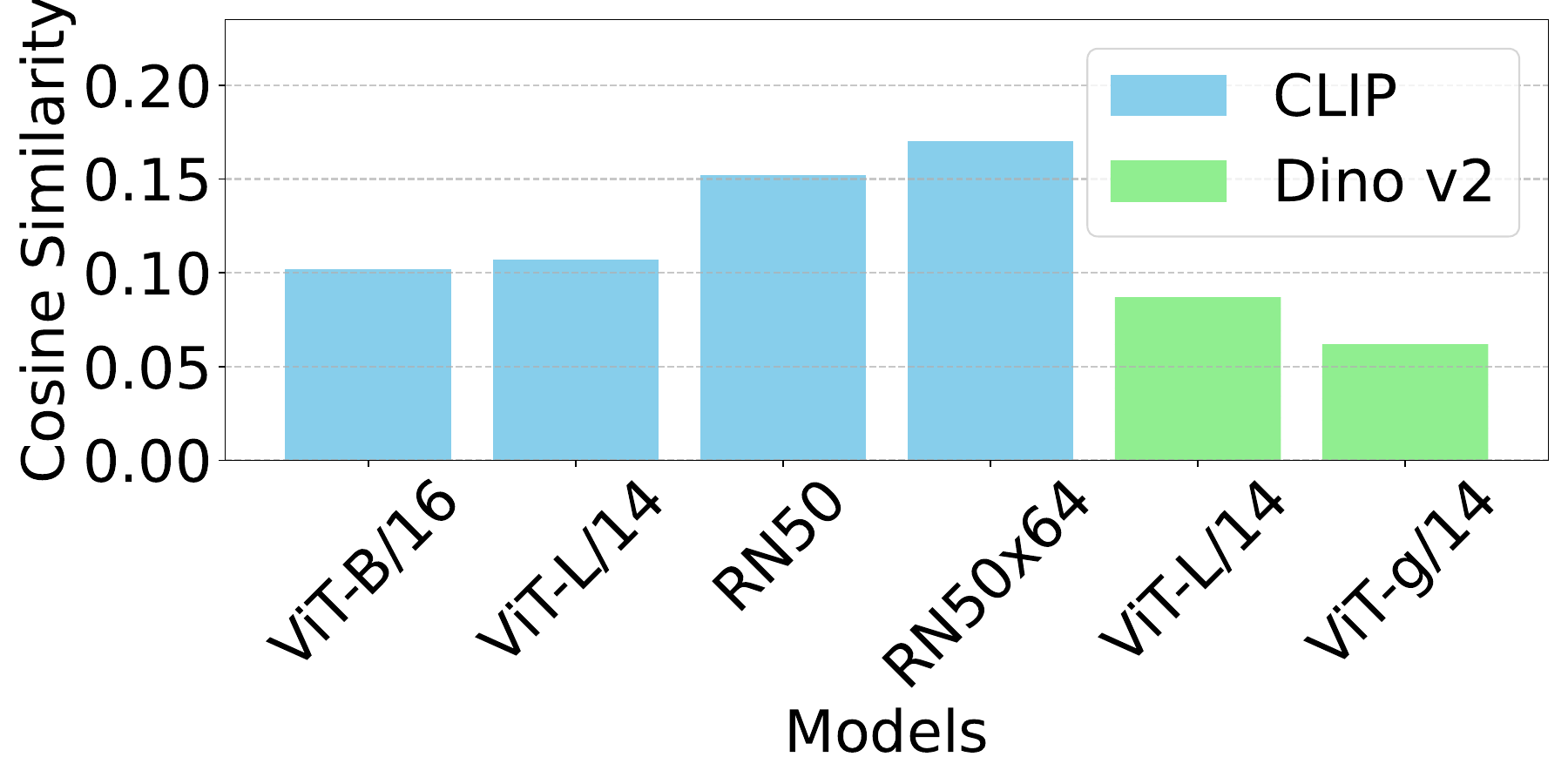}}
    \caption{Average cosine similarity of ImageNet testing images for different foundation models and  perturbation functions.}
    \label{fig:cosine_vision_imagenet}
\end{figure*}

\begin{figure*}[!h]
    \centering
    \subfloat[JPEG Compression]{\includegraphics[width=0.2\textwidth]{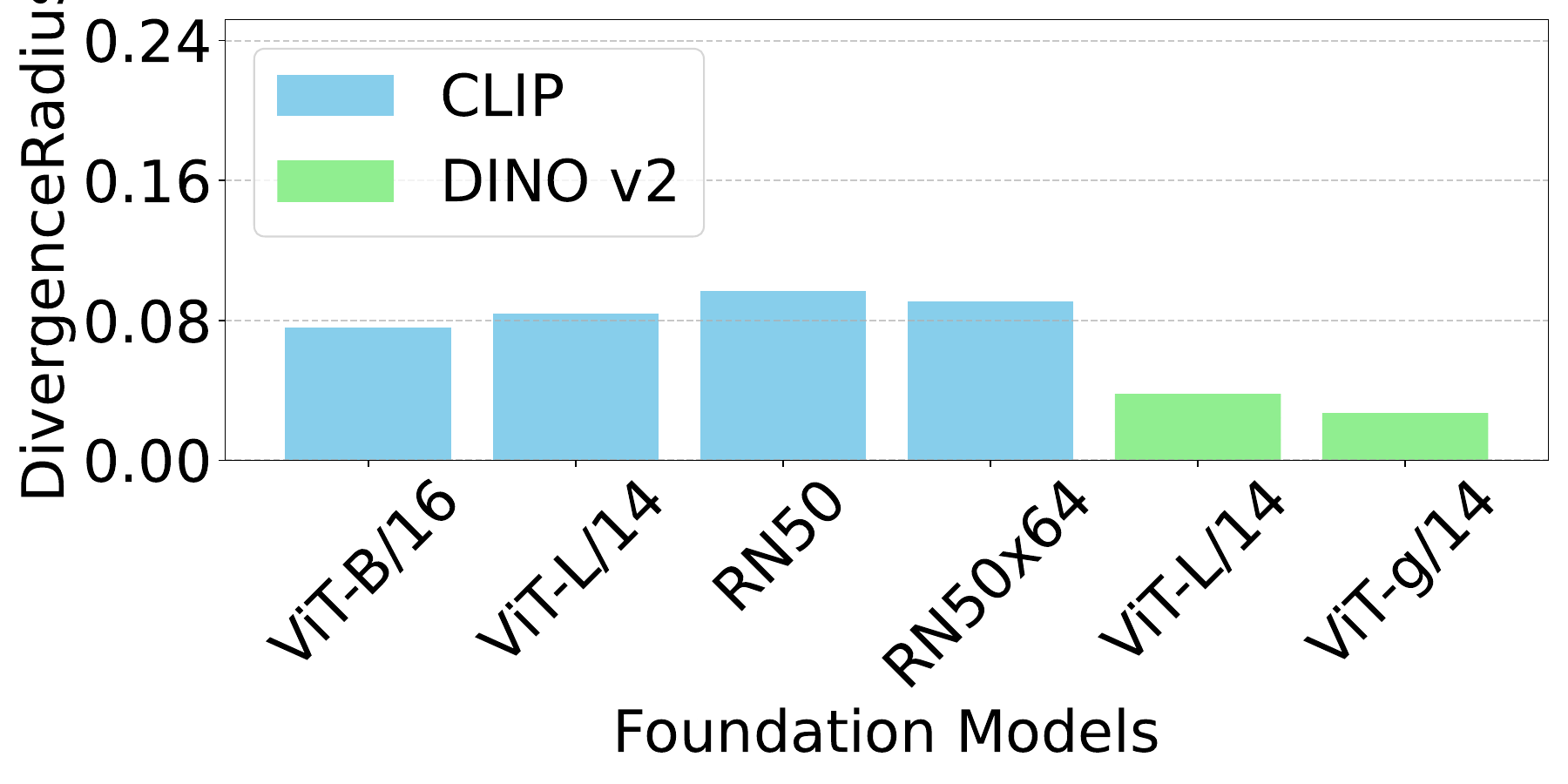}\label{fig:1}}
    \hfill
    \subfloat[Brightness adjustment]{\includegraphics[width=0.2\textwidth]{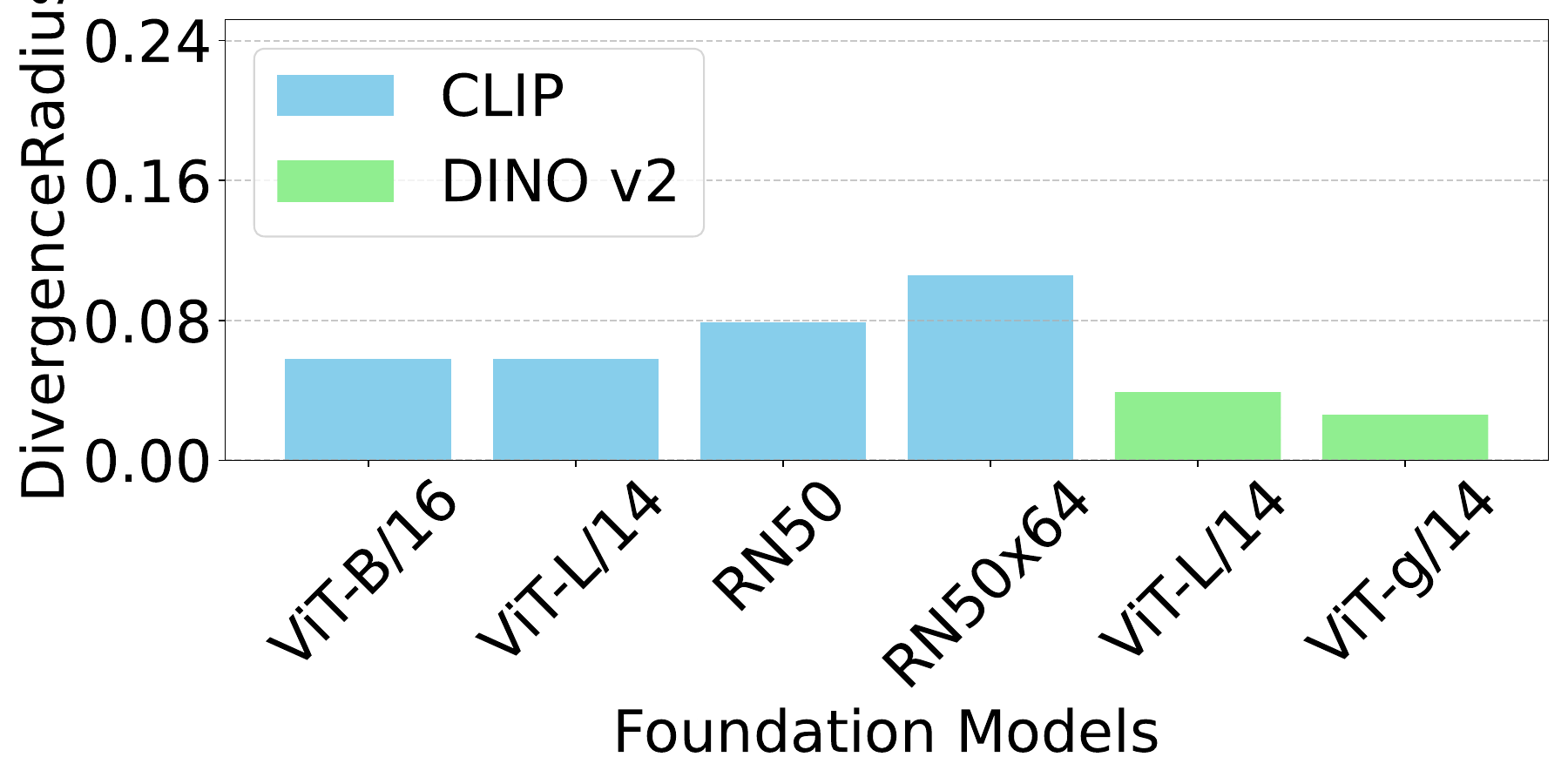}\label{fig:2}}
    \hfill
    \subfloat[Contrast adjustment]{\includegraphics[width=0.2\textwidth]{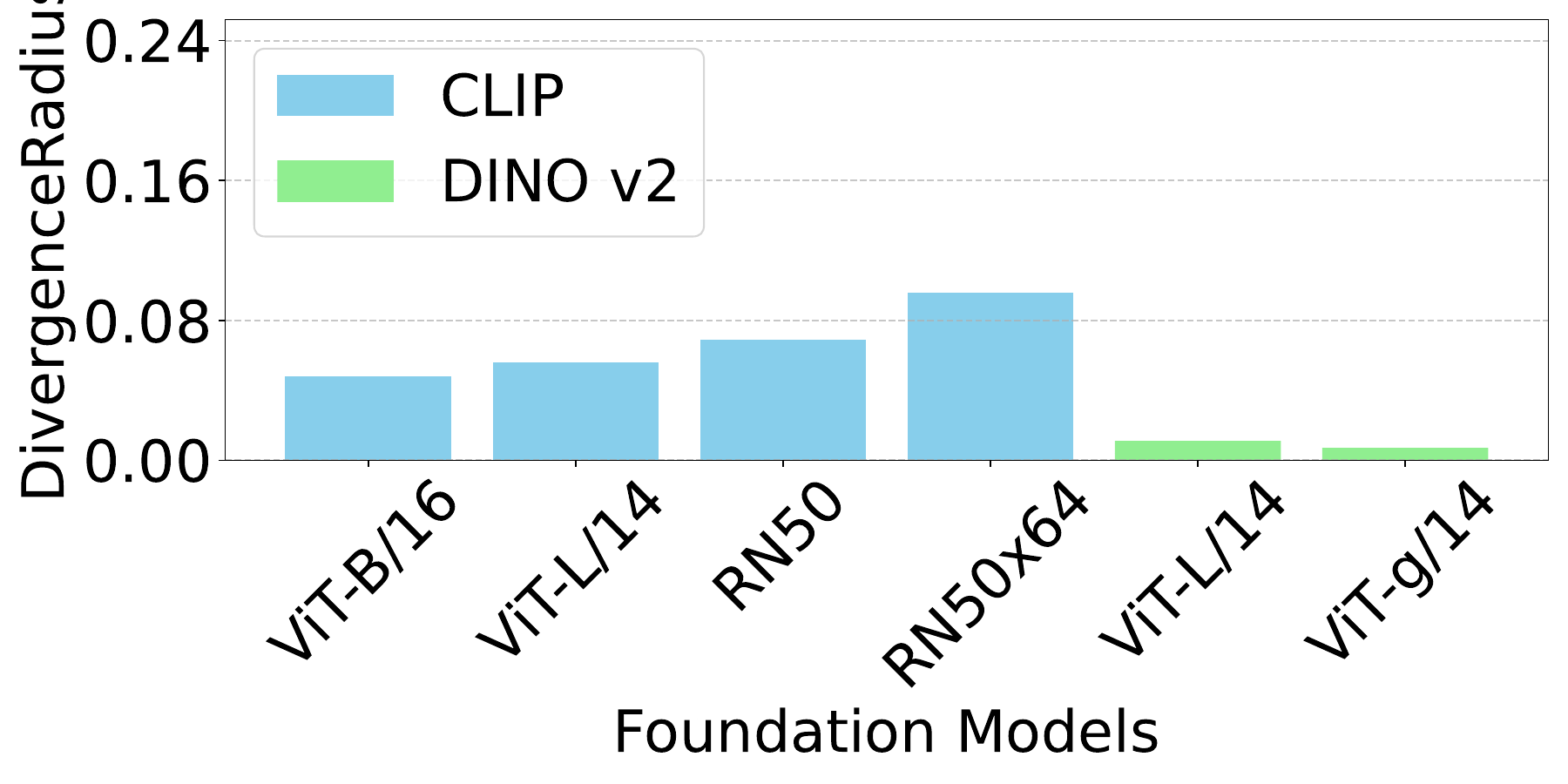}\label{fig:3}}
    \hfill
    \subfloat[Defocus blurring]{\includegraphics[width=0.2\textwidth]{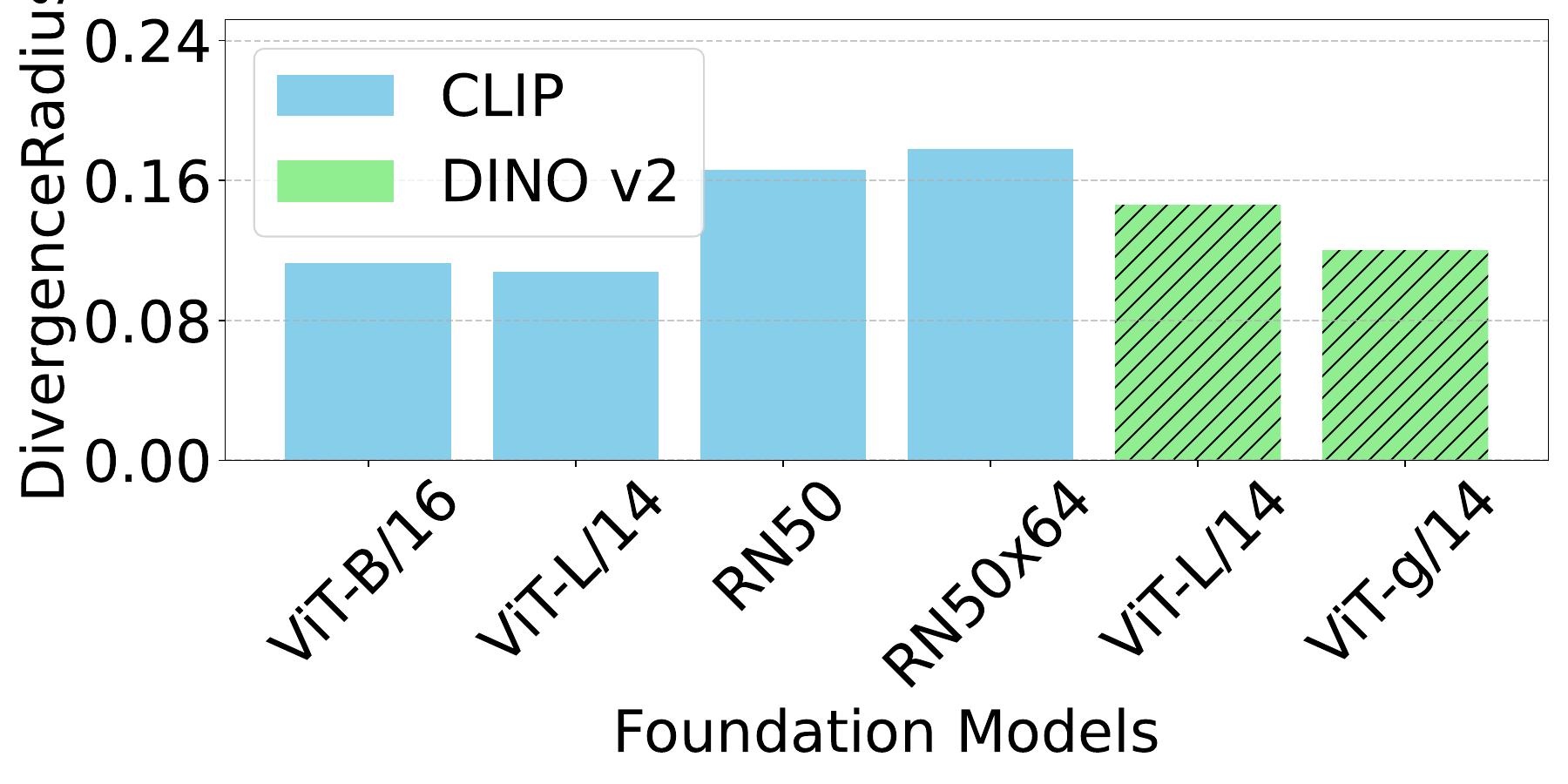}\label{fig:4}}
    \hfill
    \subfloat[Elastic blurring]{\includegraphics[width=0.2\textwidth]{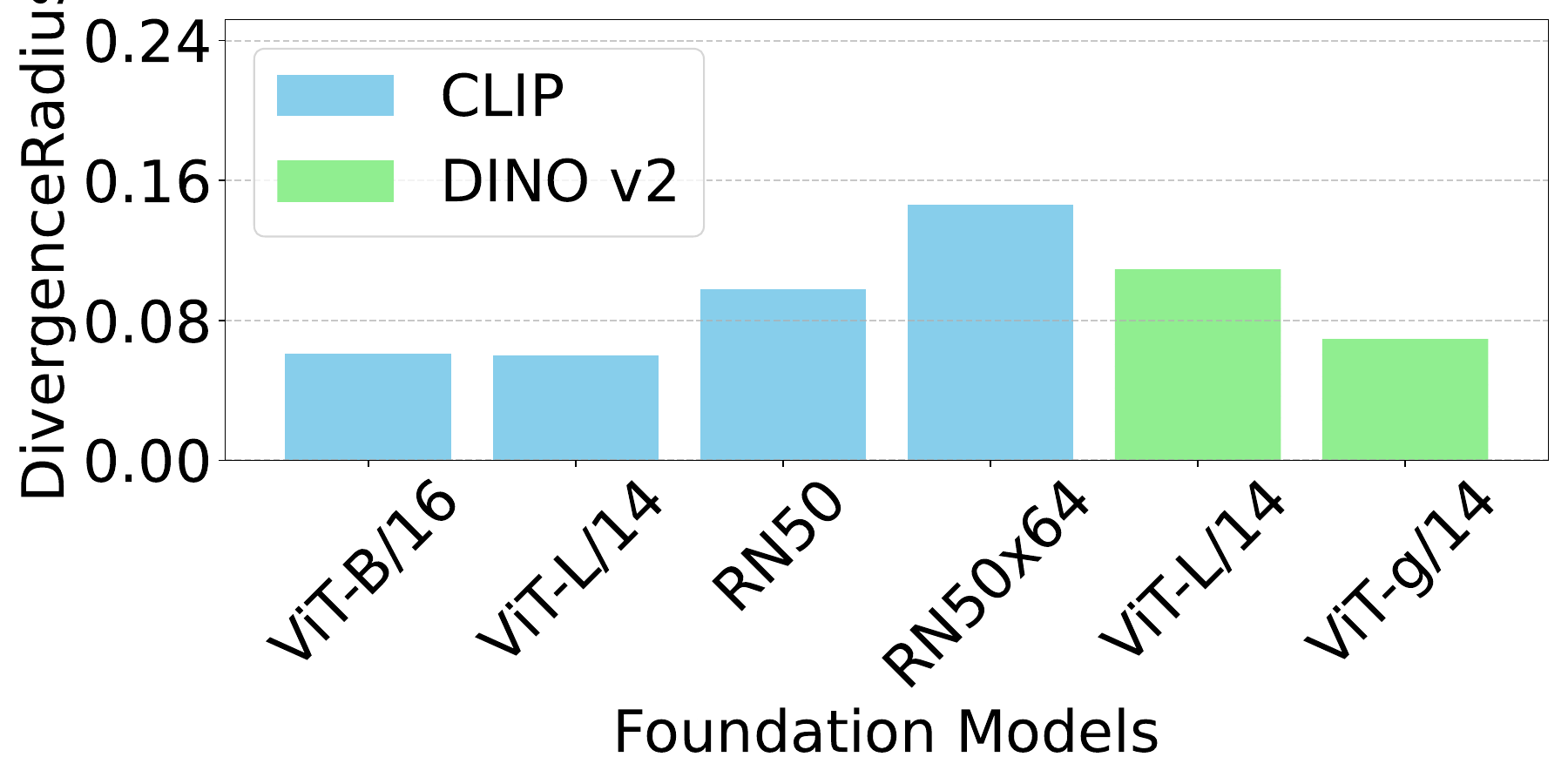}\label{fig:5}}
    \\
    \subfloat[Fog blurring]{\includegraphics[width=0.2\textwidth]{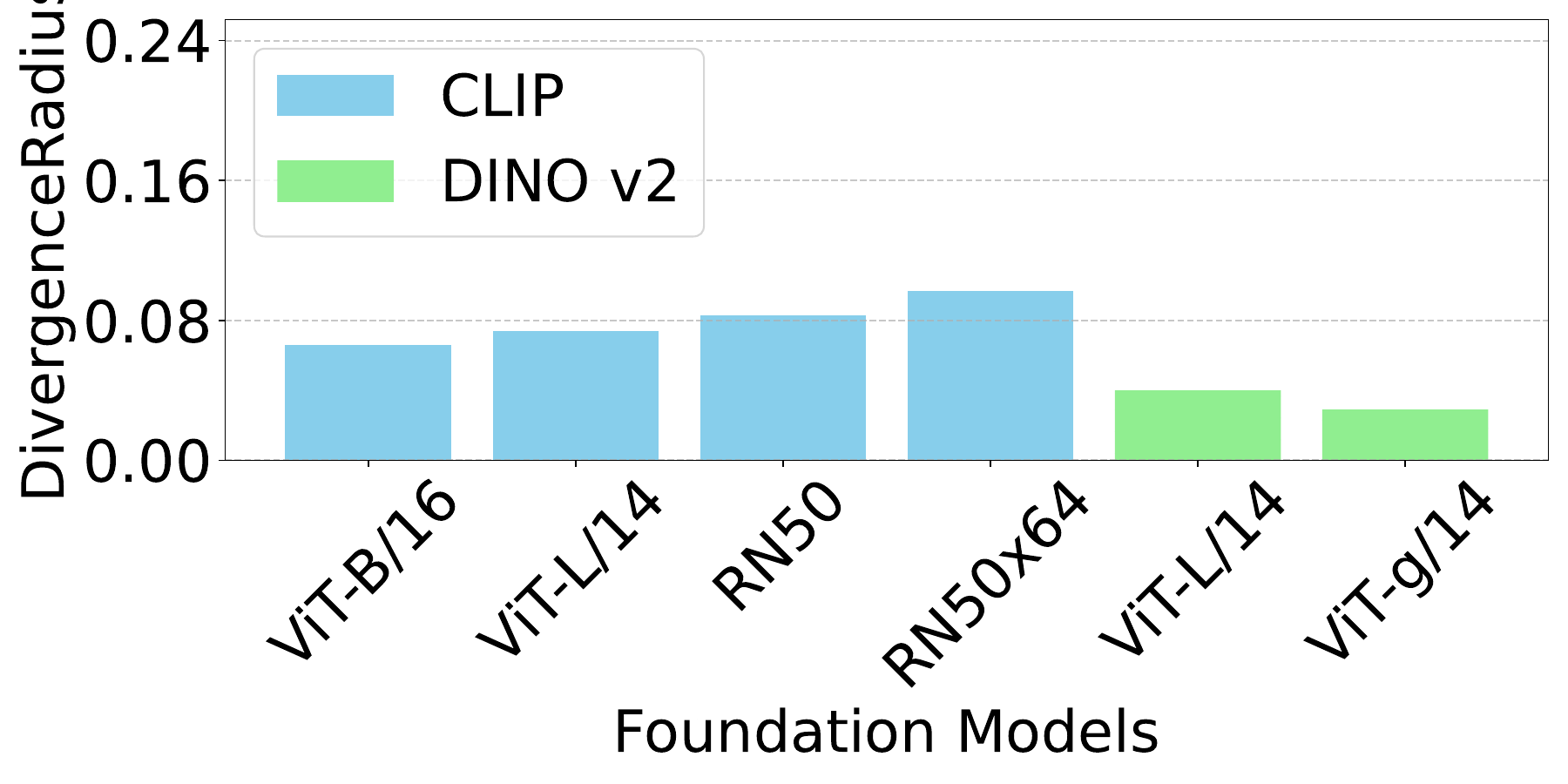}\label{fig:6}}
    \hfill
    \subfloat[Frost blurring]{\includegraphics[width=0.2\textwidth]{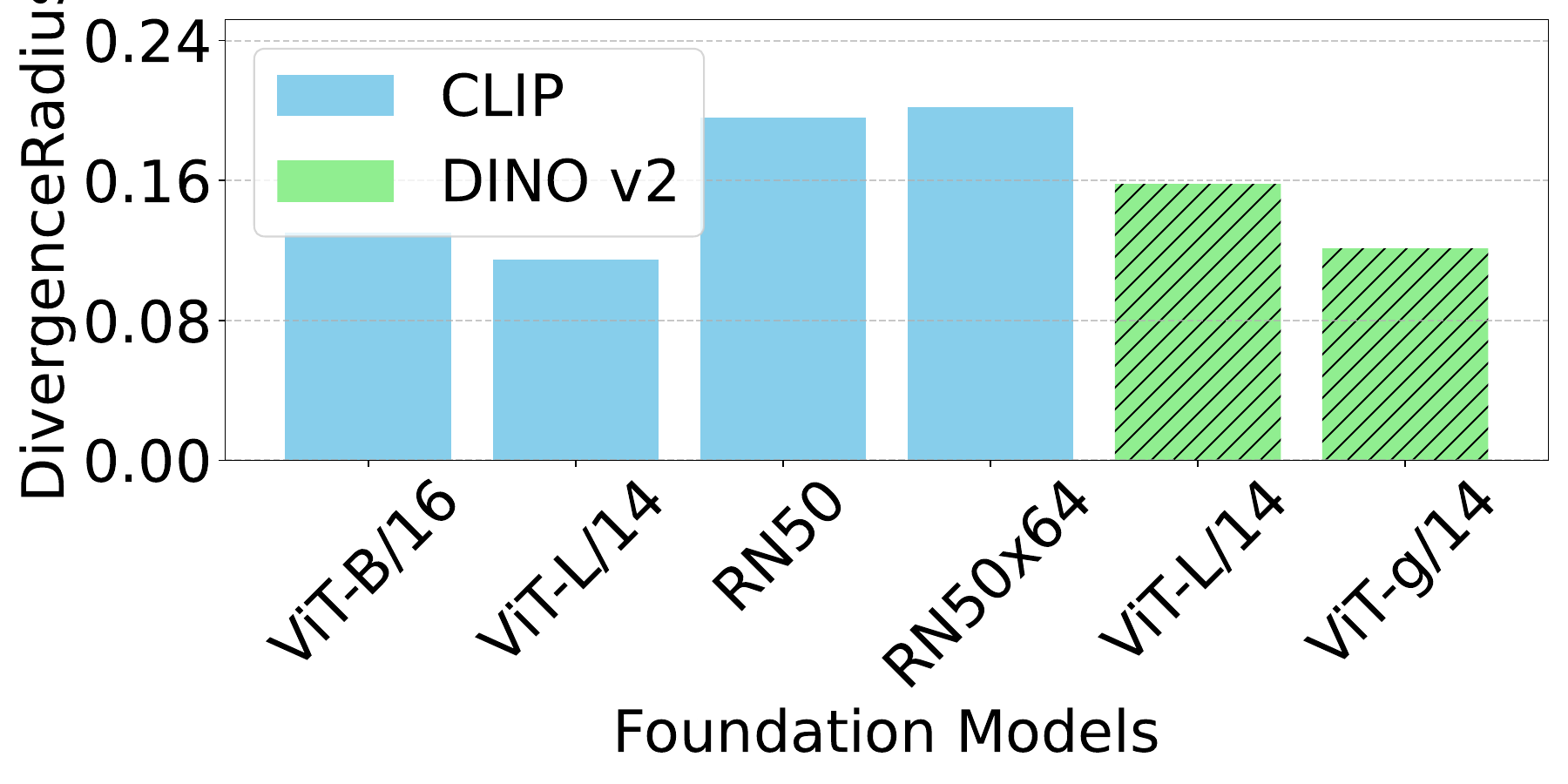}\label{fig:7}}
    \hfill
    \subfloat[Gaussian blurring]{\includegraphics[width=0.2\textwidth]{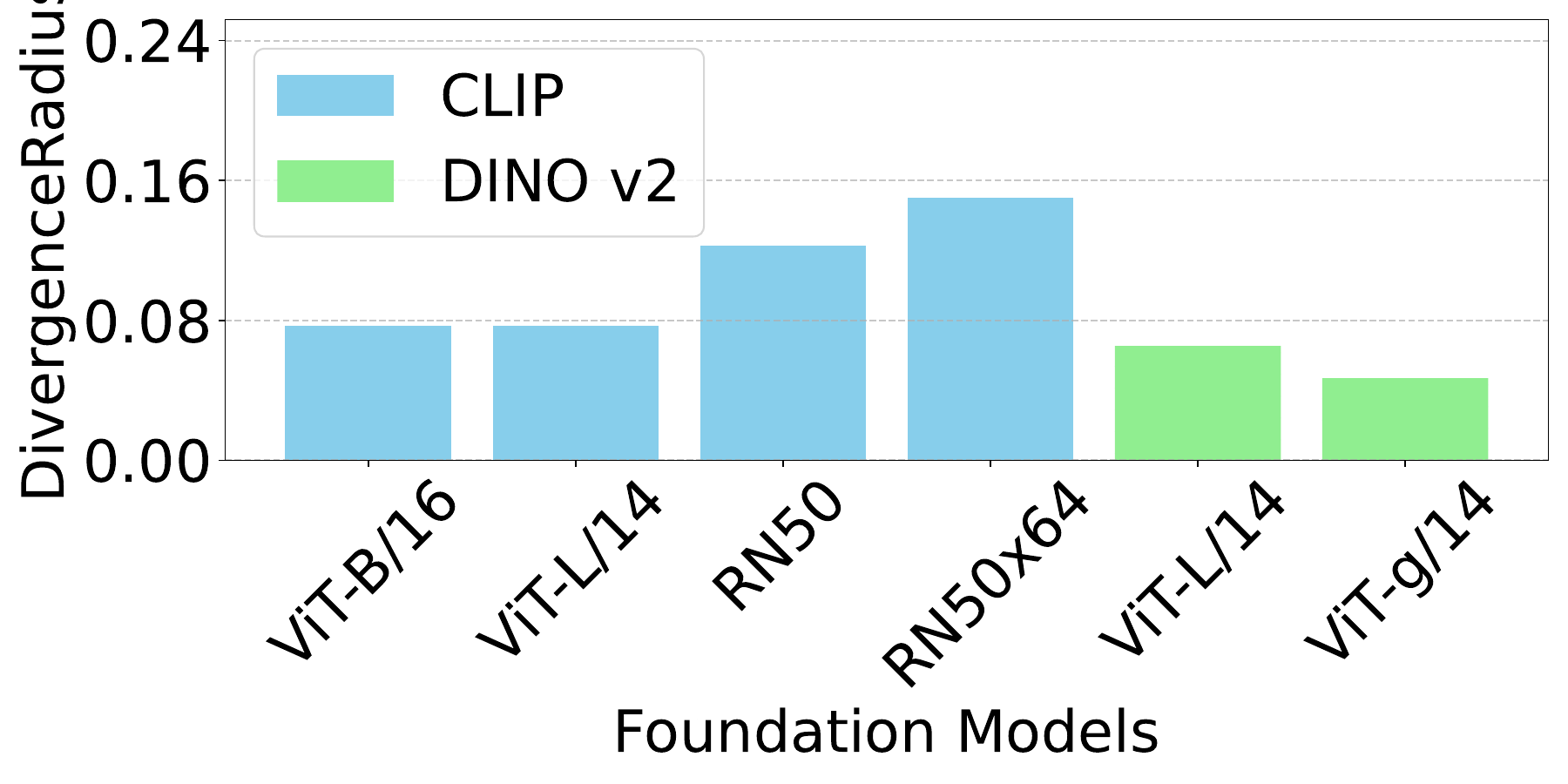}\label{fig:8}}
    \hfill
    \subfloat[Glass blurring]{\includegraphics[width=0.2\textwidth]{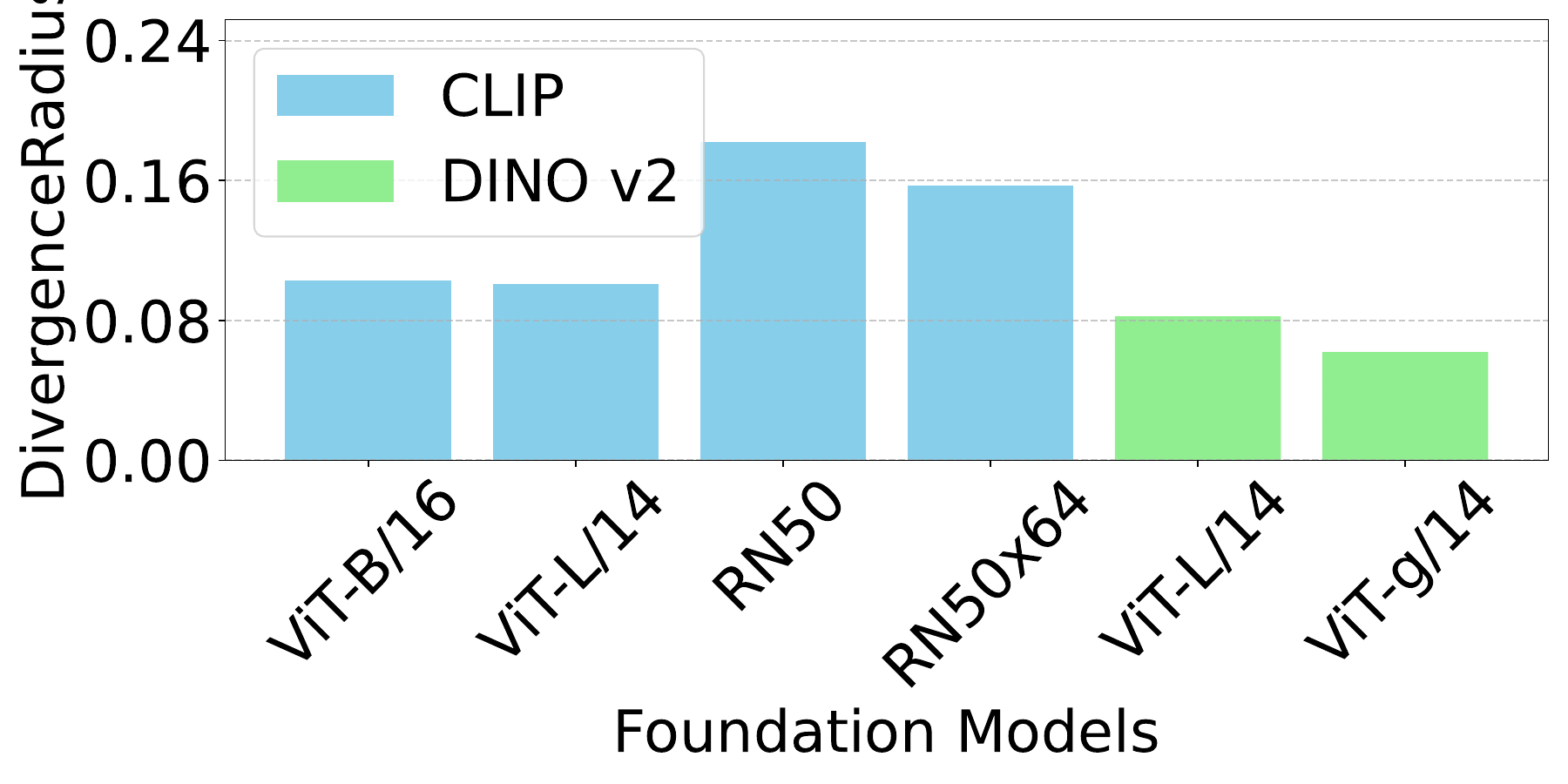}}
    \caption{Average \alg{} of Food101 testing images for different foundation models and  perturbation functions.}
    \label{fig:divergence_radius_vision_Food101}
\end{figure*}

\begin{figure*}[!h]
    \centering
    \subfloat[JPEG Compression]{\includegraphics[width=0.2\textwidth]{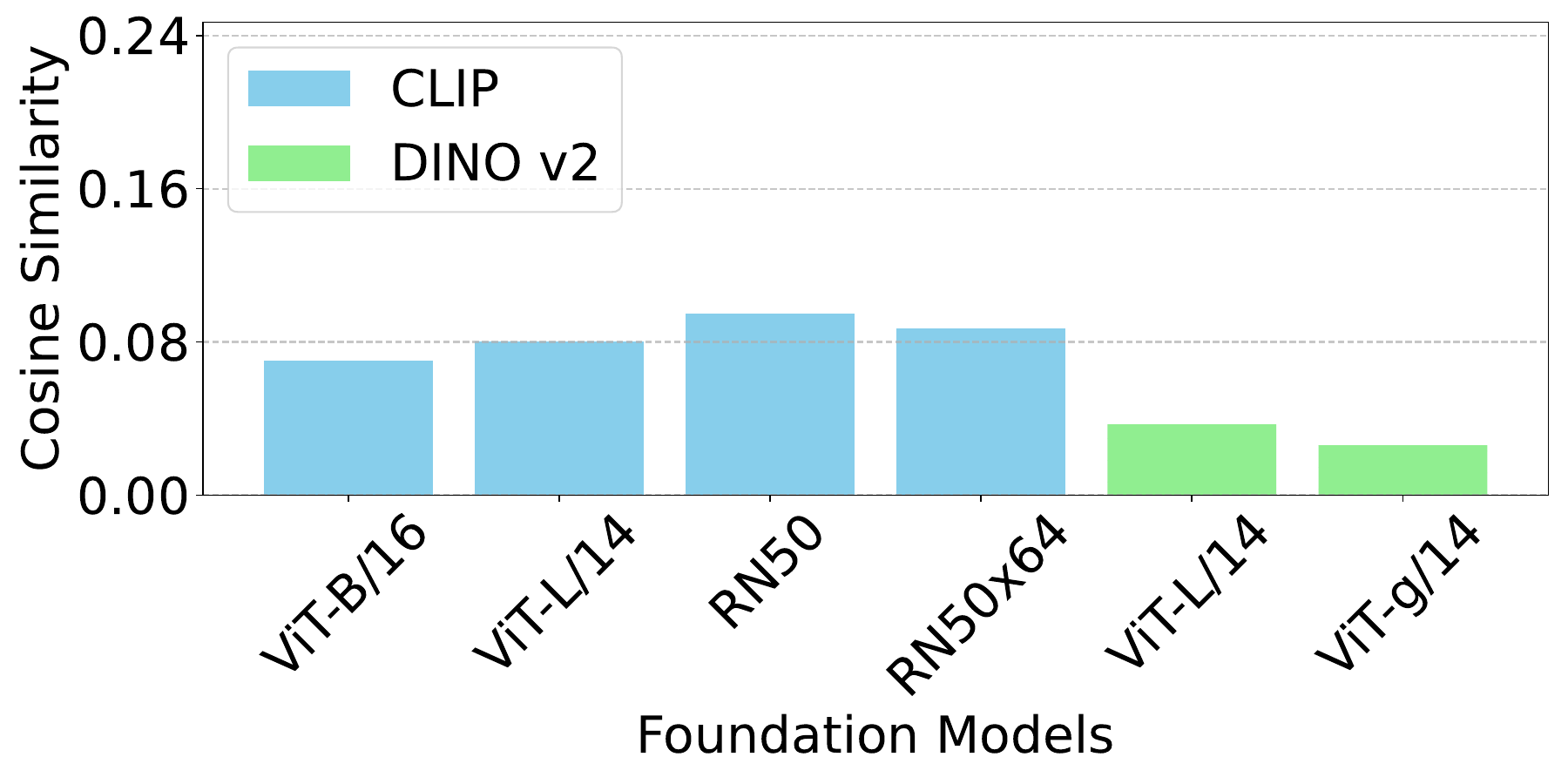}\label{fig:1}}
    \hfill
    \subfloat[Brightness adjustment]{\includegraphics[width=0.2\textwidth]{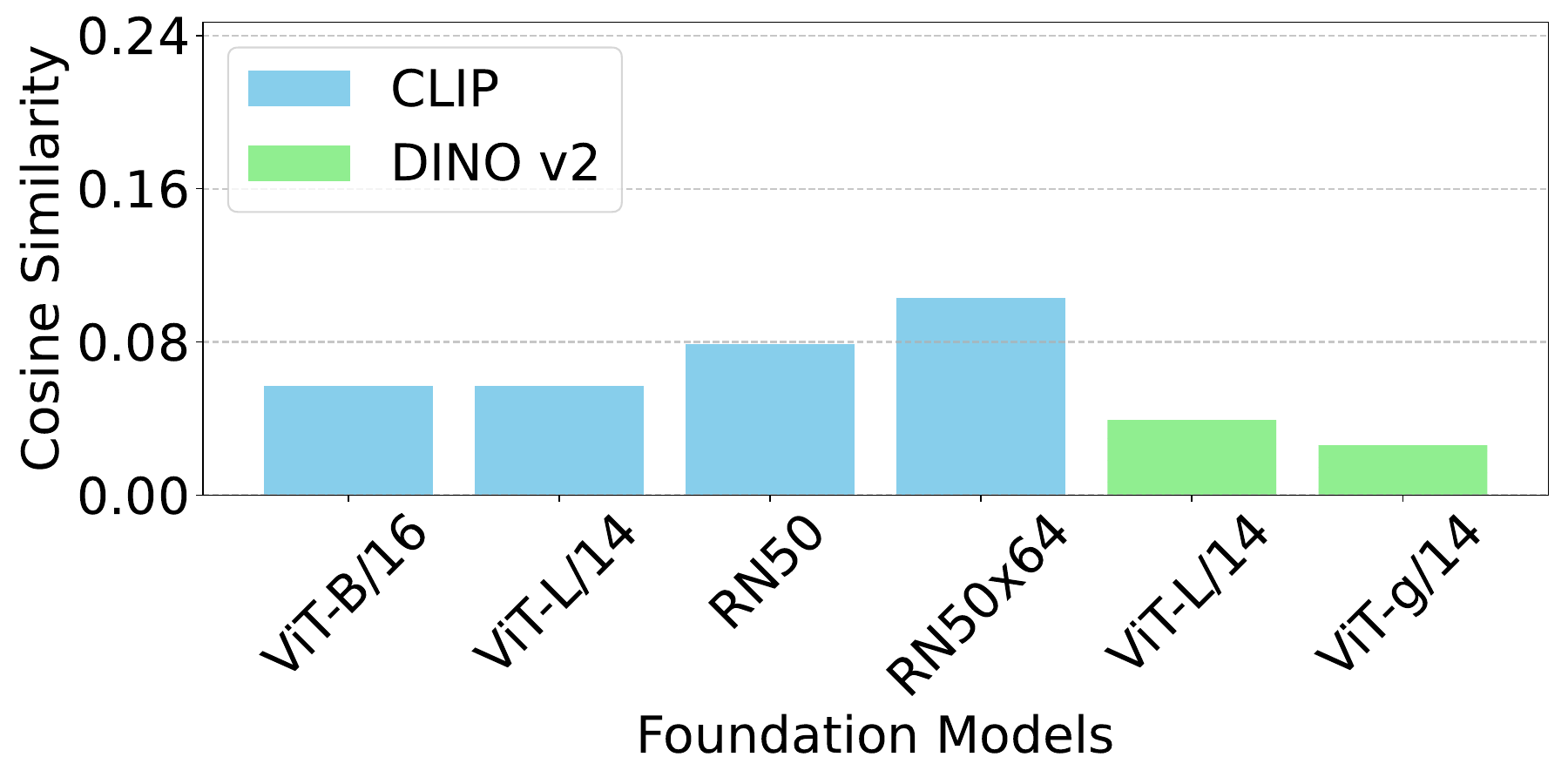}\label{fig:2}}
    \hfill
    \subfloat[Contrast adjustment]{\includegraphics[width=0.2\textwidth]{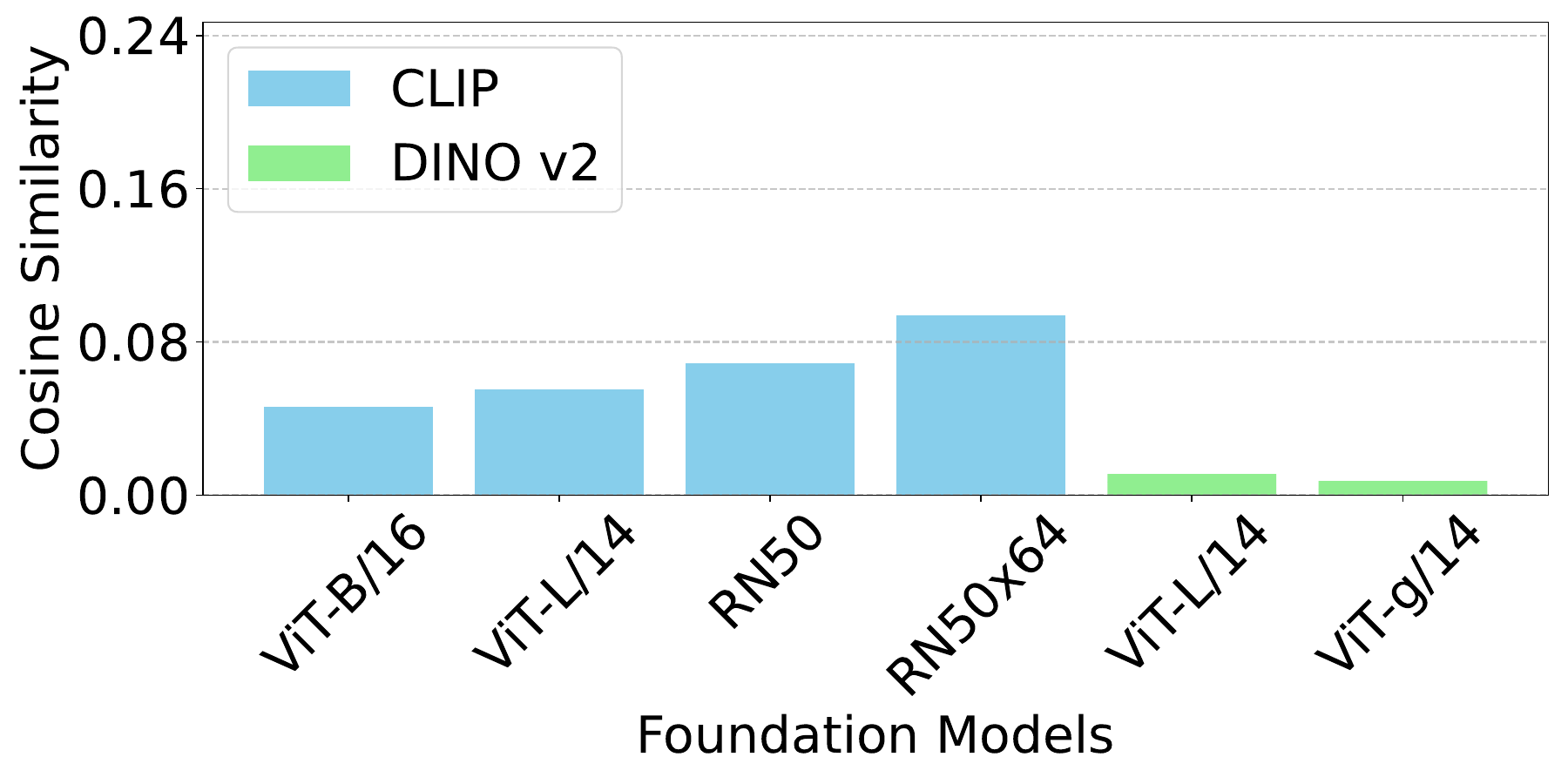}\label{fig:3}}
    \hfill
    \subfloat[Defocus blurring]{\includegraphics[width=0.2\textwidth]{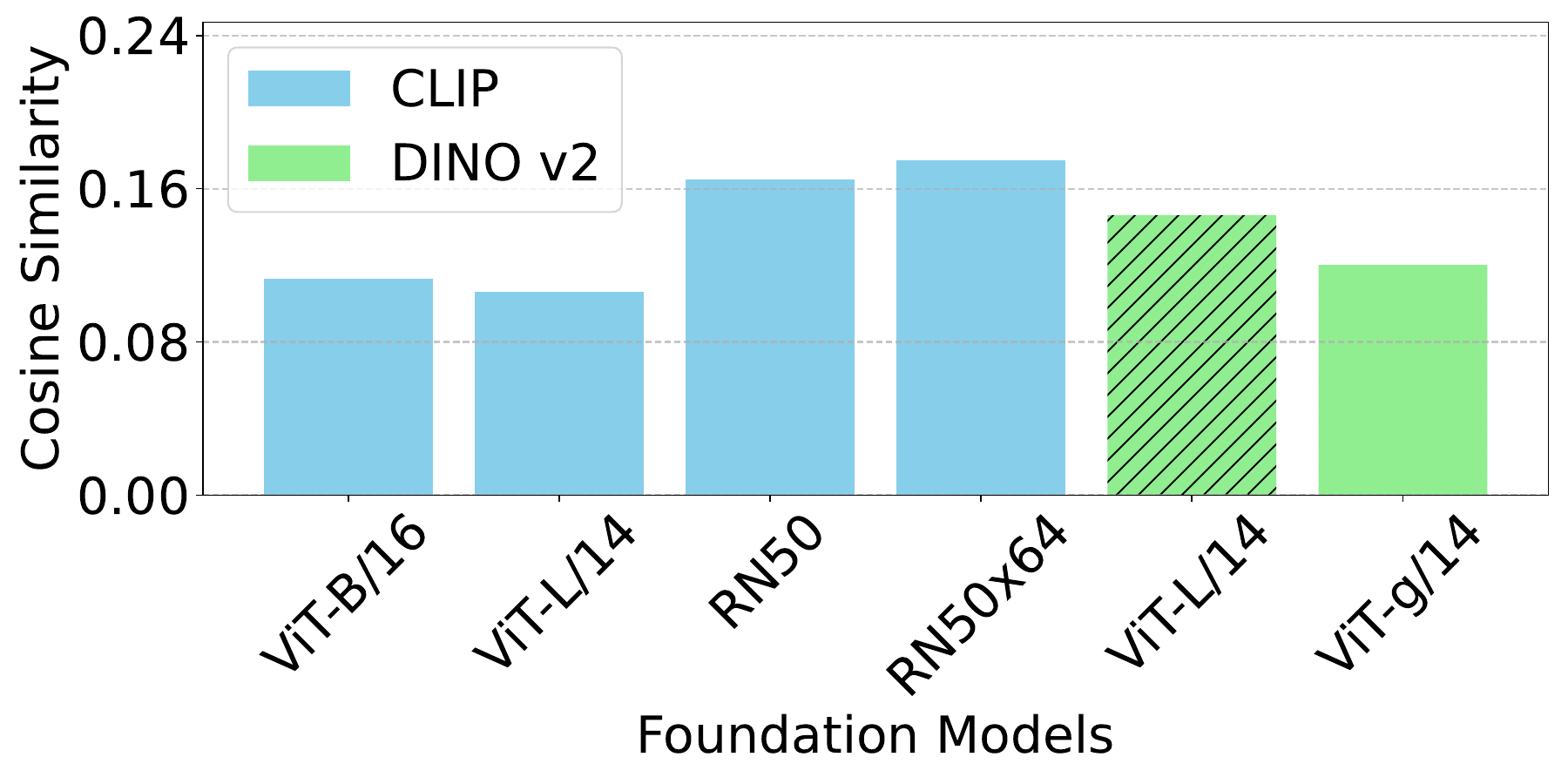}\label{fig:4}}
    \hfill
    \subfloat[Elastic blurring]{\includegraphics[width=0.2\textwidth]{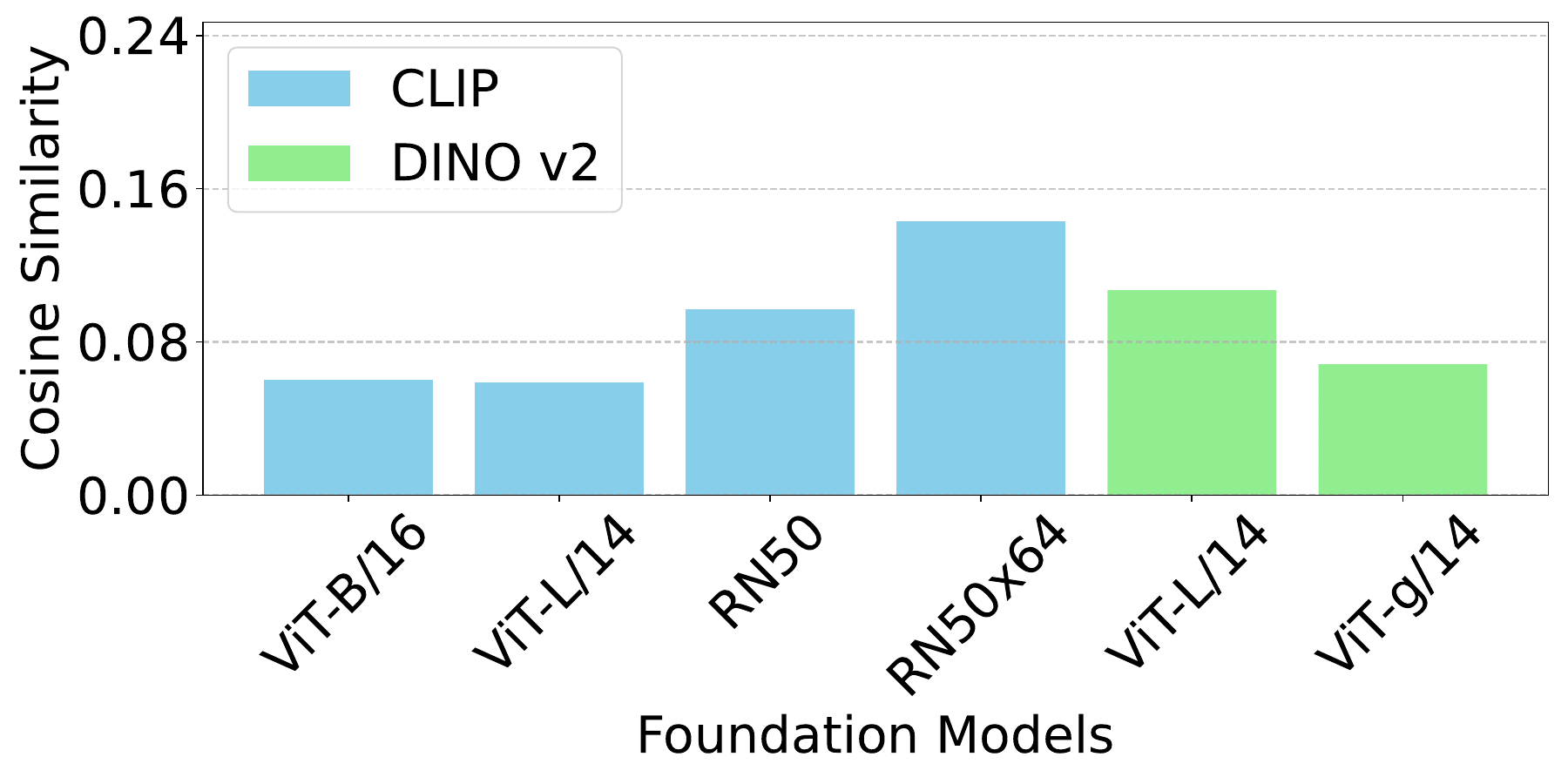}\label{fig:5}}
    \\
    \subfloat[Fog blurring]{\includegraphics[width=0.2\textwidth]{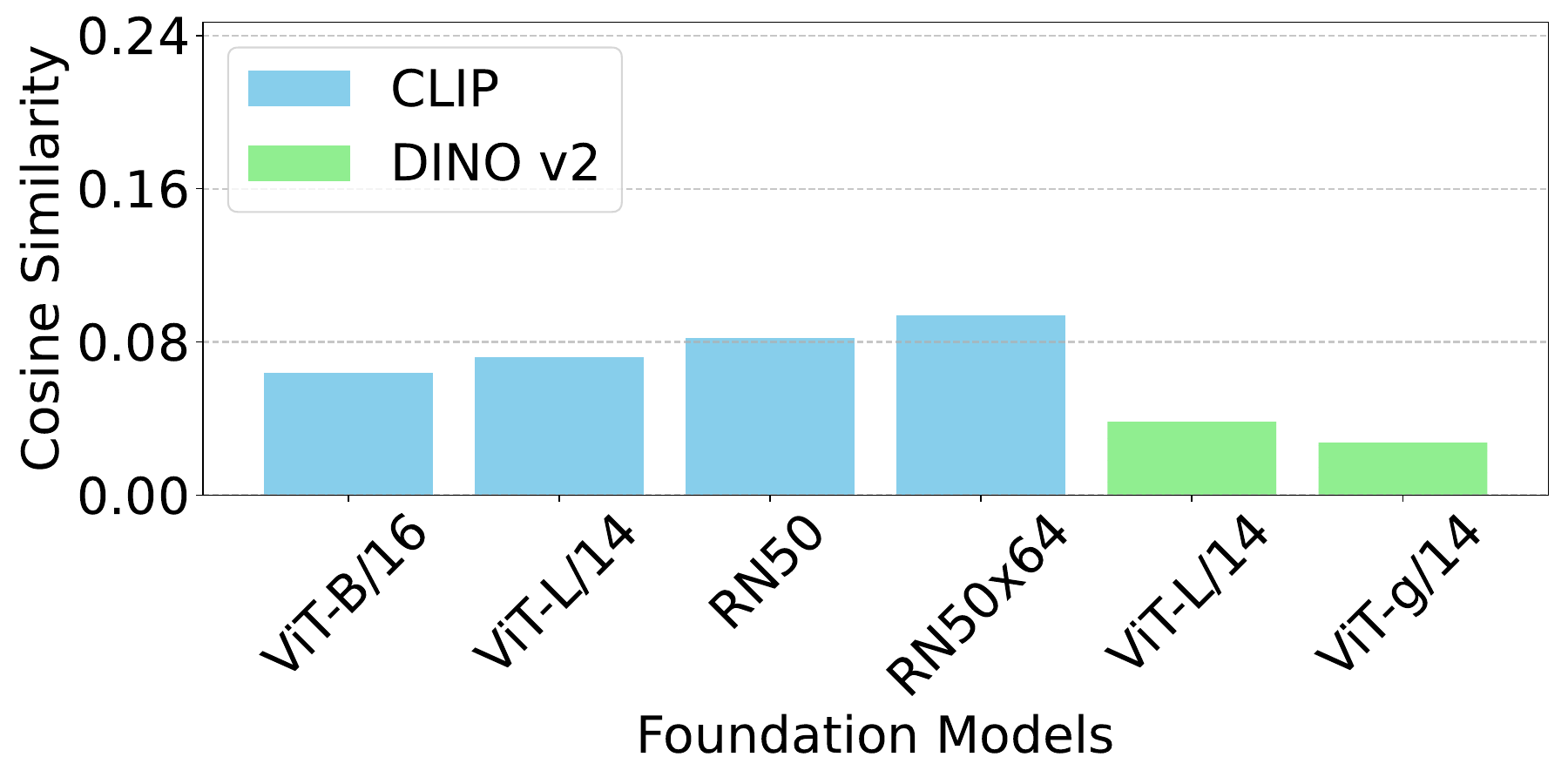}\label{fig:6}}
    \hfill
    \subfloat[Frost blurring]{\includegraphics[width=0.2\textwidth]{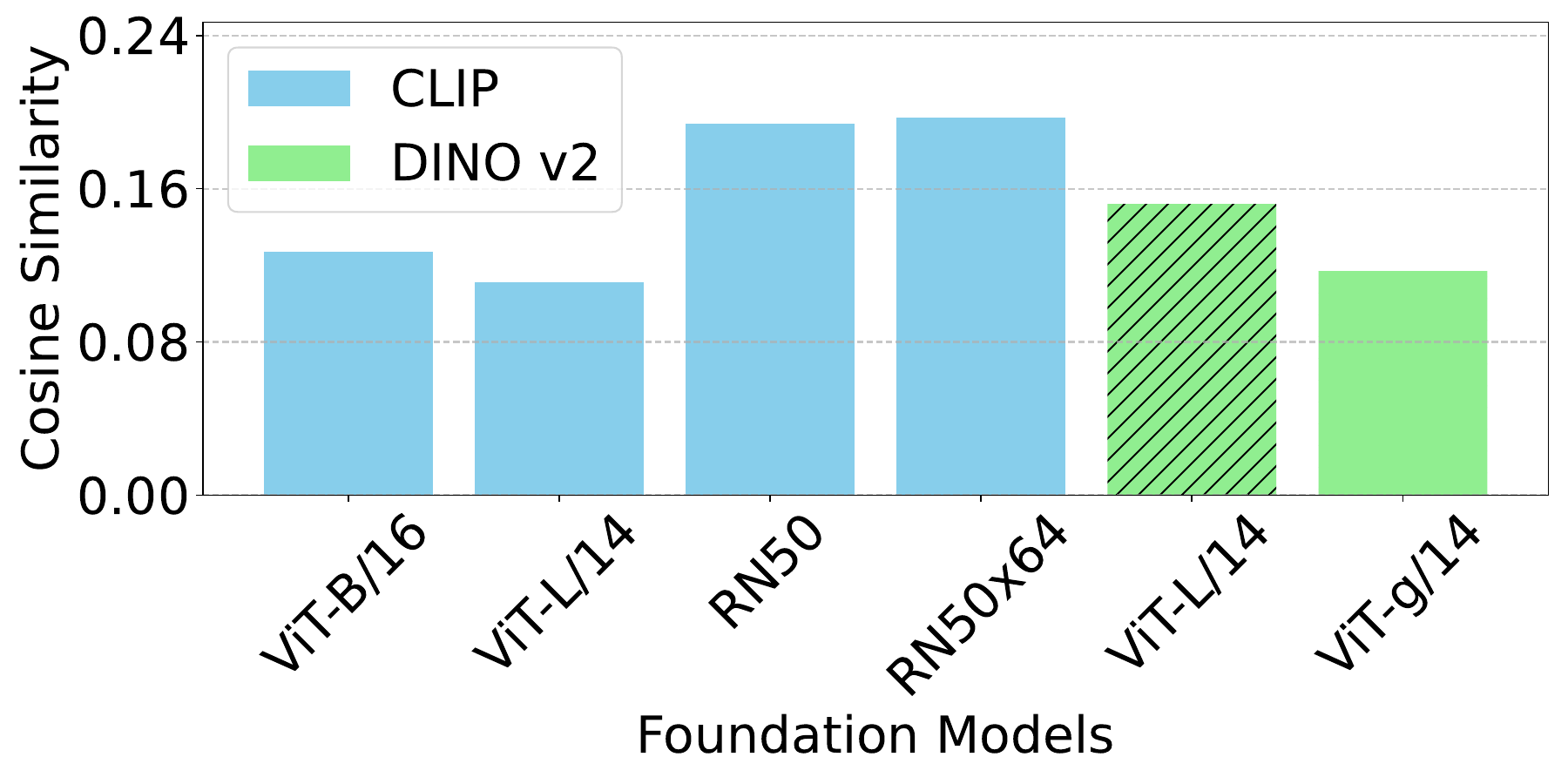}\label{fig:7}}
    \hfill
    \subfloat[Gaussian blurring]{\includegraphics[width=0.2\textwidth]{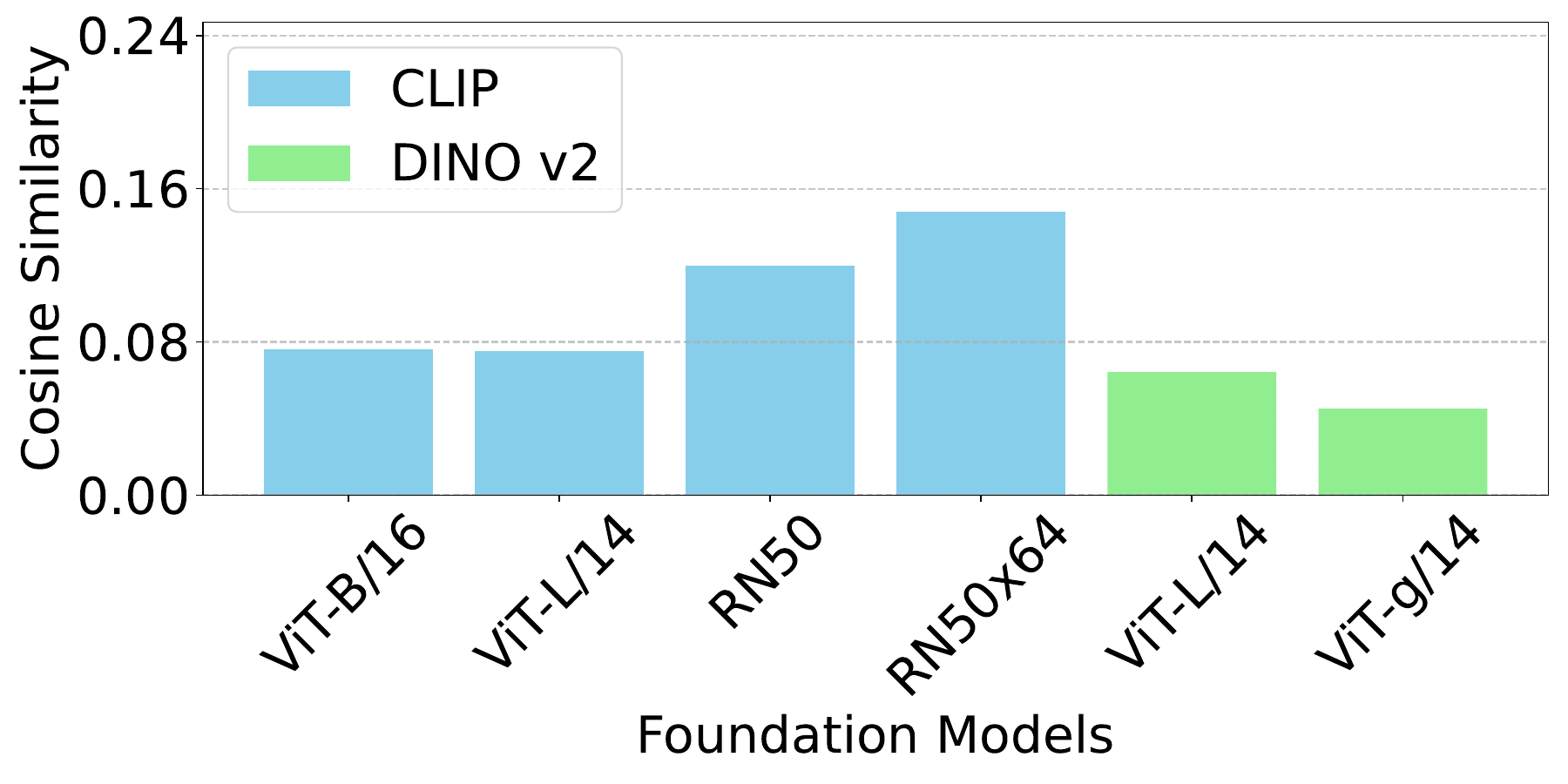}\label{fig:8}}
    \hfill
    \subfloat[Glass blurring]{\includegraphics[width=0.2\textwidth]{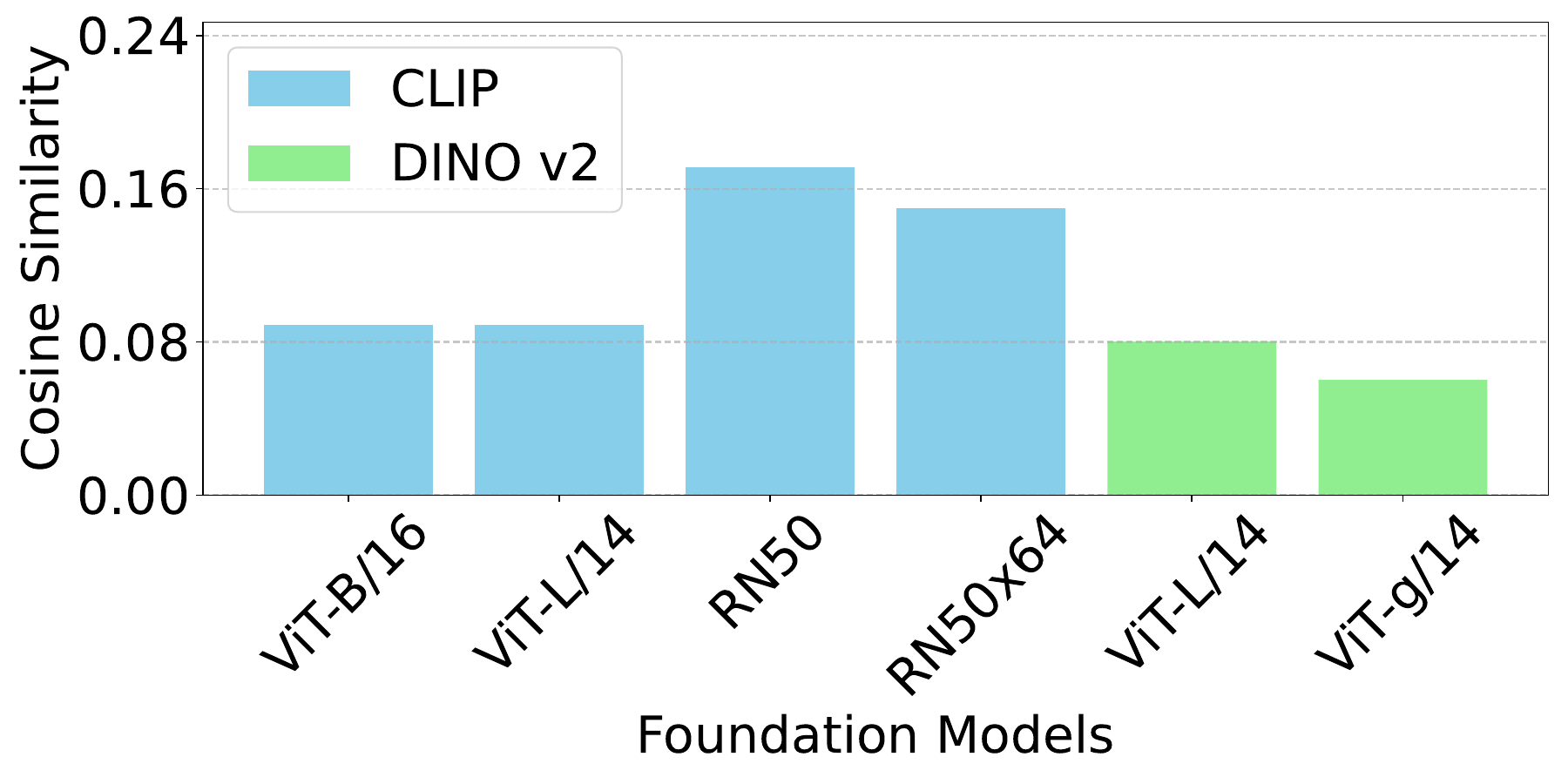}}
    \caption{Average cosine similarity of Food101 testing images for different foundation models and  perturbation functions.}
    \label{fig:cosine_vision_Food101}
\end{figure*}

\begin{figure*}[!h]
    \centering
    \subfloat[JPEG Compression]{\includegraphics[width=0.2\textwidth]{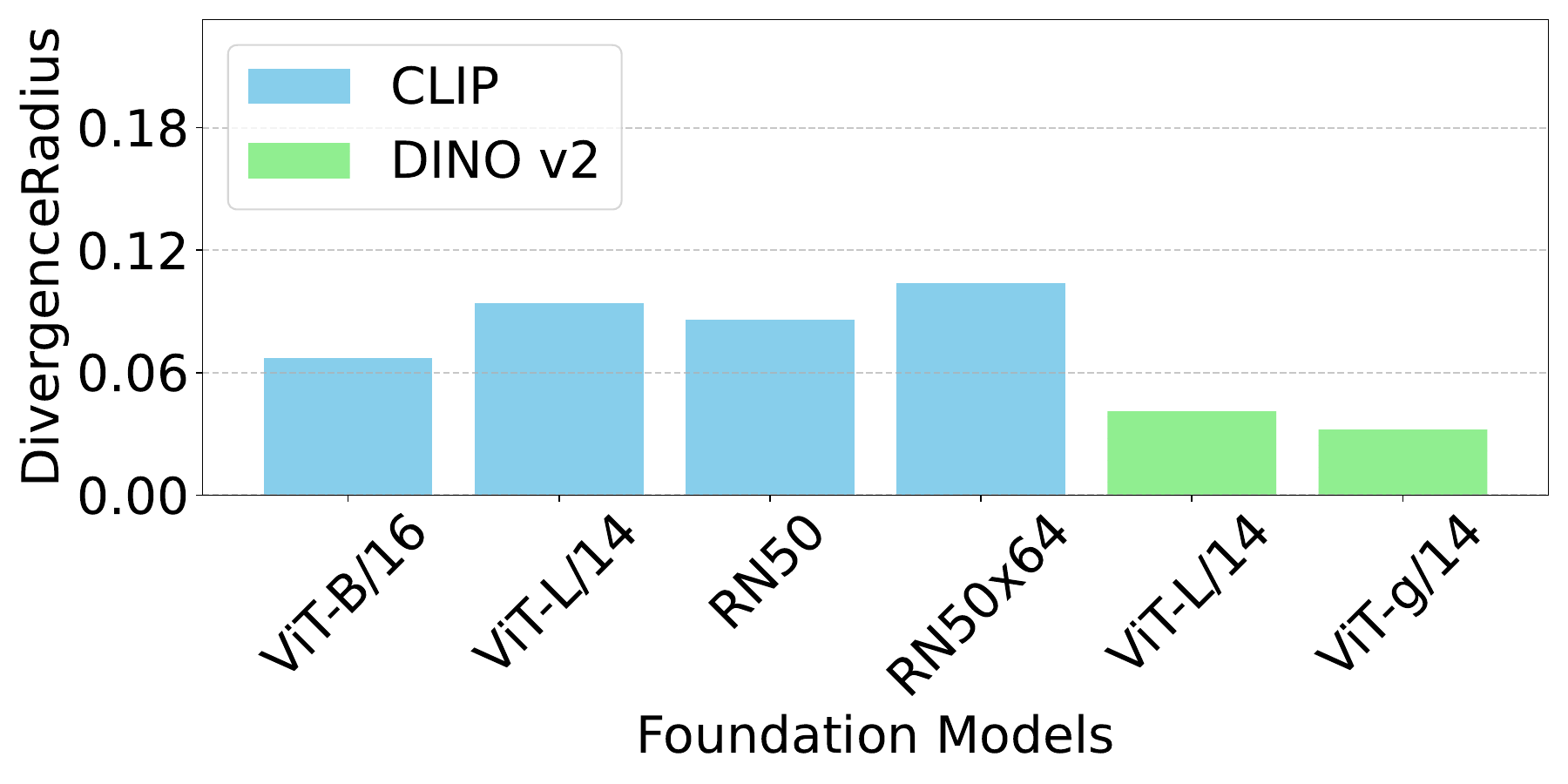}\label{fig:1}}
    \hfill
    \subfloat[Brightness adjustment]{\includegraphics[width=0.2\textwidth]{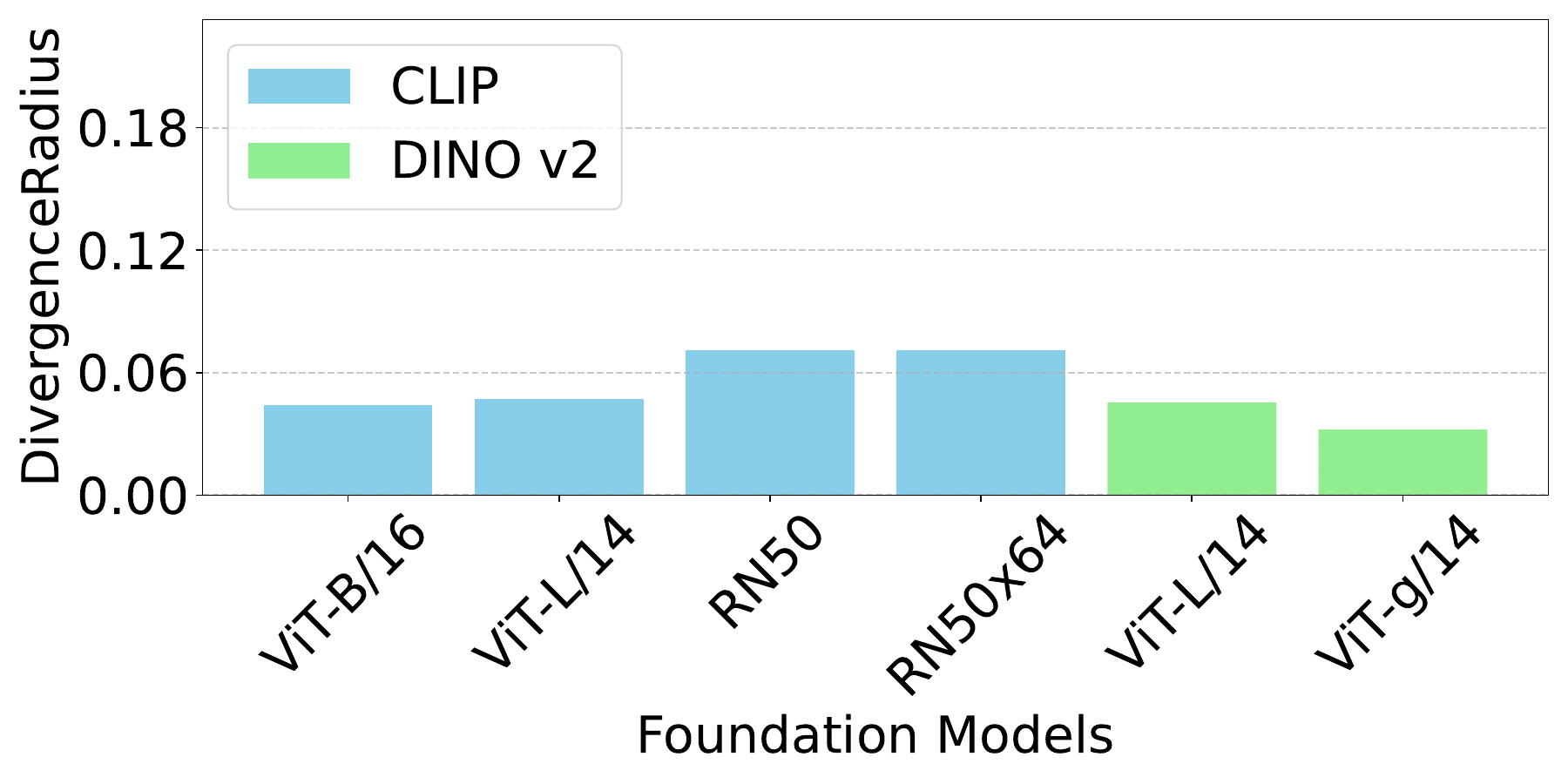}\label{fig:2}}
    \hfill
    \subfloat[Contrast adjustment]{\includegraphics[width=0.2\textwidth]{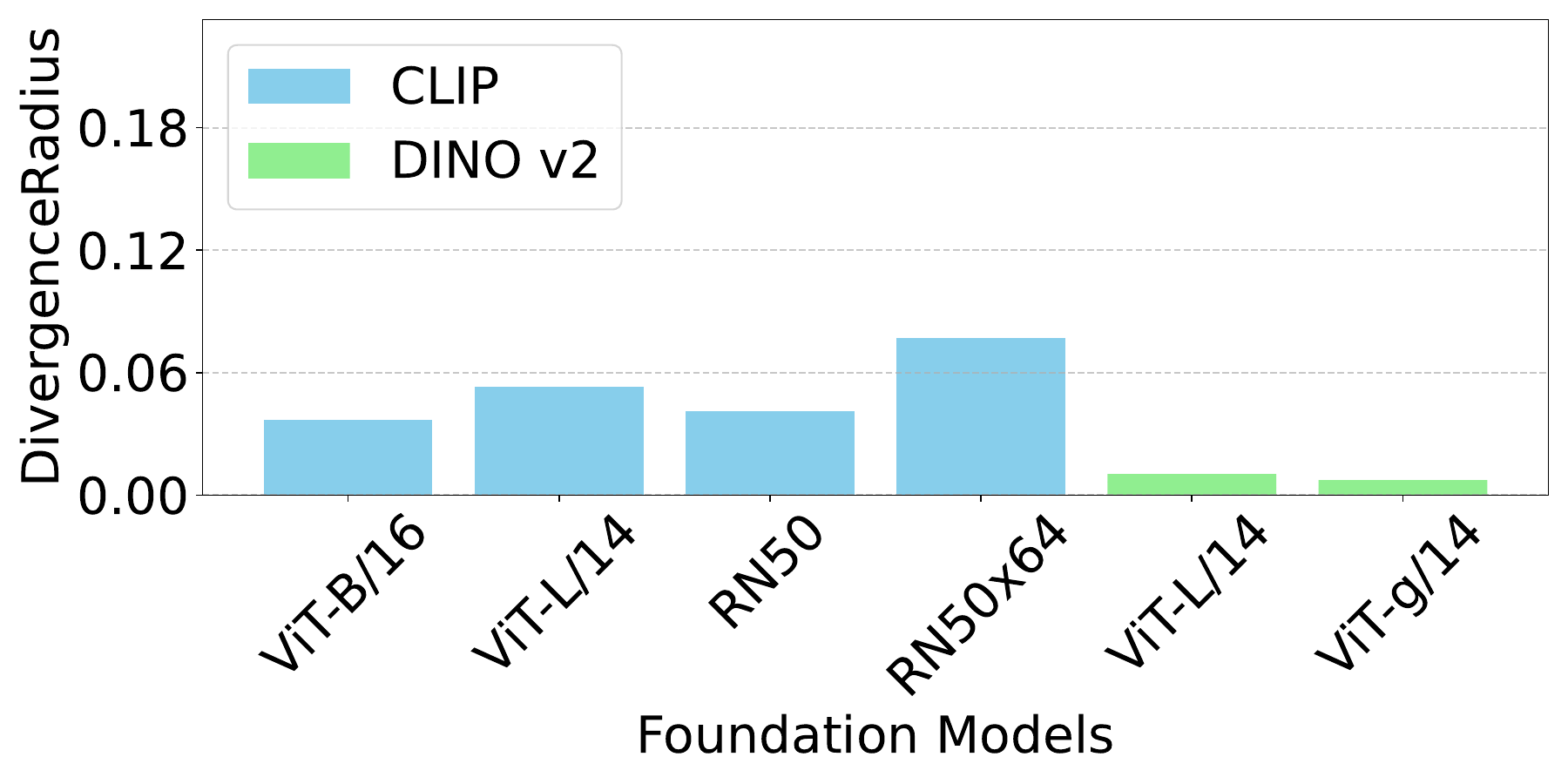}\label{fig:3}}
    \hfill
    \subfloat[Defocus blurring]{\includegraphics[width=0.2\textwidth]{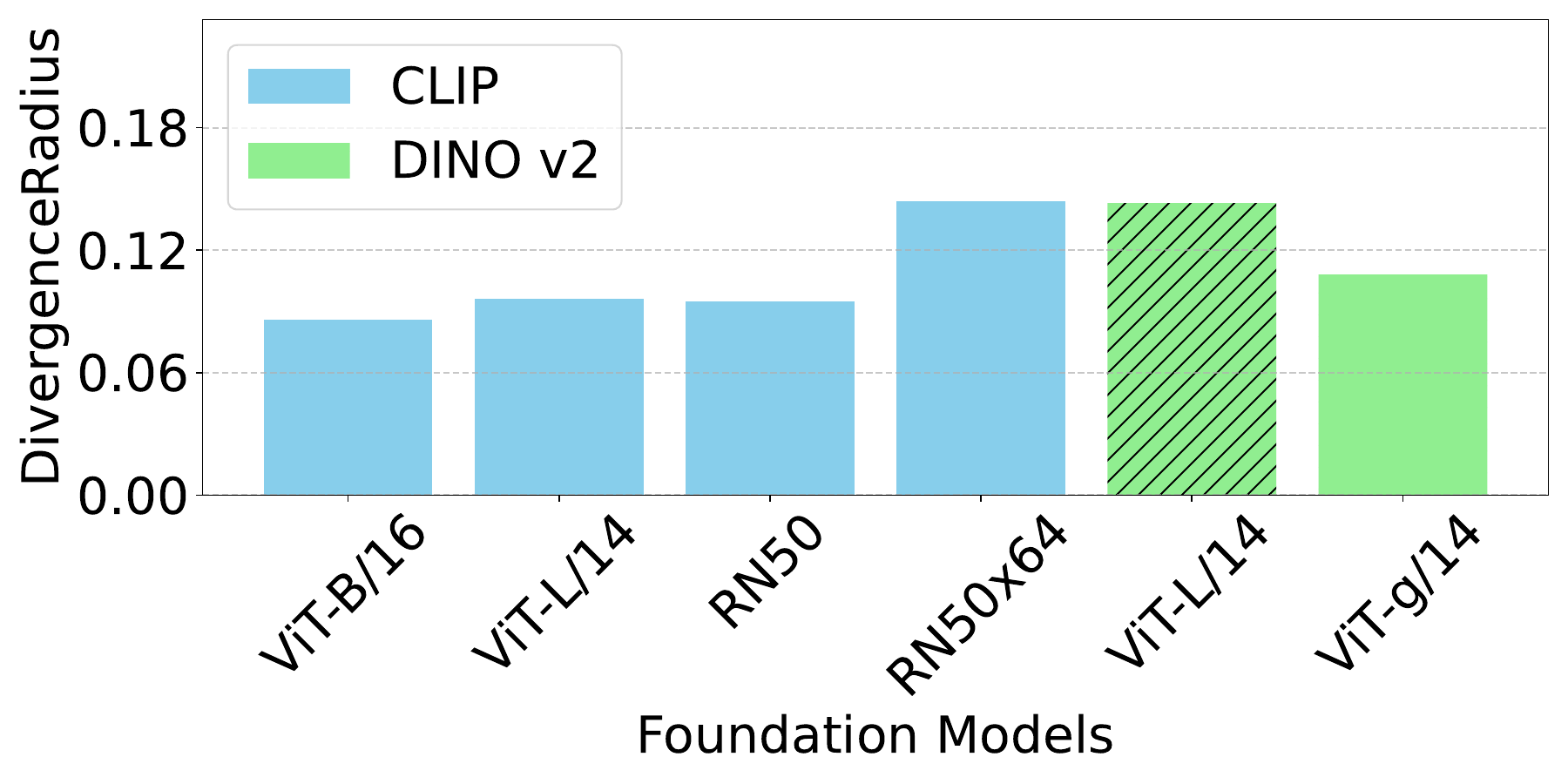}\label{fig:4}}
    \hfill
    \subfloat[Elastic blurring]{\includegraphics[width=0.2\textwidth]{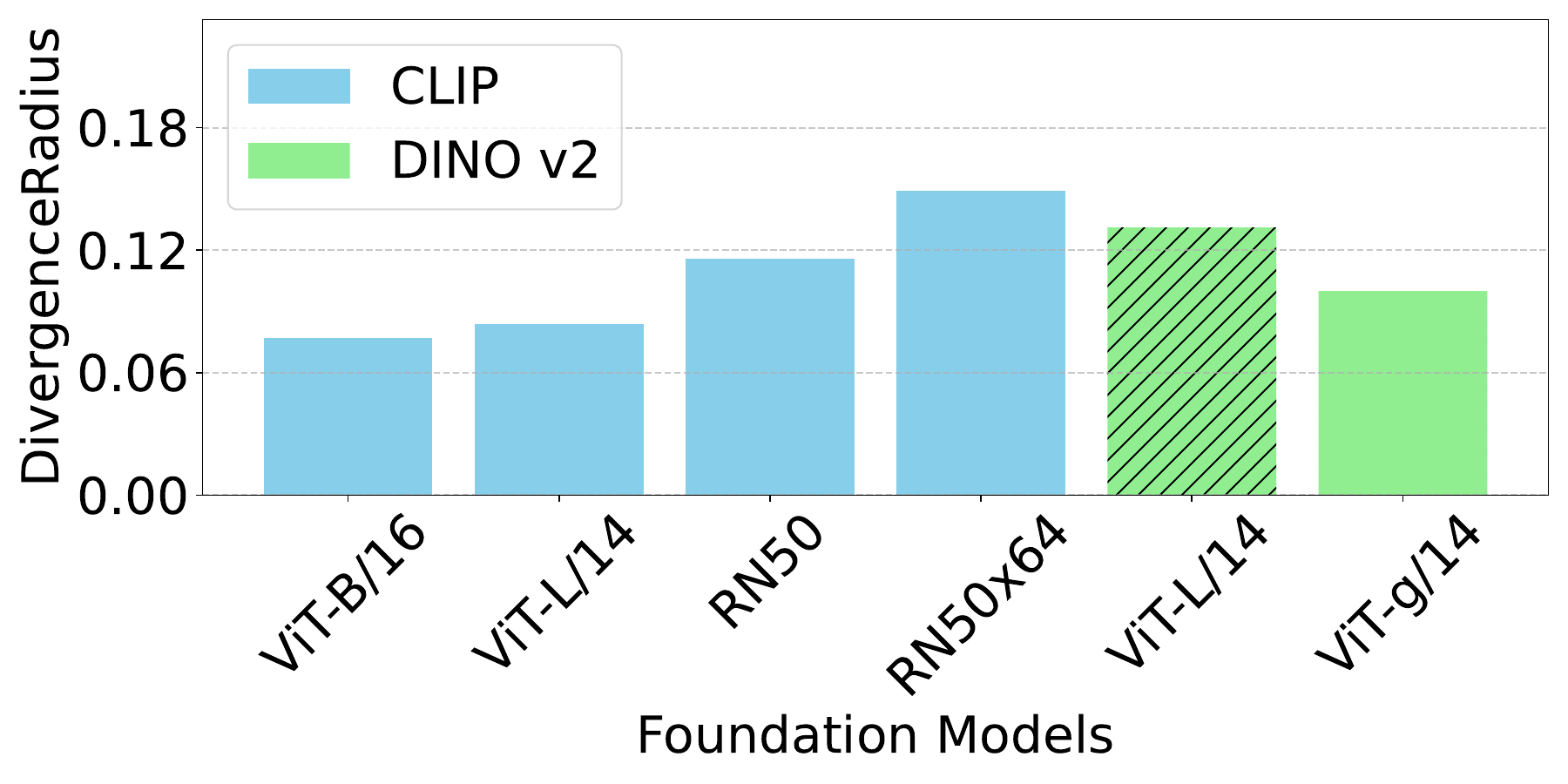}\label{fig:5}}
    \\
    \subfloat[Fog blurring]{\includegraphics[width=0.2\textwidth]{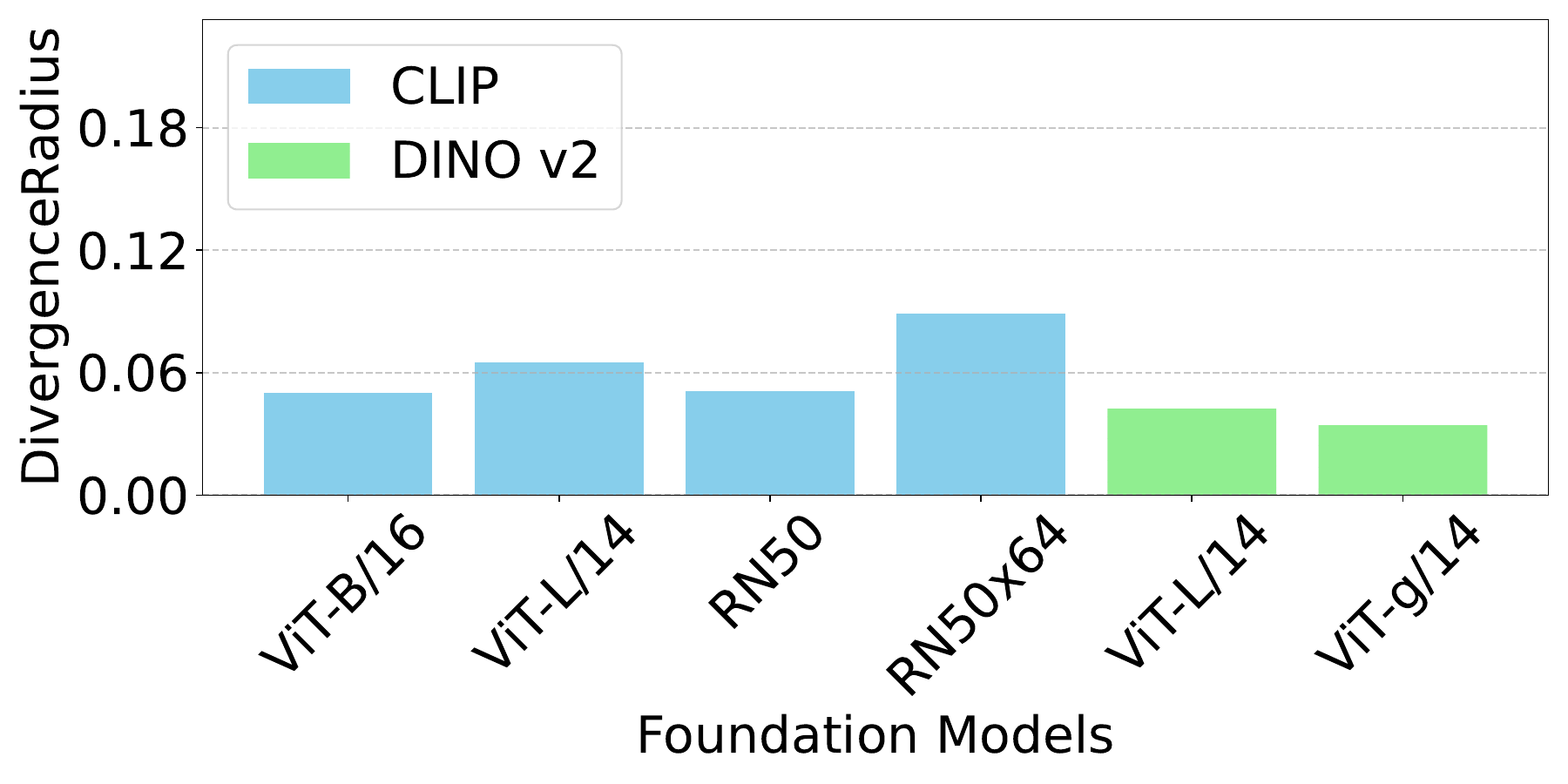}\label{fig:6}}
    \hfill
    \subfloat[Frost blurring]{\includegraphics[width=0.2\textwidth]{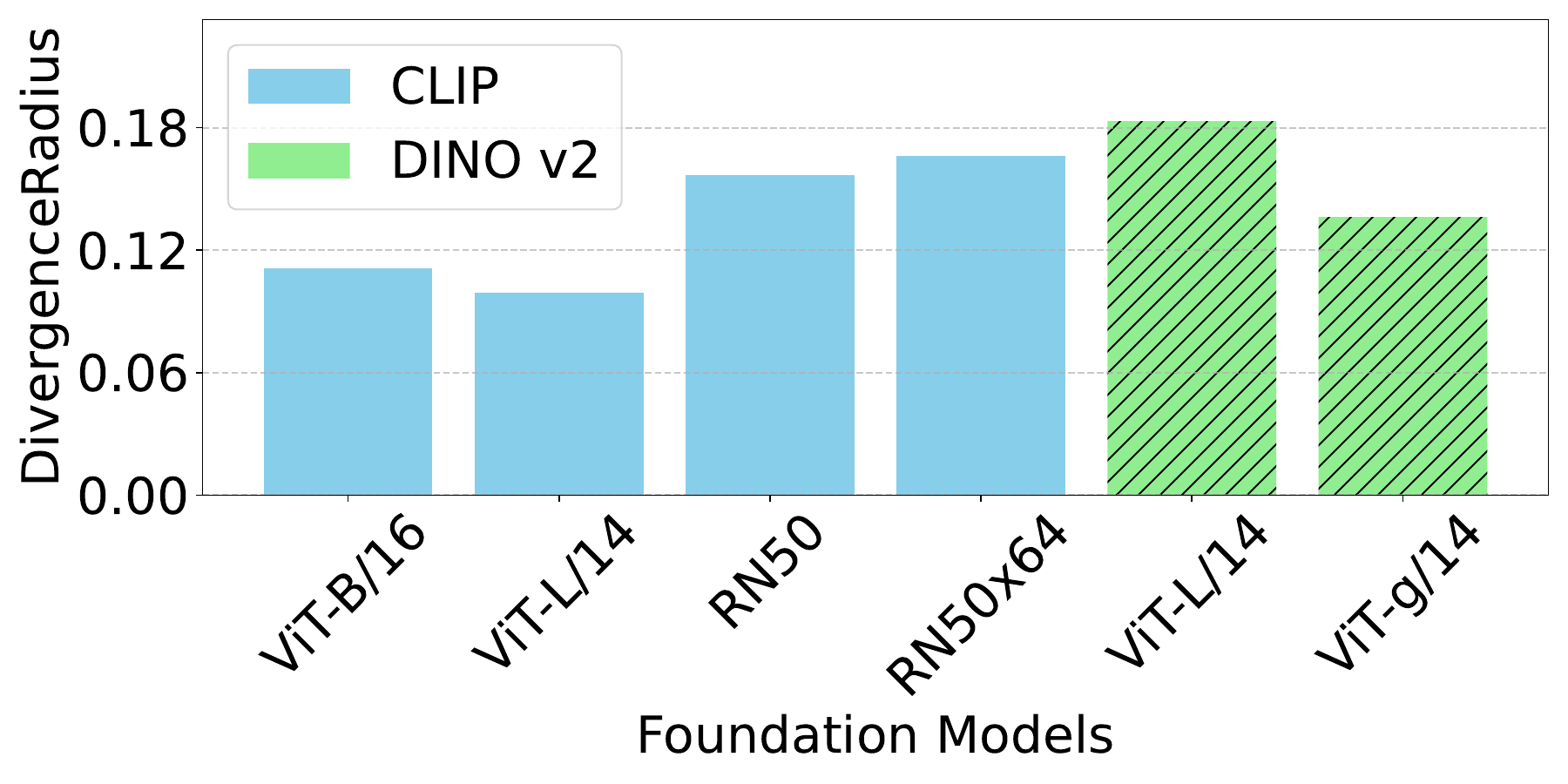}\label{fig:7}}
    \hfill
    \subfloat[Gaussian blurring]{\includegraphics[width=0.2\textwidth]{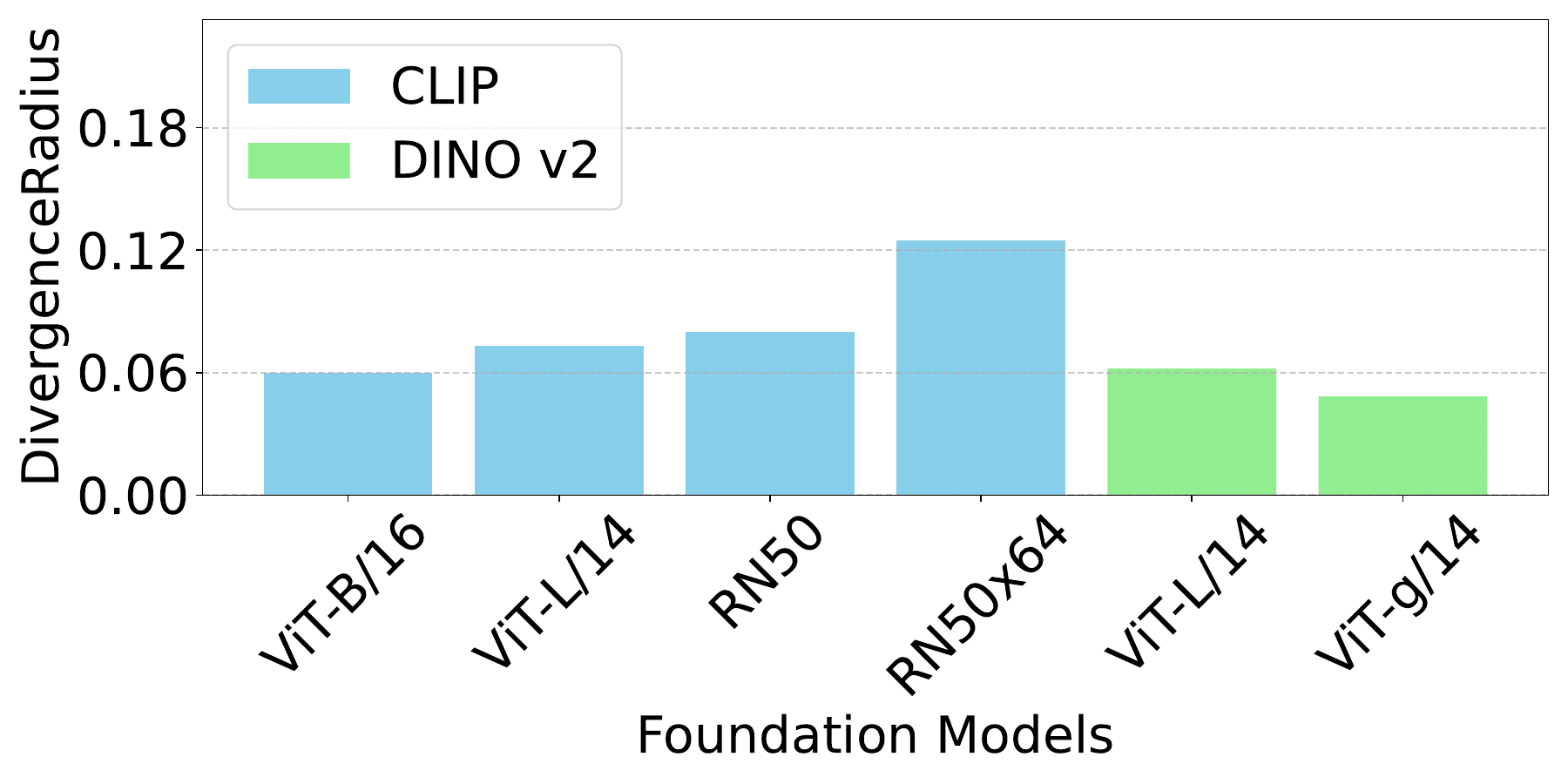}\label{fig:8}}
    \hfill
    \subfloat[Glass blurring]{\includegraphics[width=0.2\textwidth]{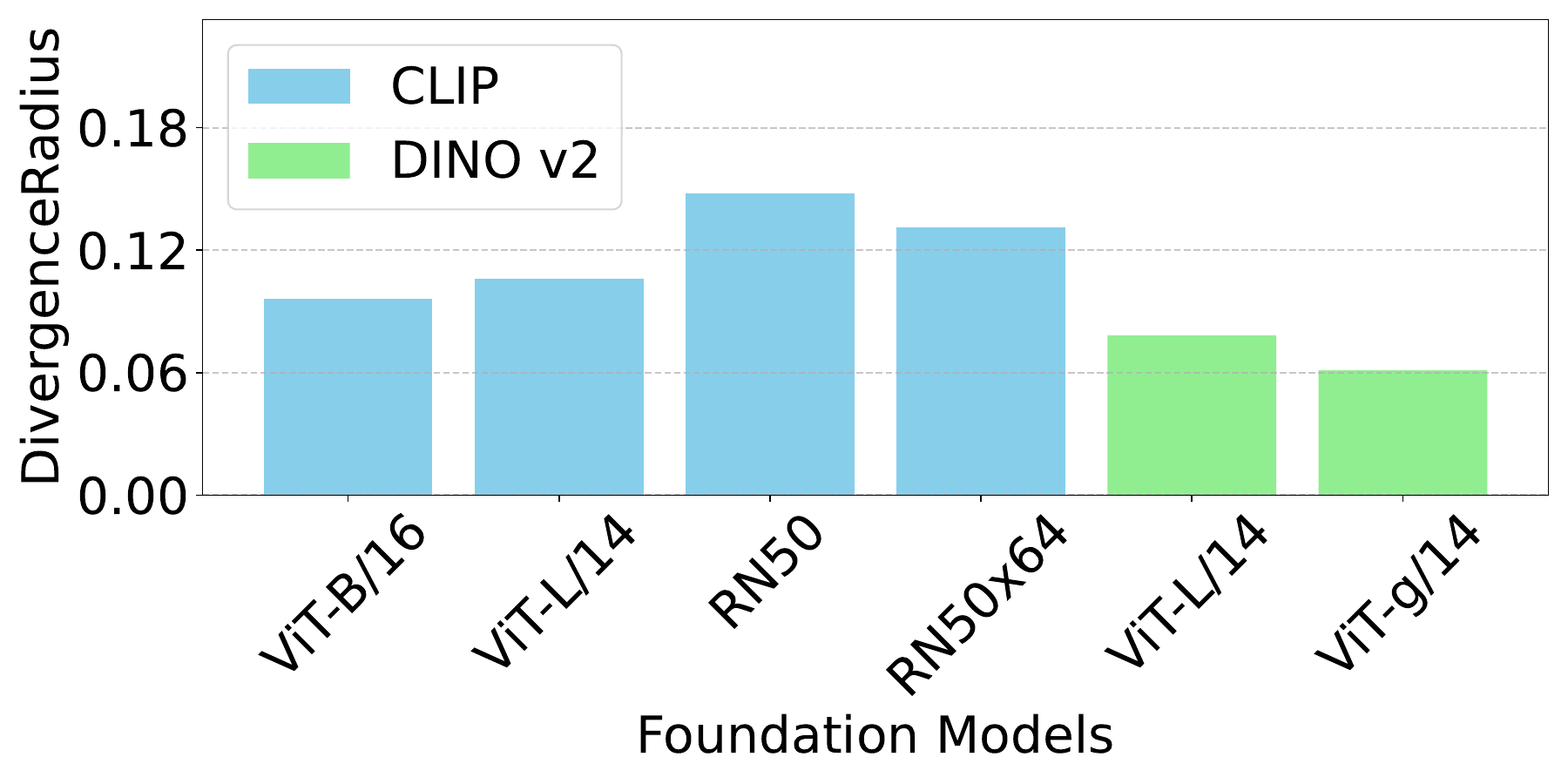}}
    \caption{Average \alg{} of NYU-Depth V2 testing images for different foundation models and  perturbation functions.}
    \label{fig:divergence_radius_vision_nyu}
\end{figure*}

\begin{figure*}[!h]
    \centering
    \subfloat[JPEG Compression]{\includegraphics[width=0.2\textwidth]{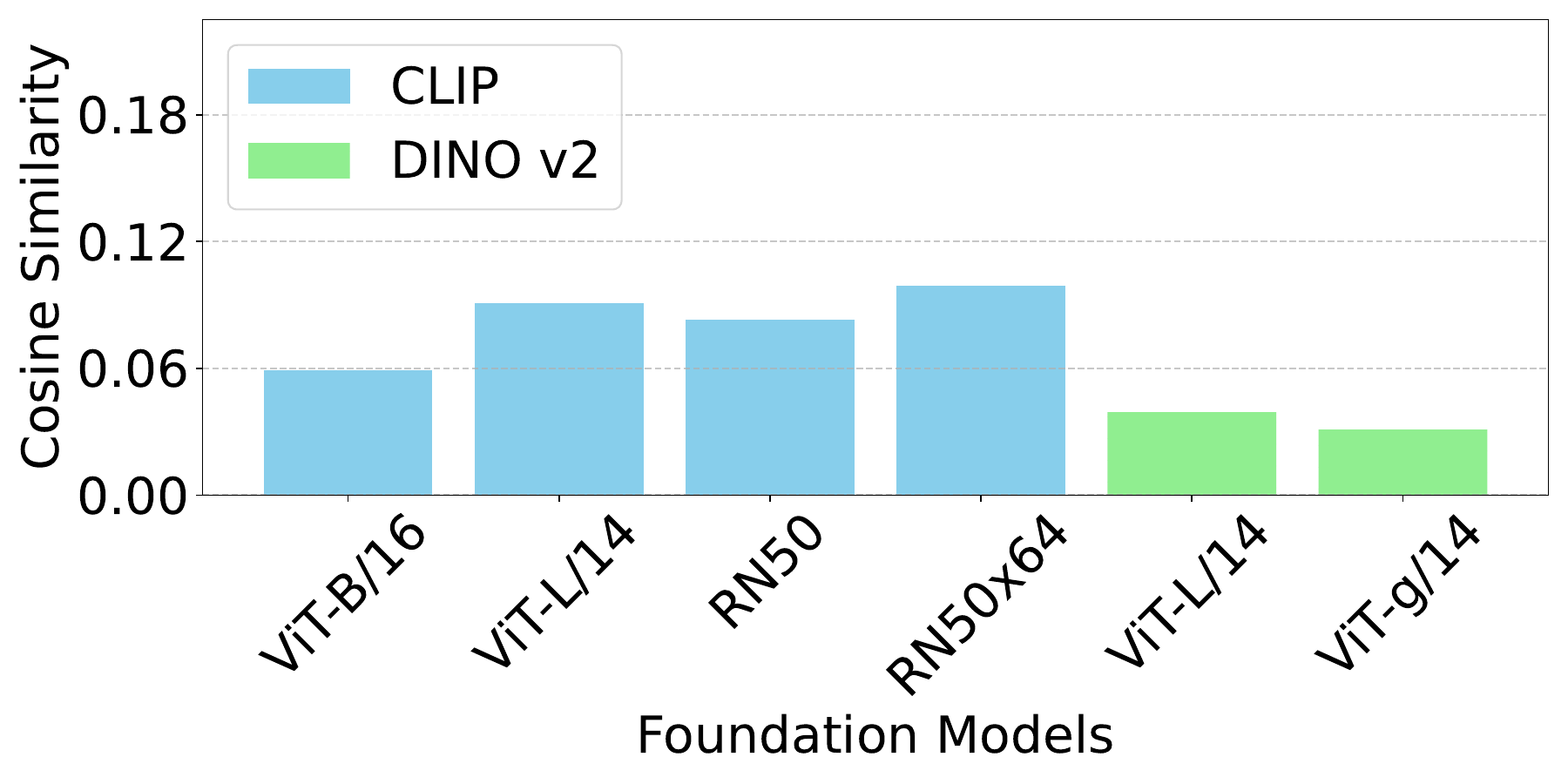}\label{fig:1}}
    \hfill
    \subfloat[Brightness adjustment]{\includegraphics[width=0.2\textwidth]{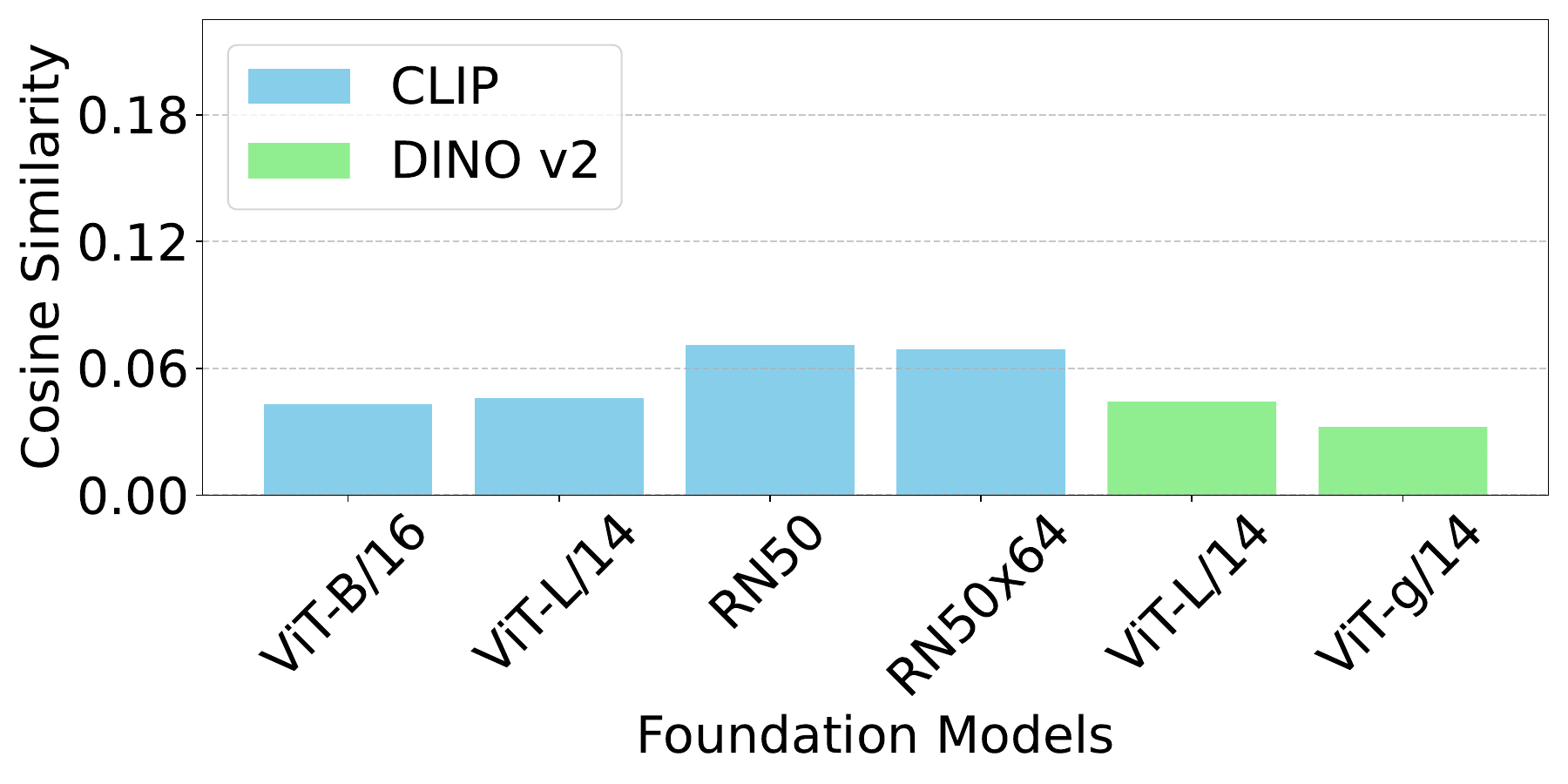}\label{fig:2}}
    \hfill
    \subfloat[Contrast adjustment]{\includegraphics[width=0.2\textwidth]{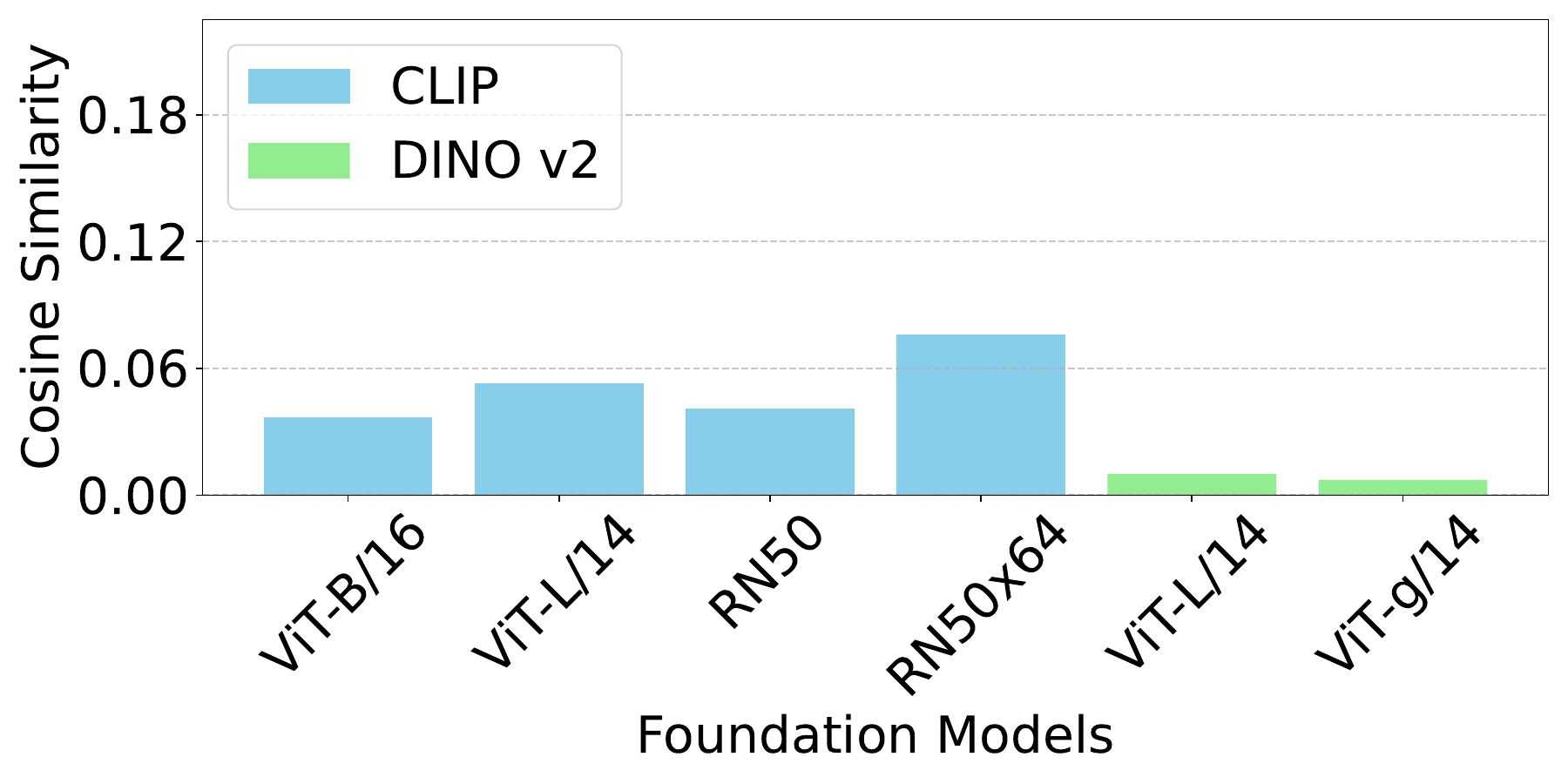}\label{fig:3}}
    \hfill
    \subfloat[Defocus blurring]{\includegraphics[width=0.2\textwidth]{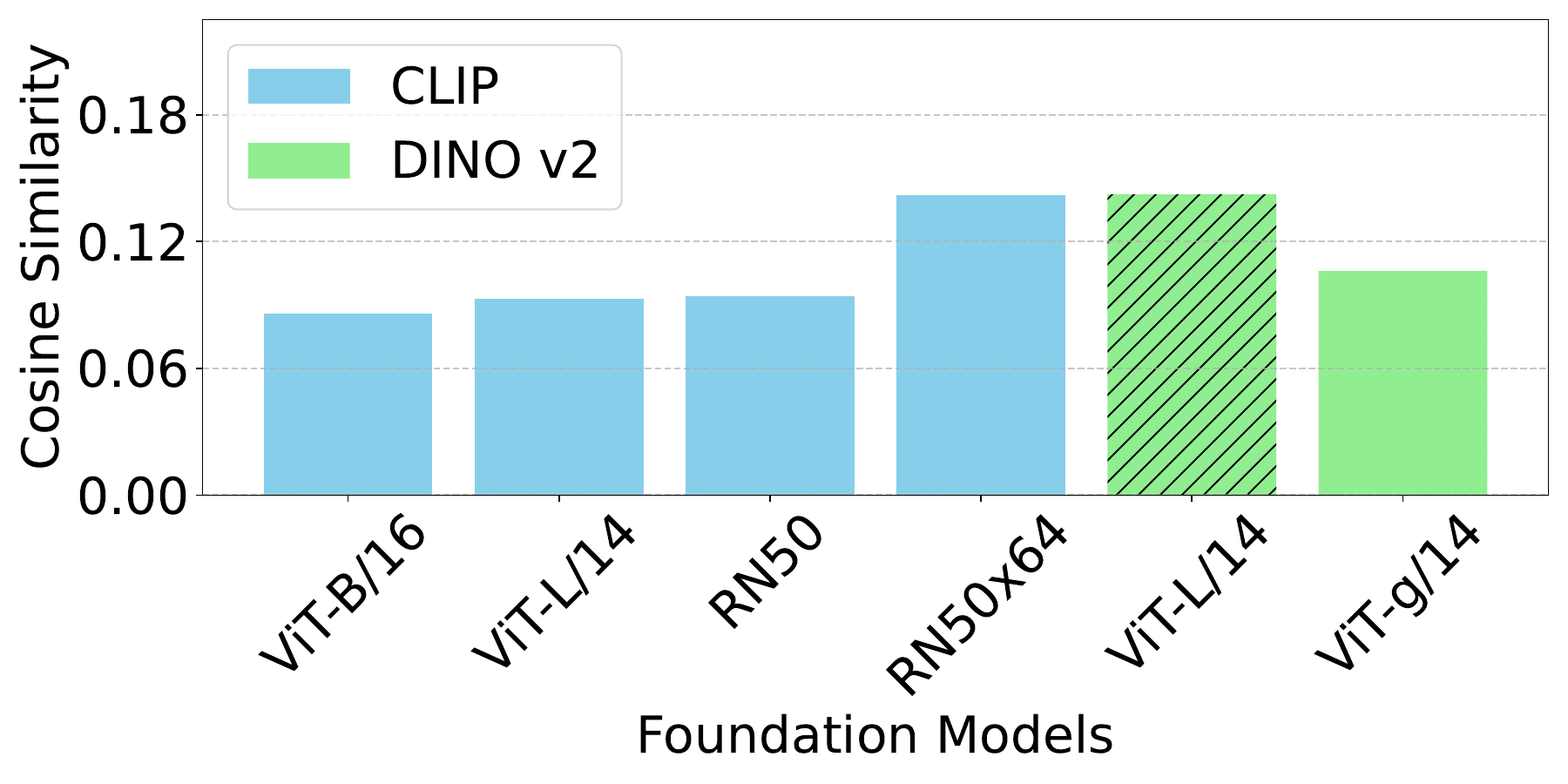}\label{fig:4}}
    \hfill
    \subfloat[Elastic blurring]{\includegraphics[width=0.2\textwidth]{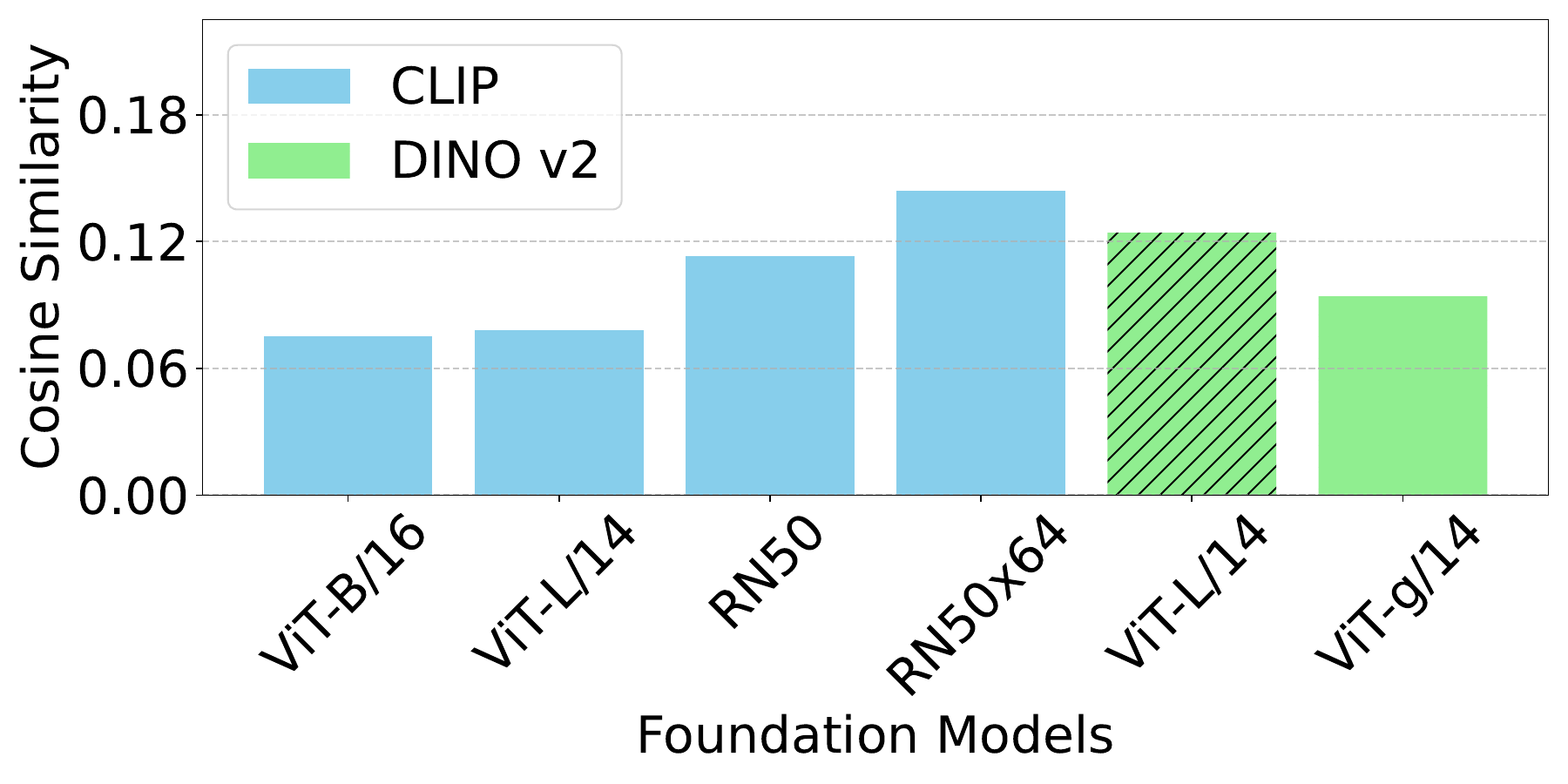}\label{fig:5}}
    \\
    \subfloat[Fog blurring]{\includegraphics[width=0.2\textwidth]{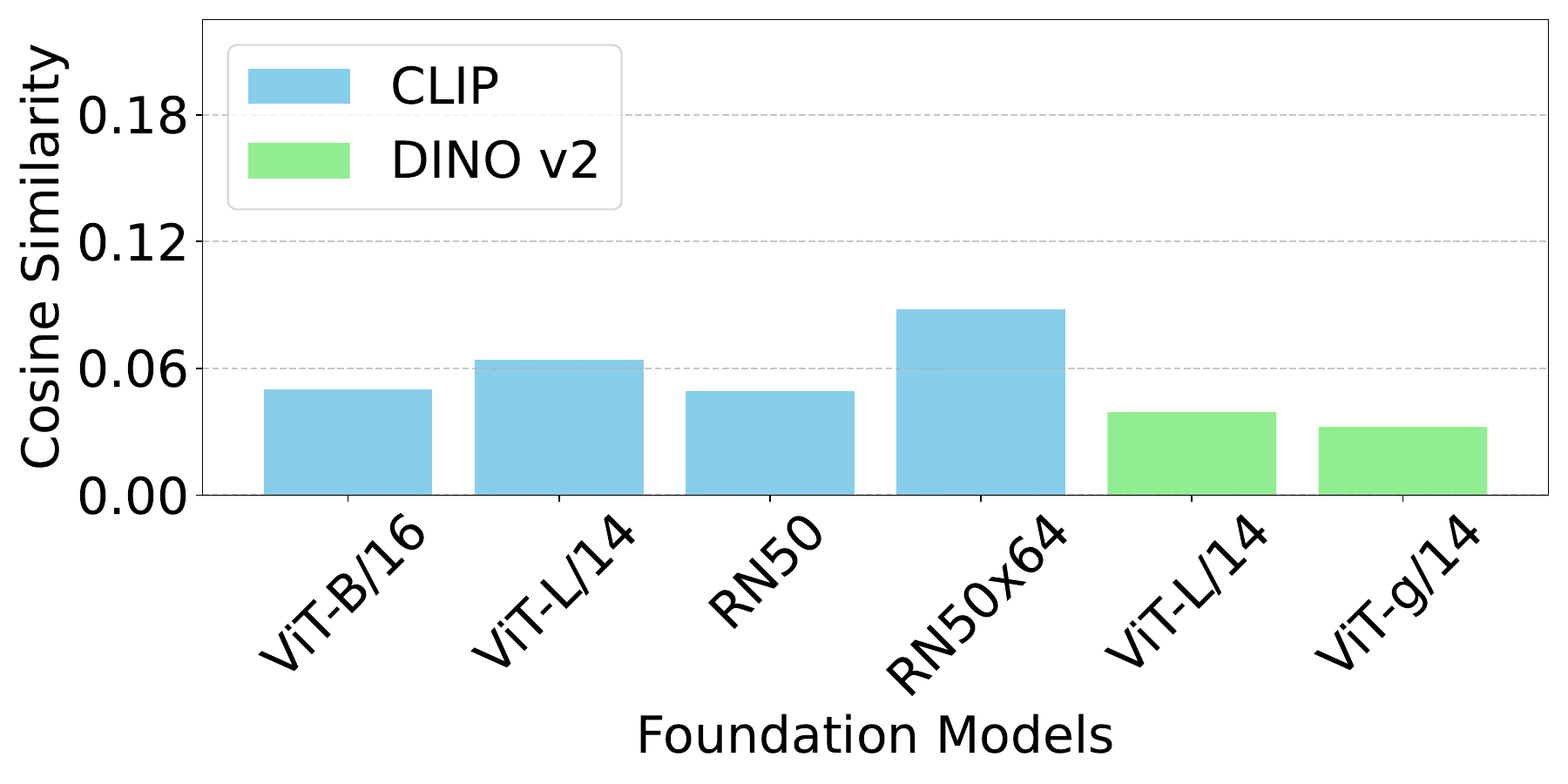}\label{fig:6}}
    \hfill
    \subfloat[Frost blurring]{\includegraphics[width=0.2\textwidth]{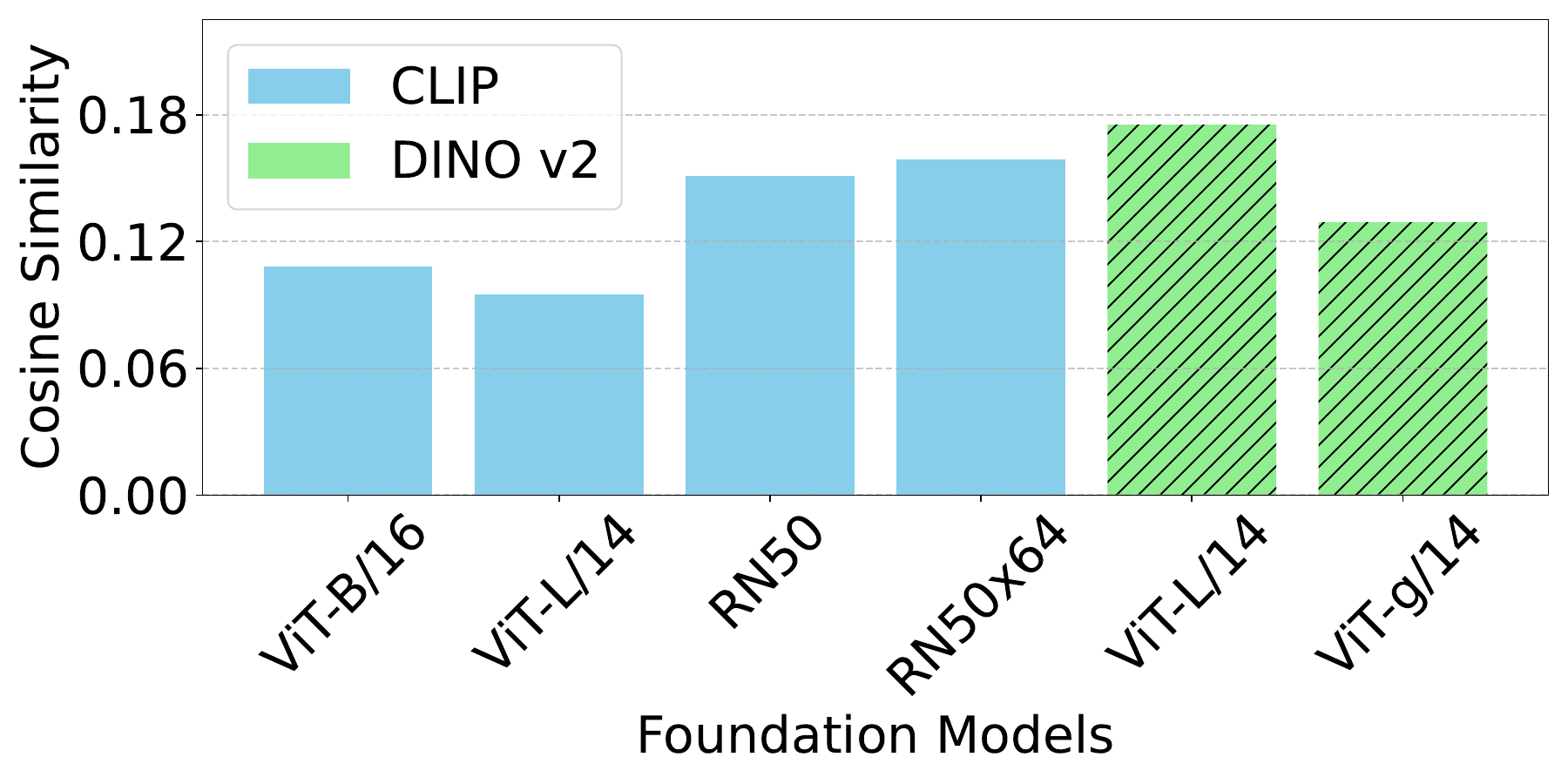}\label{fig:7}}
    \hfill
    \subfloat[Gaussian blurring]{\includegraphics[width=0.2\textwidth]{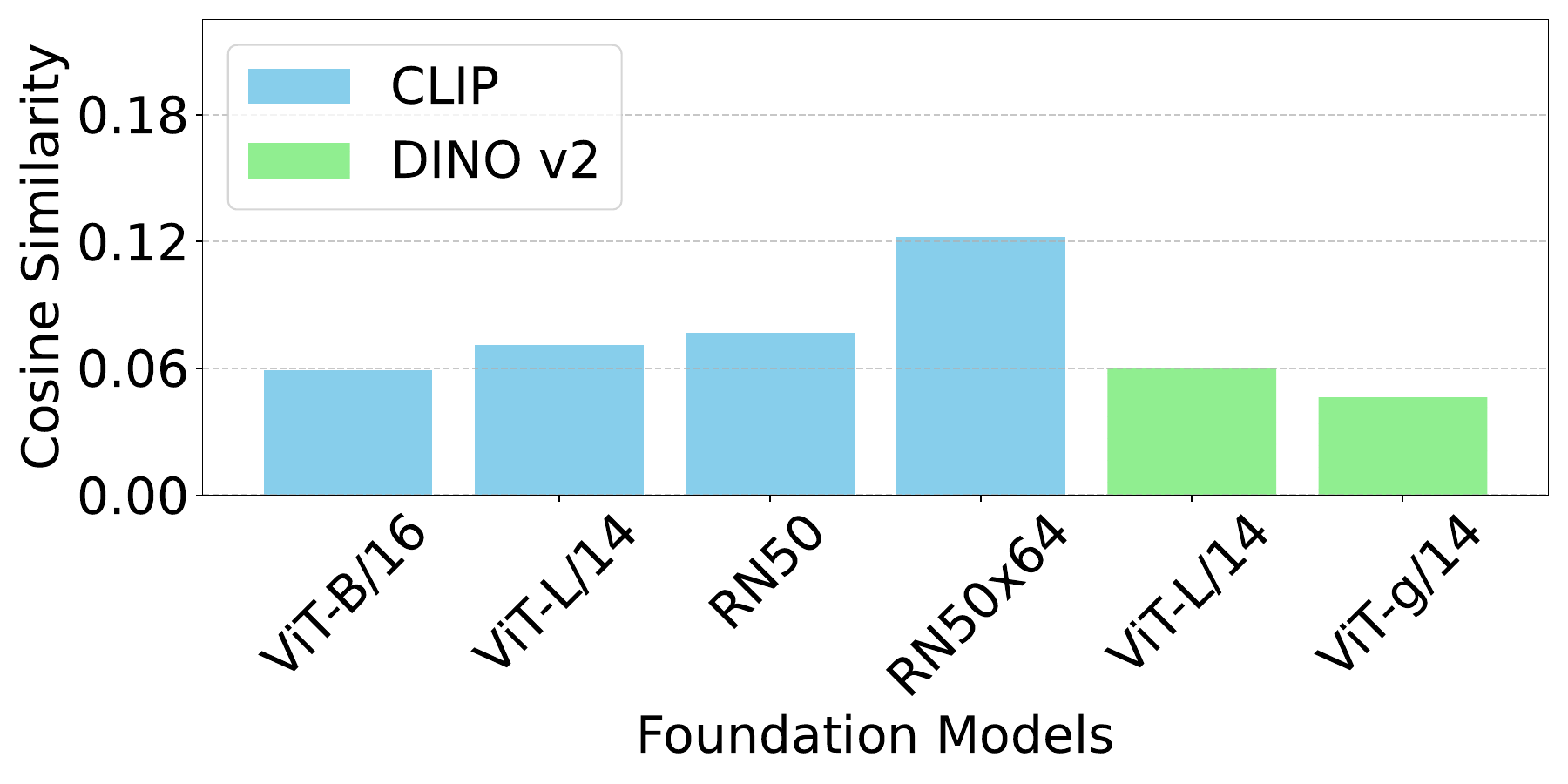}\label{fig:8}}
    \hfill
    \subfloat[Glass blurring]{\includegraphics[width=0.2\textwidth]{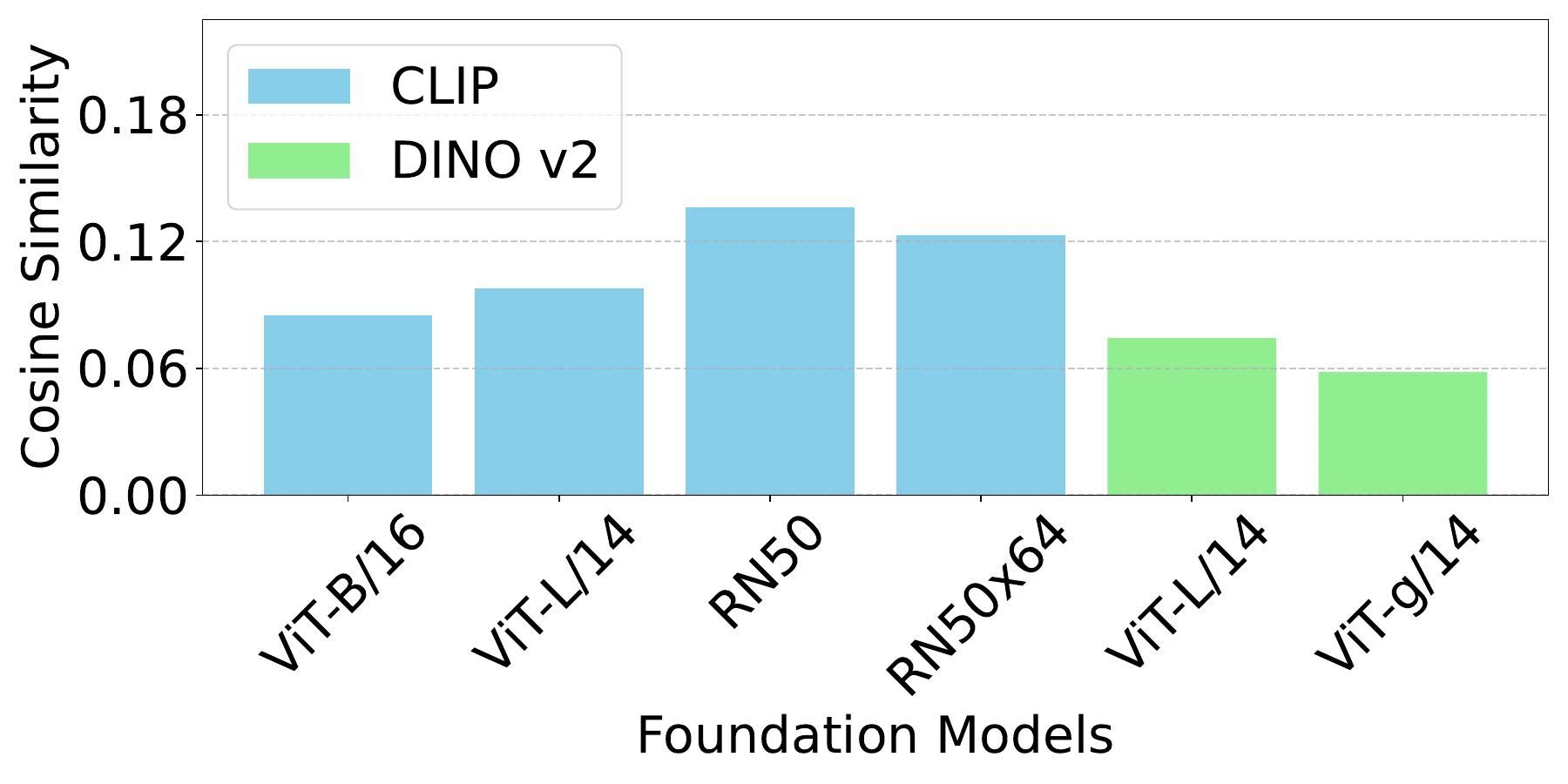}}
    \caption{Average cosine similarity of images in NYU-Depth V2 for different foundation models and  perturbation functions.}
    \label{fig:cosine_vision_nyu}
\end{figure*}

\begin{figure*}[t!]
    \centering
    \subfloat[ImageNet: Zero-shot classification]{\includegraphics[width=0.24\textwidth]{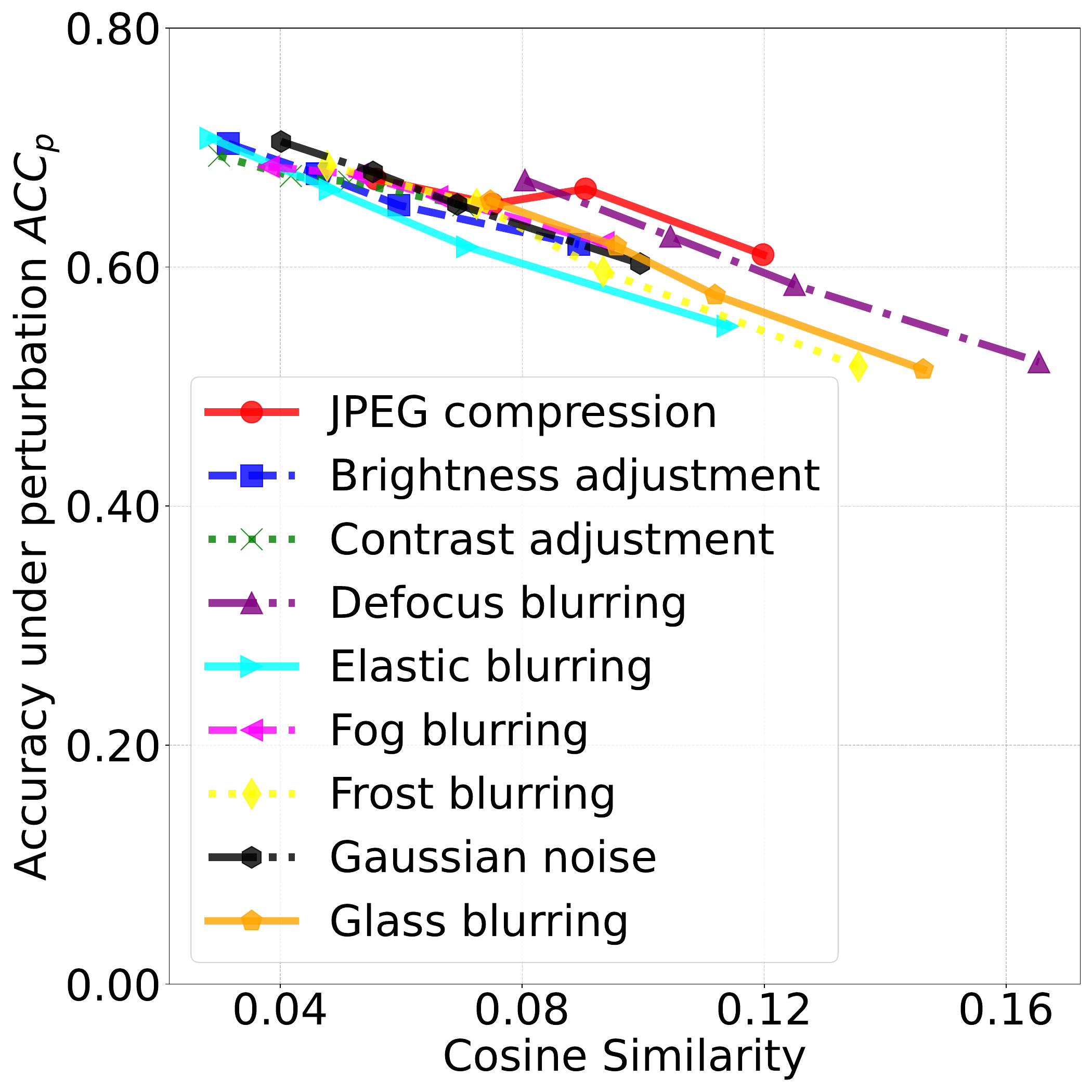}\label{fig:vision_downstream_zero_shot_imagenet_cosine}}
    \subfloat[ImageNet: Linear-probe classification]{\includegraphics[width=0.24\textwidth]{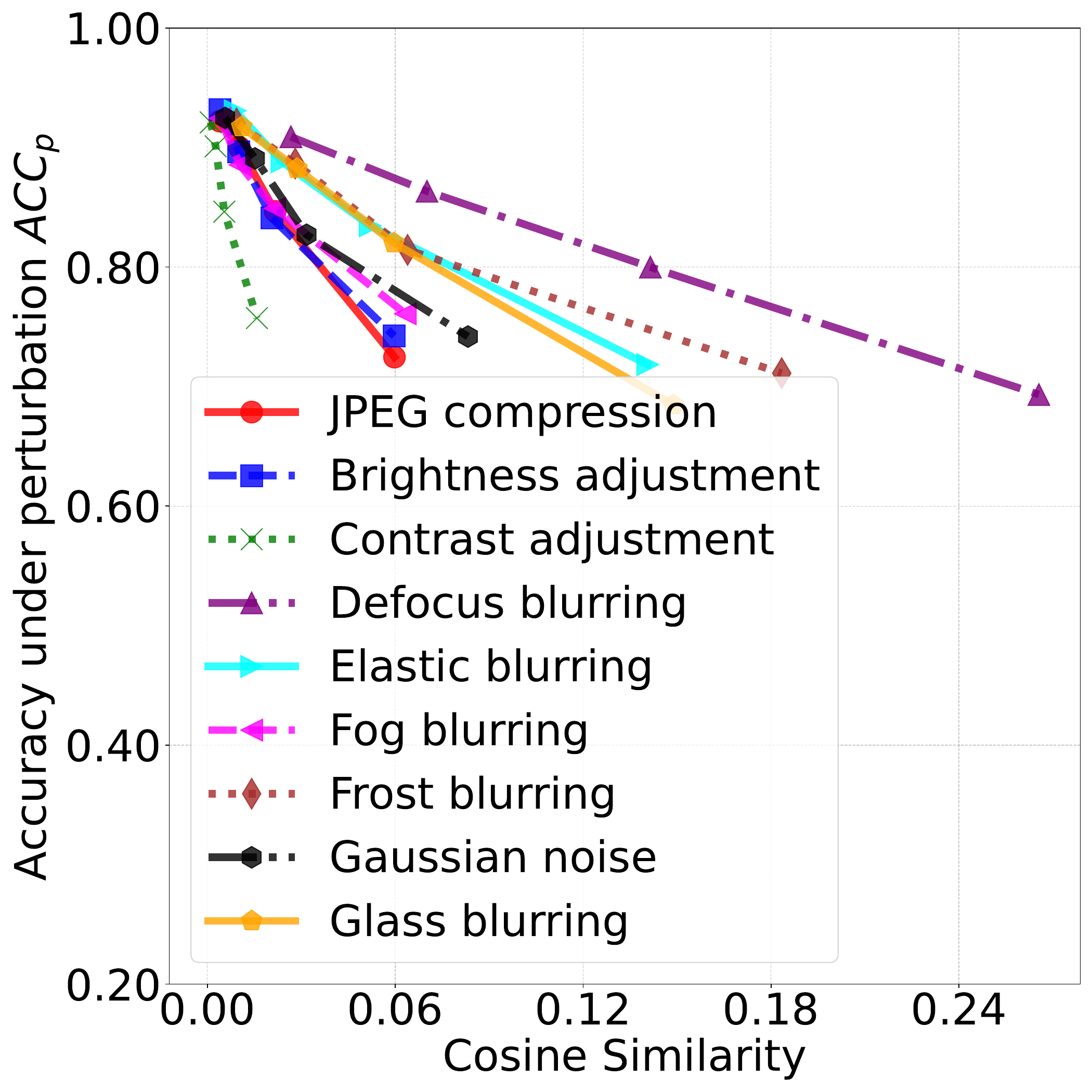}\label{fig:vision_downstream_linear_probe_imagenet_cosine}}
    \subfloat[Food101: Zero-shot classification]{\includegraphics[width=0.24\textwidth]{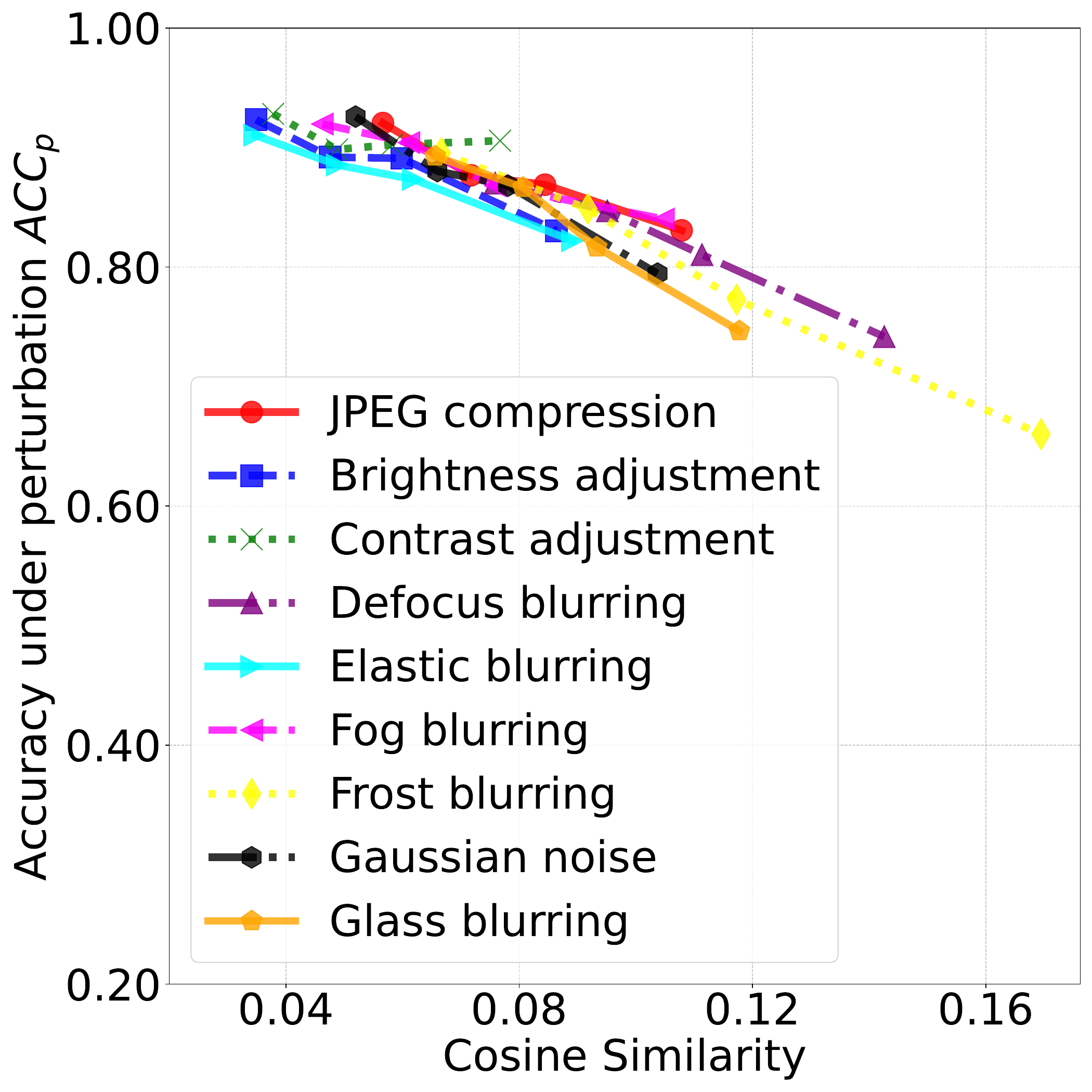}\label{fig:vision_downstream_zero_shot_imagenet_cosine}}
    \subfloat[Food101: Linear-probe classification]{\includegraphics[width=0.24\textwidth]{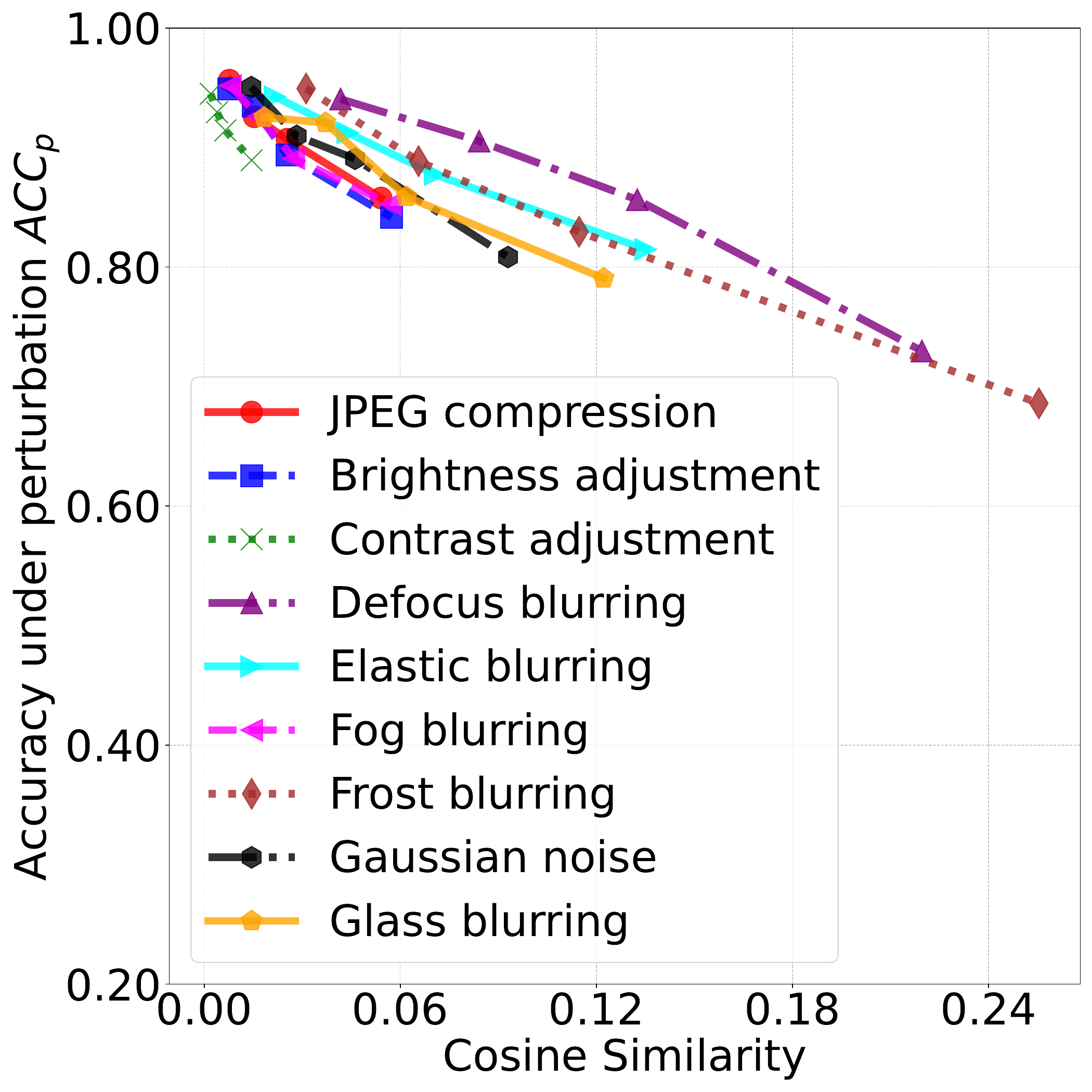}\label{fig:vision_downstream_linear_probe_imagenet_cosine}}
    \caption{Accuracy under perturbation $ACC_p$ vs. cosine similarity of ImageNet and Food101 testing images for zero-shot classification and linear-probe classification when different perturbation functions are used.   Zero-shot classification is based on the CLIP ViT-L/14 foundation model and linear-probe classification is based on the DINO v2 ViT-g/14 foundation model.} \label{fig:vision_downstream_imagenet_food101_cosine_similarity}
\end{figure*}

\begin{table}[!h]
\fontsize{8}{11}\selectfont 
\caption{Impact of $\lambda$ on average~\alg{} and $ACC$ of zero-shot classification for two datasets before/after enhancement, where the foundation model is CLIP
ViT-L/14.}
\label{tab:enhancement_impact_lambda}
\centering
\begin{tabular}{|c|c|c|}
\hline
$\lambda$ & \alg{} & $ACC$ (\%) \\\hline
0.0 & 0.01 & 0.16 \\ \hline
0.5 & 0.05 & 66.61 \\ \hline
1.0 & 0.06 & 67.99 \\ \hline
5.0 & 0.07 & 68.31 \\ \hline
\end{tabular}
\end{table}

%% file: main.bib
@String(CVPR= {IEEE Conf. Comput. Vis. Pattern Recog.})

@String(ICCV= {Int. Conf. Comput. Vis.})

@String(ECCV= {Eur. Conf. Comput. Vis.})

@String(ICLR = {Int. Conf. Learn. Represent.})

@String(AAAI = {AAAI})

@String(CVPR  = {CVPR})

@String(ICCV  = {ICCV})

@String(ECCV  = {ECCV})

@String(ICLR  = {ICLR})

@article{oquab2023dinov2,
  title={Dinov2: Learning robust visual features without supervision},
  author={Oquab, Maxime and Darcet, Timoth{\'e}e and Moutakanni, Th{\'e}o and Vo, Huy and Szafraniec, Marc and Khalidov, Vasil and Fernandez, Pierre and Haziza, Daniel and Massa, Francisco and El-Nouby, Alaaeldin and others},
  journal={arXiv},
  year={2023}
}

@inproceedings{radford2021learning,
  title={Learning transferable visual models from natural language supervision},
  author={Radford, Alec and Kim, Jong Wook and Hallacy, Chris and Ramesh, Aditya and Goh, Gabriel and Agarwal, Sandhini and Sastry, Girish and Askell, Amanda and Mishkin, Pamela and Clark, Jack and others},
  booktitle={ICML},
  year={2021},
}

@inproceedings{szegedy2013intriguing,
  title={Intriguing properties of neural networks},
  author={Szegedy, Christian and Zaremba, Wojciech and Sutskever, Ilya and Bruna, Joan and Erhan, Dumitru and Goodfellow, Ian and Fergus, Rob},
  booktitle={ICLR},
  year={2014}
}

@inproceedings{carlini2017towards,
  title={Towards evaluating the robustness of neural networks},
  author={Carlini, Nicholas and Wagner, David},
  booktitle={S\&P},
  year={2017},
}

@inproceedings{deng2009imagenet,
  title={Imagenet: A large-scale hierarchical image database},
  author={Deng, Jia and Dong, Wei and Socher, Richard and Li, Li-Jia and Li, Kai and Fei-Fei, Li},
  booktitle={CVPR},
  year={2009},
}

@inproceedings{bossard14,
  title = {Food-101 -- Mining Discriminative Components with Random Forests},
  author = {Bossard, Lukas and Guillaumin, Matthieu and Van Gool, Luc},
  booktitle = {ECCV},
  year = {2014}
}

@inproceedings{welzl2005smallest,
  title={Smallest enclosing disks (balls and ellipsoids)},
  author={Welzl, Emo},
  booktitle={New Results and New Trends in Computer Science},
  year={2005},
  organization={Springer}
}

@inproceedings{chen2020simple,
  title={A simple framework for contrastive learning of visual representations},
  author={Chen, Ting and Kornblith, Simon and Norouzi, Mohammad and Hinton, Geoffrey},
  booktitle={ICML},
  year={2020}
}

@inproceedings{hendrycks2019benchmarking,
  title={Benchmarking neural network robustness to common corruptions and perturbations},
  author={Hendrycks, Dan and Dietterich, Thomas},
  booktitle={ICLR},
  year={2019}
}

@inproceedings{li2022blip,
  title={Blip: Bootstrapping language-image pre-training for unified vision-language understanding and generation},
  author={Li, Junnan and Li, Dongxu and Xiong, Caiming and Hoi, Steven},
  booktitle={ICML},
  year={2022},
}

@inproceedings{he2020momentum,
  title={Momentum contrast for unsupervised visual representation learning},
  author={He, Kaiming and Fan, Haoqi and Wu, Yuxin and Xie, Saining and Girshick, Ross},
  booktitle={CVPR},
  year={2020}
}

@inproceedings{jia2022badencoder,
  title={Badencoder: Backdoor attacks to pre-trained encoders in self-supervised learning},
  author={Jia, Jinyuan and Liu, Yupei and Gong, Neil Zhenqiang},
  booktitle={S\&P},
  year={2022},
}

@article{zhu2023promptbench,
  title={PromptBench: Towards Evaluating the Robustness of Large Language Models on Adversarial Prompts},
  author={Zhu, Kaijie and Wang, Jindong and Zhou, Jiaheng and Wang, Zichen and Chen, Hao and Wang, Yidong and Yang, Linyi and Ye, Wei and Gong, Neil Zhenqiang and Zhang, Yue and others},
  journal={arXiv},
  year={2023}
}

@inproceedings{qu2023reaas,
  title={REaaS: Enabling Adversarially Robust Downstream Classifiers via Robust Encoder as a Service},
  author={Qu, Wenjie and Jia, Jinyuan and Gong, Neil Zhenqiang},
  booktitle={NDSS},
  year={2023}
}

@inproceedings{liu2022poisonedencoder,
  title={{PoisonedEncoder}: Poisoning the Unlabeled Pre-training Data in Contrastive Learning},
  author={Liu, Hongbin and Jia, Jinyuan and Gong, Neil Zhenqiang},
  booktitle={USENIX Security Symposium},
  year={2022}
}

@inproceedings{li2023embarrassingly,
  title={An Embarrassingly Simple Backdoor Attack on Self-supervised Learning},
  author={Li, Changjiang and Pang, Ren and Xi, Zhaohan and Du, Tianyu and Ji, Shouling and Yao, Yuan and Wang, Ting},
  booktitle={ICCV},
  year={2023}
}

@inproceedings{fan2021does,
  title={When does contrastive learning preserve adversarial robustness from pretraining to finetuning?},
  author={Fan, Lijie and Liu, Sijia and Chen, Pin-Yu and Zhang, Gaoyuan and Gan, Chuang},
  journal={NeurIPS},
  year={2021}
}

@article{jiang2020robust,
  title={Robust pre-training by adversarial contrastive learning},
  author={Jiang, Ziyu and Chen, Tianlong and Chen, Ting and Wang, Zhangyang},
  journal={NeurIPS},
  year={2020}
}

@inproceedings{saha2022backdoor,
  title={Backdoor attacks on self-supervised learning},
  author={Saha, Aniruddha and Tejankar, Ajinkya and Koohpayegani, Soroush Abbasi and Pirsiavash, Hamed},
  booktitle={CVPR},
  year={2022}
}

@inproceedings{Silberman:ECCV12,
  author    = {Nathan Silberman, Derek Hoiem, Pushmeet Kohli and Rob Fergus},
  title     = {Indoor Segmentation and Support Inference from RGBD Images},
  booktitle = {ECCV},
  year      = {2012}
}

@inproceedings{bai2021transformers,
  title={Are transformers more robust than cnns?},
  author={Bai, Yutong and Mei, Jieru and Yuille, Alan L and Xie, Cihang},
  booktitle={NeurIPS},
  year={2021}
}

@inproceedings{bhojanapalli2021understanding,
  title={Understanding robustness of transformers for image classification},
  author={Bhojanapalli, Srinadh and Chakrabarti, Ayan and Glasner, Daniel and Li, Daliang and Unterthiner, Thomas and Veit, Andreas},
  booktitle={ICCV},
  year={2021}
}

@inproceedings{wang2022can,
  title={Can cnns be more robust than transformers?},
  author={Wang, Zeyu and Bai, Yutong and Zhou, Yuyin and Xie, Cihang},
  booktitle={ICLR},
  year={2023}
}

@inproceedings{paul2022vision,
  title={Vision transformers are robust learners},
  author={Paul, Sayak and Chen, Pin-Yu},
  booktitle={AAAI},
  year={2022}
}

@inproceedings{hendrycks2021many,
  title={The many faces of robustness: A critical analysis of out-of-distribution generalization},
  author={Hendrycks, Dan and Basart, Steven and Mu, Norman and Kadavath, Saurav and Wang, Frank and Dorundo, Evan and Desai, Rahul and Zhu, Tyler and Parajuli, Samyak and Guo, Mike and others},
  booktitle={ICCV},
  year={2021}
}

@inproceedings{jiang2023evading,
  title={Evading watermark based detection of ai-generated content},
  author={Jiang, Zhengyuan and Zhang, Jinghuai and Gong, Neil Zhenqiang},
  booktitle={CCS},
  year={2023}
}
